\newtheorem{definition}{Definition}
\newtheorem{assumption}{Assumption}
\newtheorem{lemma}{Lemma}
\newtheorem{proposition}{Proposition}
\newtheorem{theorem}{Theorem}
\newtheorem{corollary}{Corollary}
\DeclareMathOperator{\tr}{tr}
\DeclareMathOperator{\diag}{diag}
\DeclareMathOperator{\sign}{sign}
\newcommand{\st}{\mathrm{s.t.}}
\def\bA{\bm{A}}
\def\tbA{\widetilde{\bm{A}}}
\def\bE{\bm{E}}
\def\bI{\bm{I}}
\def\bU{\bm{U}}
\def\bW{\bm{W}}
\def\bo{\bm{1}}
\def\bu{\bm{u}}
\def\bv{\bm{v}}
\def\bx{\bm{x}}
\def\by{\bm{y}}
\def\mG{\mathcal{G}}
\def\E{\mathbb{E}}
\def\P{\mathbb{P}}
\def\R{\mathbb{R}}
\begin{document}

%

%

\twocolumn[

\aistatstitle{Exact Community Recovery over Signed Graphs}

\aistatsauthor{ Xiaolu Wang$^\ast$ \And Peng Wang$^\dagger$\ \And  Anthony Man-Cho So$^\ast$ }

\aistatsaddress{ $^\ast$Department of Systems Engineering and Engineering Management \\ The Chinese University of Hong Kong \\ $^\dagger$Department of Electrical Engineering and Computer Science \\ University of Michigan, Ann Arbor} 
]

\begin{abstract}
Signed graphs encode similarity and dissimilarity relationships among different entities with positive and negative edges. In this paper, we study the problem of community recovery over signed graphs generated by the signed stochastic block model (SSBM) with two equal-sized communities. Our approach is based on the maximum likelihood estimation (MLE) of the SSBM. Unlike many existing approaches, our formulation reveals that the positive and negative edges of a signed graph should be treated unequally. We then propose a simple two-stage iterative algorithm for solving the regularized MLE. It is shown that in the logarithmic degree regime, the proposed algorithm can exactly recover the underlying communities in nearly-linear time at the information-theoretic limit. Numerical results on both synthetic and real data are reported to validate and complement our theoretical developments and demonstrate the efficacy of the proposed method.
\end{abstract}

\section{INTRODUCTION}
Signed graph is a type of undirected graph whose edges can take positive or negative values. Unlike unsigned graphs whose edge labels can only be positive (i.e., $+1$), signed graphs employ negative edges (i.e., $-1$) to characterize the dissimilarity or repulsiveness relationships among different entities. Signed graphs are used to model a wide range of real-world objects, such as friendly and antagonistic relations in social psychology \citep{harary1953notion,cartwright1956structural,davis1967clustering}, common and opposite viewpoints in online social networks \citep{yardi2010dynamic}, trust or distrust between users in bitcoin trading platforms \citep{dittrich2020signal}, gene regulatory networks of bacteria \citep{fujita2012functional}, compensatory interaction networks formed during viral evolution \citep{quadeer2018co}, etc. Recent years have witnessed increasing interest in the analysis of signed graphs in a variety of machine learning tasks, such as graph clustering \citep{kunegis2010spectral,chiang2012scalable,mercado2016clustering,cucuringu2019sponge,mercado2019spectral}, node ranking \citep{shahriari2014ranking}, link prediction \citep{chiang2011exploiting,chiang2014prediction,ye2013predicting,le2017troll}, node classification \citep{bosch2018node}, graph embedding \citep{yuan2017sne,wang2017signed,derr2018signed}, etc. 

The study of signed graphs stems from the seminal work by \cite{harary1953notion} on the social balance theory, where positive edges encode friend relationships and negative edges encode enemy relationships. In this theory, a signed graph is said to be balanced if the network it represents follows the patterns: “a friend of my friend is my friend”, “an enemy of my friend is my enemy”, and “an enemy of my enemy is my friend”. Such graphs are inherently endowed with community structures. More specifically, it is shown in \cite{harary1953notion} that a signed graph is balanced if and only if (i) all of its edges are positive, or (ii) its nodes can be partitioned into 2 disjoint communities such that the edges within communities are positive and the edges between communities are negative.

In this paper, we focus on the problem of identifying the hidden communities over signed graphs that are randomly generated by the signed stochastic block model (SSBM). Specifically, given $n$ nodes that are partitioned into two equal-sized communities, a balanced signed graph is formed by randomly connecting the nodes within the same community with positive edges and the nodes across different communities with negative edges. To make the SSBM more realistic, we allow the presence of noise that yields unbalanced signed graphs, i.e., there can be negative edges within communities and positive edges across communities. In the logarithmic sparsity regime of the SSBM, the information-theoretic limit, which is the necessary and sufficient condition for exact community recovery, is established in the literature (see, e.g., \cite{jog2015information,yun2016optimal}). In this work, we develop a simple and scalable algorithm for exact community recovery in the SSBM at the information-theoretic limit.

\subsection{Related Works}
Community detection, which is also known as graph clustering, is one of the most important tasks in network analysis. Motivated by the social balance theory, the rough goal of community detection over signed graphs is to find a partition of its nodes so that there are as many as possible positive edges within communities and negative edges across communities. 

There are a collection of works on \textit{correlation clustering} for clustering complete signed graphs. The problem is NP-hard while suboptimal partition schemes can be obtained in polynomial time through approximation algorithms. For example, the formulation in \cite{bansal2004correlation,giotis2006correlation} equally weighs the number of positive (resp. negative) edges within clusters and the number of negative (resp. positive) edges across clusters. Later, \cite{coleman2008local} considered correlation clustering with 2 communities based on a local search algorithm. More recently, \cite{puleo2015correlation} extended the correlation clustering model to allow for bounds on the sizes of the communities.

Spectral methods are among the most popular clustering methods,  which proceeds by computing the eigenvectors of a matrix associated with the graph, followed by $k$-means clustering. For example, \cite{kunegis2010spectral} studied the 2-way spectral clustering over signed graphs based on the signed Laplacian matrix. Later, \cite{chiang2012scalable} extended the signed graph clustering problem to $k$ communities by minimizing the balanced normalized graph cut, which yields an eigen-problem as well. Recently, \cite{bonchi2019discovering} considered the problem of detecting 2 small conflicting subsets of nodes in a signed graph, which is formulated as the so-called ``discrete eigenvector'' problem. \cite{chu2016finding,tzeng2020discovering} generalized the problem to $k\geq2$ subsets and approximately solved the resulting formulation using a spectral method. 

Stochastic block model (SBM) is a generative model for random graphs that admit community structures. It provides a useful benchmark for validating and comparing different algorithms for community detection \citep{abbe2018community}. Over the past years, great progress has been made on establishing the fundamental limits for detecting communities in various SBMs \citep{mossel2014consistency,abbe2015exact,abbe2015community,yun2016optimal} and on developing computationally tractable algorithms that can recover the hidden communities at the information-theoretic limits \citep{hajek2016achieving,gao2017achieving,amini2018semidefinite,wang2020nearly,wang2021optimal,wang2021non}. In the context of signed graph clustering, \cite{cucuringu2019sponge} proposed a formulation based on a generalized eigen-problem and \cite{mercado2016clustering,mercado2019spectral} proposed approaches based on the matrix power means Laplacian. They provided recovery guarantees for signed graphs based on SBMs. However, their approaches cannot achieve exact recovery at the information-theoretic limit of the SSBM, while ours is able to do so. \cite{yun2016optimal} established the information-theoretic bound for a general type of labeled stochastic block model and proposed a spectral partition algorithm that achieves exact recovery in $\mathcal{O}(n\log^2n)$ time. Similar to the SSBM, the censored block model (CBM) \citep{abbe2014decoding} generates graphs with ternary edge labels $\{\ast,0,1\}$, where $\ast$ encodes the absence of information. Semidefinite programming is employed to recover the ground-truth communities in the CBM, which is solvable in polynomial time but is not scalable.

\subsection{Main Contributions}
In this work, we study the modeling and algorithmic issues for community detection, or more specifically, community recovery, in the SSBM. Our main contributions are summarized as follows:\\
(1) We derive the maximum likelihood estimation (MLE) formulation, which is nonconvex and discrete, for community recovery in the SSBM. This formulation reveals that the positive and negative parts of a signed graph should be treated unequally, which is different from many approaches in the literature (see, e.g., \cite{kunegis2010spectral,mercado2016clustering,mercado2019spectral}).\\
(2) We propose a simple two-stage iterative algorithm to solve the regularized version of the MLE. Specifically, the first stage employs power iterations (PIs) to obtain an approximate solution. The second stage employs generalized power iterations (GPIs) to iteratively refine the scaled approximate solution. 
We prove that the proposed algorithm can exactly recover the underlying communities in $\mathcal{O}(n\log^2n/\log\log n)$ time at the information-theoretic limit. \\
(3) Since the connectivity parameters of the SSBM are generally unknown, we provide a method to estimate the true parameters with non-asymptotic upper bounds on the estimation errors. The estimation method is based on counting the number of edges and triangles in the graph and can be easily implemented. \\
(4) We also conduct experiments on both synthetic and real data. The numerical results support our theoretical developments and demonstrate the efficacy of our proposed approach. 

\subsection{Notation}
We use lower boldface letters, e.g., $\bm{v}$, to denote vectors. Given a vector $\bv \in \R^n$, we use $v_i$ to denote its $i$-th elements, $\|\bm{v}\|$ to denote its $\ell_2$-norm, and $\bv/|\bv|$ to denote the vector given by
\begin{align*}
	\left( \frac{\bv}{|\bv|} \right)_i = 
	\begin{cases}
		1, &\ \text{if}\ v_i \ge 0,\\
		-1, &\ \text{otherwise},
	\end{cases}\ \text{for}\ i=1,2,\dots,n.
\end{align*} 
We use upper boldface letters, e.g., $\bm{M}$, to denote matrices. Given a matrix $\bm{M}$, we use $M_{ij}$ to denote its $(i,j)$-th element and $\|\bm{M}\|$ to denote its spectral norm. 
Given an integer $m\geq1$, we use $[m]$ to denote the set $\{1,\dots,m\}$.  We use $\mathbf{Bern}(p)$ to denote the Bernoulli random variable with mean $p\in[0,1]$. We use $\bm{I}$, $\bm{E}$, $\bm{0}$, and $\bm{1}$ to denote the identity matrix, the all-one matrix, the all-zero vector, and the all-one vector, respectively, and their dimensions will be clear from the context.  Finally, we use ``w.p.'' as the abbreviation for ``with probability''.

\section{PRELIMINARIES AND MAIN RESULTS}\label{sec:main}
In this section, we first formally introduce the SSBM and derive its corresponding MLE formulation. Then, we present the proposed method for exact community recovery based on a regularized version of the MLE. Finally, we give a summary of the recovery guarantee of our method.

\subsection{Signed Stochastic Block Model}
The SSBM considered in this paper is formally defined as follows: 
\begin{definition}[Signed Stochastic Block Model]\label{def-ssbm}
	Let $n\ge 2$ be an integer, $\mathcal{V}\coloneqq[n]$ be the node set that are partitioned into two equal-sized disjoint communities $\mathcal{V}_1$ and $\mathcal{V}_2$, and $\bm{x}^* \in \{1,-1\}^n$ be the community label vector such that $x^*_i=1$ if $i\in\mathcal{V}_1$ and $x^*_i=-1$ if $i\in\mathcal{V}_2$. Let $p^+,p^-,q^+, q^- \in (0,1]$ be the connectivity probabilities. 
	We say that the signed graph $\mathcal{G}$ is generated according to the signed stochastic block model, denoted by $\textsf{SSBM}(n,\bm{x}^*,p^+,p^-,q^+,q^-)$, if $\mathcal{G}$ has node set $\mathcal{V}$ and the elements $\{A_{ij}\}_{1\le i,j\le n}$ of its adjacency matrix $\bm{{A}}$ are generated independently as follows: \\
	(i) If nodes $i$ and $j$ belong to the same community and $i\neq j$, then
	\begin{align}\label{dist:aij-same}
		A_{ij} = A_{ji} = 
		\begin{cases}
			1, &\text{w.p.\ }\ p^+, \\
			-1,&\text{w.p.\ }\ p^-, \\
			0, &\text{w.p.\ }\ 1-p^+-p^-.
		\end{cases}
	\end{align} 
	(ii) If nodes $i$ and $j$ belong to different communities, then
	\begin{align}\label{dist:aij-diff}
		A_{ij} = A_{ji} = \begin{cases}
			1, &\text{w.p.\ }\ q^+, \\
			-1,&\text{w.p.\ }\ q^-, \\
			0, &\text{w.p.\ }\ 1-q^+-q^-.
		\end{cases}
	\end{align}
	(iii) There are no self-loops in $\mathcal{G}$, i.e, $A_{ii}=0$ for all $i \in [n]$.
\end{definition}
We should mention that there are some other types of signed SBMs in the literature for analyzing community detection methods over signed graphs. In particular, taking $p^+=q^-$ and $p^-=q^+$, $\textsf{SSBM}(n,\bm{x}^*,p^+,p^-,q^+,q^-)$ boils down to the signed SBM in \cite{cucuringu2019sponge} for 2-way clustering. Besides, \cite{mercado2016clustering,mercado2019spectral} considered a slightly different signed SBM that allows simultaneous presence of both positive and negative edges between each pair of nodes.

In this work, our goal is to develop a simple and efficient algorithm that can exactly recover the true community label vector $\bx^*$ based on the adjacency matrix $\bm{A}$ generated by  $\textnormal{\textsf{SSBM}}(n,\bm{x}^*,p^+,p^-,q^+,q^-)$. We are especially interested in the logarithmic degree regime of the SSBM, i.e.,
\begin{align}
	&p^+ = \frac{ \alpha^+\log n}{n},\ p^- = \frac{\alpha^-\log n}{n},\label{p+-}\\
	&q^+ = \beta^+\frac{\log n}{n},\ q^- = \beta^-\frac{\log n}{n},\label{q+-}
\end{align}
where $\alpha^+, \beta^+,\alpha^-,\beta^->0$. The SSBM in this regime exhibits a sharp information-theoretic threshold. Indeed, it is shown in the literature \citep{jog2015information,yun2016optimal} that it is possible to recover $\bm{x}^*$ in $\textnormal{\textsf{SSBM}}\left(n,\bm{x}^*,\frac{ \alpha^+\log n}{n},\frac{ \alpha^+\log n}{n},\frac{ \alpha^+\log n}{n},\frac{ \alpha^+\log n}{n}\right)$ with high probability if and only if
\begin{equation}\label{IT:bound}
	(\sqrt{\alpha^+}-\sqrt{\beta^+})^2 + (\sqrt{\alpha^-}-\sqrt{\beta^-})^2 \ge 2.
\end{equation}
Given the above preliminaries, we make the following assumption that will be used in the subsequent discussion:
\begin{assumption}\label{assump:base}
	The adjacency matrix $\bA$ of the signed graph $\mG$ is generated according to the $\textsf{SSBM}(n,\bm{x}^*,p^+,p^-,q^+,q^-)$ given by Definition \ref{def-ssbm}. Moreover, the connectivity probabilities $p^+,p^-,q^+,q^-$ are given by \eqref{p+-} and \eqref{q+-} with $\alpha^+>\beta^+>0$ and $\beta^->\alpha^->0$.
\end{assumption}
The requirements $\alpha^+>\beta^+$ and $\beta^->\alpha^-$ in Assumption \ref{assump:base} are for ease of technical exposition. Indeed, if $\alpha^+<\beta^+$ or $\beta^-<\alpha^-$, the main theoretical results in this section still remain valid based on similar technical developments.

\subsection{Maximum Likelihood Estimation}
Before we proceed, we let $\mathcal{G}^+$ (resp. $\mathcal{G}^-$) be the subgraph formed by the positive (resp. negative) edges of the signed graph $\mathcal{G}$, whose adjacency matrix $\bA^+$ (resp. $\bA^-$) is given by
\begin{align}\label{eq:aij}
	A^+_{ij} = \max\{A_{ij},0\}\ (\text{resp.\ }A^-_{ij} = \max\{-A_{ij},0\}),
\end{align}
for all $i, j\in[n]$. Obviously, we have $\bm{A}=\bm{A}^+-\bm{A}^-$. By Definition \ref{def-ssbm} and \eqref{eq:aij}, $A_{ij}^+$ and $A_{ij}^-$ are dependent random variables for all $i, j\in [n]$. Moreover, $\{A^+_{ij}\}_{1 \le i < j \le n}$ (resp. $\{A^-_{ij}\}_{1 \le i < j \le n}$) are i.i.d. $\mathbf{Bern}(p^+)$ (resp. $\mathbf{Bern}(p^-)$) if nodes $i$ and $j$ belong to the same community and $\{A^+_{ij}\}_{1 \le i < j \le n}$ (resp. $\{A^-_{ij}\}_{1 \le i < j \le n}$) are $\mathbf{Bern}(q^+)$ (resp. $\mathbf{Bern}(q^-)$) otherwise. Then, we are ready to present the MLE formulation for community recovery in $\textsf{SSBM}(n,\bm{x}^*,p^+,p^-,q^+,q^-)$. 

\begin{proposition}\label{thm:MLE}
	Suppose that Assumption \ref{assump:base} holds. Then, the maximum likelihood estimator of the ground-truth $\bm{x}^*$ is the solution to the following problem:
	\begin{equation}\label{MLE}
		\begin{split}
			\max &\quad \bx^\top(\mu_n\bA^+-\nu_n\bA^-)\bx\\ 
			\st &\quad \bx \in \{1,-1\}^n,\ \bo^\top\bx=0,
		\end{split}
	\end{equation}
	where
	\begin{align*}
		& \mu_n \coloneqq \log\left(\frac{\alpha^+}{\beta^+}\right) + \log\left(\frac{n-(\beta^++\beta^-)\log n}{n- (\alpha^++\alpha^-)\log n}\right),\\
		& \nu_n \coloneqq \log\left(\frac{\beta^-}{\alpha^-}\right) + \log\left(\frac{n- (\alpha^++\alpha^-)\log n }{n-(\beta^++\beta^-)\log n}\right).
	\end{align*}
\end{proposition}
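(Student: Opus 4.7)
The plan is to write down the joint log-likelihood of $\bA$ given $\bx$ and then peel it apart so that the dependence on $\bx$ collapses into a quadratic form in $\bA^+$ and $\bA^-$. Under the SSBM, the edges $\{A_{ij}\}_{i<j}$ are conditionally independent given $\bx$, so
\begin{equation*}
\log P(\bA \mid \bx) \;=\; \sum_{i<j} \log P(A_{ij}\mid x_i,x_j).
\end{equation*}
For each pair $(i,j)$ I would rewrite the summand using the decomposition $A_{ij} = A_{ij}^+ - A_{ij}^-$, so that the event $\{A_{ij}=1\}$ corresponds to $A_{ij}^+=1,A_{ij}^-=0$, the event $\{A_{ij}=-1\}$ to $A_{ij}^+=0,A_{ij}^-=1$, and the event $\{A_{ij}=0\}$ to $A_{ij}^+=A_{ij}^-=0$. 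Using the indicators $\mathds{1}[x_i=x_j]=(1+x_ix_j)/2$ and $\mathds{1}[x_i\neq x_j]=(1-x_ix_j)/2$, the log-likelihood becomes a linear combination of $A_{ij}^+$, $A_{ij}^-$ and $1-A_{ij}^+-A_{ij}^-$, each weighted by a convex combination of $\log p^\pm,\log(1-p^+-p^-)$ and $\log q^\pm,\log(1-q^+-q^-)$ with coefficients depending on $x_ix_j$.

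Next I would separate the terms that depend on $\bx$ from those that do not. The $\bx$-dependent part is $\tfrac{1}{2}\sum_{i<j} x_ix_j \Delta_{ij}$ where
\begin{equation*}
\Delta_{ij} \;=\; A_{ij}^+\log\frac{p^+}{q^+} + A_{ij}^-\log\frac{p^-}{q^-} + (1-A_{ij}^+-A_{ij}^-)\log\frac{1-p^+-p^-}{1-q^+-q^-}.
\end{equation*}
The constraint $\bo^\top\bx=0$ forces $\sum_{i<j}x_ix_j = -n/2$, a constant, so the summand $(1-A_{ij}^+-A_{ij}^-)\log\bigl[(1-q^+-q^-)/(1-p^+-p^-)\bigr]^{-1}$ can be split: its $(1)$ piece contributes only $-n/2$ times a constant and can be dropped, while the $-A_{ij}^+,-A_{ij}^-$ pieces merge with the first two terms of $\Delta_{ij}$. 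After this reduction, the objective (up to an additive constant and a positive factor of $2$) becomes
\begin{equation*}
\sum_{i\neq j} x_ix_j\left[A_{ij}^+\log\frac{p^+(1-q^+-q^-)}{q^+(1-p^+-p^-)} + A_{ij}^-\log\frac{p^-(1-q^+-q^-)}{q^-(1-p^+-p^-)}\right].
\end{equation*}

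Finally, I would substitute the logarithmic-degree parametrization $p^\pm=\alpha^\pm\log n/n$ and $q^\pm=\beta^\pm\log n/n$ from \eqref{p+-}--\eqref{q+-}. This gives $1-p^+-p^-=(n-(\alpha^++\alpha^-)\log n)/n$ and similarly for $q^\pm$. A direct inspection then identifies the coefficient of $A_{ij}^+$ as exactly $\mu_n$ and the coefficient of $A_{ij}^-$ as exactly $-\nu_n$ (the sign flip coming from swapping $\alpha^-$ and $\beta^-$ inside the logarithm). Since $A_{ii}^\pm=0$, the double sum is precisely $\bx^\top(\mu_n\bA^+-\nu_n\bA^-)\bx$, which proves the claim.

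The derivation is mostly bookkeeping and does not involve any real obstruction; the only step requiring care is matching signs when passing from the ratio $\log(p^-/q^-)$ to the stated form of $\nu_n$, and correctly accounting for the constant term killed by $\bo^\top\bx=0$ so that dropping it does not change the argmax.
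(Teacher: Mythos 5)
Your proof is correct and takes essentially the same route as the paper: both factor the likelihood over pairs, use the balance constraint (equivalently $\sum_{i<j}x_ix_j=-n/2$ and the observed totals $N^{+},N^{-}$) to discard the $\bx$-independent factors, and identify the surviving coefficients of $A_{ij}^{+}$ and $A_{ij}^{-}$ with $\mu_n$ and $-\nu_n$. The only difference is bookkeeping — you work entrywise with the indicators $(1\pm x_ix_j)/2$, whereas the paper aggregates into the counts $N_{\text{in}}^{+},N_{\text{out}}^{-}$ and then converts via $\bx^\top\bA^{+}\bx=4N_{\text{in}}^{+}-2N^{+}$ and $\bx^\top\bA^{-}\bx=2N^{-}-4N_{\text{out}}^{-}$ — and both yield the same objective up to a positive scalar and an additive constant.
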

The proof of Proposition \ref{prop-prop1} is deferred to Section \ref{proof-prop1} in the supplementary material, from which we note that the MLE formulation \eqref{MLE} seeks a partition of the nodes so that the signed graph $\mathcal{G}$ is as balance as possible, i.e., there are as many as possible positive edges within communities and negative edges across communities. The weight parameters $\mu_n$ and $\nu_n$, which depend on the connectivity probabilities, indicate that the positive and negative edges actually contribute unequally to the community recovery task except when $\mu_n = \nu_n$. This is different from many signed graph clustering methods that equally treat the positive and negative edges (see, e.g., \cite{giotis2006correlation,coleman2008local,kunegis2010spectral,mercado2016clustering,mercado2019spectral}).

\subsection{Regularized MLE and Parameter Estimation}
Motivated by the MLE formulation in Proposition \ref{thm:MLE} and the community recovery method for unsigned graphs \citep{wang2020nearly}, our approach to community recovery over signed graphs is based on the regularized version of Problem \eqref{MLE}. Upon defining
\begin{align}\label{xi}
	\xi \coloneqq \lim_{n\rightarrow\infty}\frac{\nu_n}{\mu_n} =\frac{\log(\beta^-/\alpha^-)}{\log(\alpha^+/\beta^+)},\ \widetilde{\bm{A}} \coloneqq \bm{A}^+-\xi\bm{A}^-,
\end{align}
we consider the following regularized MLE problem by penalizing the linear constraint $\bo^\top\bx=0$ in the objective function:
\begin{align}\label{RMLE}
	\max\left\{ \bx^\top \bW \bx:\ \bx \in \{1,-1\}^n \right\},
\end{align}
where $\bm{W} = \widetilde{\bm{A}}-\rho\bm{E}$ and $\rho= \bo^\top\tbA\bo/n^2$.
Since $\bm{W}$ is not necessarily positive semi-definite, the objective function of Problem \eqref{RMLE} is in general nonconvex. We should point out that the matrix $\bW$ relies on the weight parameter $\xi$, which is possibly unknown. In the context of the SSBM, we may only observe the graph $\mathcal{G}$, while the connectivity parameters $\alpha^+$, $\beta^+$, $\alpha^-$, $\beta^-$ that defines $\xi$ are unknown. To address this issue, we propose a method to estimate $\xi$ based on $\mG$ with a non-asymptotic upper bound on the estimation error. We formally state it in the following proposition.

\begin{proposition}\label{prop:esti-p-q}
	Suppose that Assumption \ref{assump:base} holds. Let $N^+$ (resp. $N^-$) be the number of edges, $T^+$ (resp. $T^-$) be the number of triangles in $\mG^+$ (resp. $\mG^-$).
	We let 
	\begin{subequations}
		\begin{align}
			\hat{\alpha}^+ &= \frac{1}{\log n}\left(\frac{2{N}^+}{n} + \sqrt[3]{6{T}^+ - \frac{8{N}^{+^3}}{n^3}}\right),\label{eq:alpha^+_hat} \\
			\hat{\beta}^+ &= \frac{1}{\log n}\left(\frac{2{N}^+}{n} - \sqrt[3]{6{T}^+ - \frac{8{N}^{+^3}}{n^3}}\right),\label{eq:beta^+_hat} \\
			\hat{\alpha}^- &= \frac{1}{\log n}\left(\frac{2{N}^-}{n} + \sqrt[3]{6{T}^- - \frac{8{N}^{-^3}}{n^3}}\right),\label{eq:alpha^-_hat} \\
			\hat{\beta}^- &= \frac{1}{\log n}\left(\frac{2{N}^-}{n} - \sqrt[3]{6{T}^- - \frac{8{N}^{-^3}}{n^3}}\right),\label{eq:beta^-_hat}
		\end{align}
	\end{subequations}
	and
	\begin{align}\label{eq:xi-hat}
		\hat{\xi} \coloneqq \frac{\log(\hat{\beta}^-/\hat{\alpha}^-)}{\log(\hat{\alpha}^+/\hat{\beta}^+)}.
	\end{align}
	Then, for sufficiently large $n$, it holds with probability at least $1-n^{-\Omega(1)}$ that
	\begin{align*}
		&  | \hat{\xi} - \xi | \le \kappa\left(\log n \right)^{-\frac{1}{12}},
	\end{align*}
	where $\kappa > 0$ is a constant.
\end{proposition}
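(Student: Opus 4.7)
The plan is to combine a method-of-moments identity relating the true parameters $(\alpha^\pm,\beta^\pm)$ to the means $(\E[N^\pm],\E[T^\pm])$ with standard concentration of $N^\pm$ and $T^\pm$, and then propagate the resulting perturbations through the explicit formulas \eqref{eq:alpha^+_hat}--\eqref{eq:xi-hat}.

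First, I would compute $\E[N^+]$ and $\E[T^+]$ exactly by partitioning pairs, respectively triples, of vertices by their community memberships. This yields
\[
 \E[N^+] = 2\binom{n/2}{2}p^+ + \tfrac{n^2}{4}\,q^+,\qquad
 \E[T^+] = 2\binom{n/2}{3}(p^+)^3 + 2\binom{n/2}{2}\tfrac{n}{2}\,p^+(q^+)^2,
\]
with analogous expressions on the negative side. Expanding $(p^+\!+\!q^+)^3$ and using $2p^3+6pq^2-(p+q)^3=(p-q)^3$, a direct algebraic manipulation gives the key identity
\[
 6\,\E[T^+] - \frac{8\,\E[N^+]^3}{n^3} \;=\; \frac{(\alpha^+-\beta^+)^3\log^3 n}{8} + O\!\left(\frac{\log^3 n}{n}\right),
\]
together with $2\E[N^+]/n = \tfrac{(\alpha^++\beta^+)\log n}{2} + O(1)$. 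Combined with $\alpha^+=(\alpha^++\beta^+)/2+(\alpha^+-\beta^+)/2$, this shows that substituting $\E[N^+],\E[T^+]$ into \eqref{eq:alpha^+_hat}--\eqref{eq:beta^+_hat} would recover $\alpha^+,\beta^+$ up to $o(1)$. The negative side is identical; the only subtlety is that $\beta^->\alpha^-$ makes the cube-root argument negative, handled cleanly by the real cube root being odd.

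Second, I would establish concentration of the four counts. For $N^\pm$, Bernstein's inequality applied to a sum of independent indicators with mean $\Theta(n\log n)$ yields $|N^\pm - \E N^\pm| \leq C\sqrt{n}\,\log n$ with probability $1-n^{-\Omega(1)}$. For $T^\pm$, either a Kim--Vu polynomial concentration bound, or a direct variance computation (the dominant contribution comes from edge-sharing triangle pairs and gives $\Var(T^\pm)=O(\log^3 n)$) followed by a Bernstein-type tail bound, yields $|T^\pm - \E T^\pm|\leq (\log n)^{C'}$ with the same failure probability. Writing $\Delta^+ := 6T^+ - 8(N^+)^3/n^3$ and using $|(N^+)^3-\E[N^+]^3|\leq 3\max(N^+,\E N^+)^2|N^+-\E N^+|$, one then obtains $|\Delta^+-\E\Delta^+|\leq (\log n)^{C''}$, with the $N^+$-perturbation being lower-order.

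Finally, I would propagate these into $\hat\xi$. Under Assumption~\ref{assump:base} we have $\alpha^+>\beta^+$, so $\E\Delta^+=\Theta(\log^3 n)$, keeping the cube-root evaluation well away from its singularity; the local Lipschitz constant of $\sqrt[3]{\cdot}$ there is $\Theta(1/\log^2 n)$, so after dividing by $\log n$ the induced perturbation in $\hat\alpha^+$ and $\hat\beta^+$ is $O((\log n)^{-c})$ for some explicit $c>0$, and analogously for $\hat\alpha^-,\hat\beta^-$. Assumption~\ref{assump:base} also keeps $\alpha^+/\beta^+$ and $\beta^-/\alpha^-$ bounded away from $1$, so both logarithms in \eqref{eq:xi-hat} are bounded away from $0$; applying the mean value theorem to $\hat\xi-\xi$ converts the componentwise errors into the stated bound, where the exponent $1/12$ absorbs the polylog factors from triangle concentration and the cube-root derivative. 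The main obstacle is precisely this last step: $\sqrt[3]{\cdot}$ has a derivative singularity at zero, and the argument works only because $\alpha^+\neq\beta^+$ and $\alpha^-\neq\beta^-$ keep $\E\Delta^\pm$ of order $\log^3 n$; everything else reduces to elementary mean-and-variance computations and first-order perturbation bounds.
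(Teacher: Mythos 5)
Your proposal is correct and follows essentially the same route as the paper: a method-of-moments identity based on edge and triangle counts (with the same key algebraic fact $2p^3+6pq^2-(p+q)^3=(p-q)^3$, so that the cube root recovers $\tfrac{\alpha^+-\beta^+}{2}\log n$), Bernstein concentration for $N^\pm$, polynomial (Kim--Vu/Vu) concentration for $T^\pm$, and first-order propagation of the errors through the cube root and the log-ratios defining $\hat\xi$. The only real deviation is cosmetic: the paper propagates the triangle error via the subadditivity bound $|\sqrt[3]{u}-\sqrt[3]{v}|\le\sqrt[3]{|u-v|}$, which yields exactly the exponent $1/12$, whereas your local-Lipschitz estimate on $\sqrt[3]{\cdot}$ near $\Theta(\log^3 n)$ would give a slightly sharper rate that still implies the stated bound.
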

The above parameter estimation method requires knowledge of the numbers of edges and triangles in $\mG^+$ and $\mG^-$, respectively. Indeed, given the adjacency matrix $\bA^+$ (resp. $\bA^-$), we can obtain the number of edges in $\mG^+$ (resp. $\mG^-$) by the formula
\begin{align*}
	N^+ = \sum_{1 \le i < j \le n}A_{ij}^+\ (\text{resp.\ } N^- = \sum_{1 \le i < j \le n}A_{ij}^-)
\end{align*}
and the number of triangles in $\mG^+$ (resp. $\mG^-$) by the formula
\begin{align*}
	T^+ = \frac{1}{6}\tr\left((\bA^+)^3\right)\ (\text{resp.\ } T^- = \frac{1}{6}\tr\left((\bA^-)^3\right)).
\end{align*}
Armed with Proposition \ref{prop:esti-p-q}, if the true connectivity parameters $\alpha^+$, $\beta^+$, $\alpha^-$, $\beta^-$ are unknown, the weight parameter $\xi$ in Problem \eqref{RMLE} can be replaced with its consistent estimator $\hat{\xi}$.

\subsection{Optimization Algorithm and Its Recovery Guarantee}\label{subsec:opt}
We observe that Problem \eqref{RMLE} is analogous to the principal component analysis problem. It is thus natural to apply a power iteration-type method to solve it, which is described in Algorithm \ref{alg-pmgpm}. Specifically, Algorithm \ref{alg-pmgpm} consists of two stages. In the first stage (lines 5--8), it uses the PIs to approximate the leading eigenvector of $\bm{W}$. In the second stage (lines 9--12), it employs the GPIs to iteratively refine the approximate eigenvector after proper scaling. In line 11, $\bm{W}\bm{x}^{t-1}/{|\bm{W}\bm{x}^{t-1}|}$ essentially computes the projection of $\bW\bx^{t-1}$ onto the set $\{1,-1\}^n$. Algorithm \ref{alg-pmgpm} is a generalization of the iterative method in \cite{wang2020nearly} for community recovery over unsigned graphs. Indeed, by setting $\xi=0$ in line 2, Algorithm \ref{alg-pmgpm} reduces to the algorithm in \cite{wang2020nearly}.

\begin{algorithm}[t]
	\caption{Algorithm for Solving Problem \eqref{RMLE}}
	\begin{algorithmic}[1]\label{alg-pmgpm}
		\STATE \textbf{Input:} Signed graph $\mG$ with adjacency matrices $\bm{A}^+,\bm{A}^-$, positive integers $T_1,T_2$
		\STATE Set ${\xi} \leftarrow\log\left({{\beta}^-}/{{\alpha}^-}\right) \big/ {\log\left({{\alpha}^+}/{{\beta}^+}\right)}$ 
		\STATE Set $\widetilde{\bm{A}} \leftarrow \bm{A}^+-{\xi}\bm{A}^-$ and $\rho \leftarrow \bo^\top\tbA\bo/n^2$
		\STATE Set $\bm{W}\leftarrow\widetilde{\bm{A}}-\rho\bm{E}$ 
		\STATE Generate $\bm{y}^{0}$ uniformly distributed over the unit sphere 
		\FOR{$t=1,2,\dots,T_1$}
		\STATE Set $\bm{y}^{t} \leftarrow \bm{W}\bm{y}^{t-1}/{\|\bm{W}\bm{y}^{t-1}\|}$
		\ENDFOR
		\STATE Set $\bm{x}^{0} \leftarrow \sqrt{n}\bm{y}^{T_1}$
		\FOR{$t=1,2,\dots,T_2$}
		\STATE Set $\bm{x}^{t}\leftarrow \bm{W}\bm{x}^{t-1}/{|\bm{W}\bm{x}^{t-1}|}$
		\ENDFOR
	\end{algorithmic}
\end{algorithm}

Next, we present the main theorem of this paper, which establishes the exact recovery guarantee of Algorithm \ref{alg-pmgpm}. 
\begin{theorem}\label{thm-iter-comp}
	Suppose that Assumption \ref{assump:base} holds and $\alpha^+,\beta^+,\alpha^-,\beta^-$ satisfy \eqref{IT:bound}. Then, for sufficiently large $n$, it holds with probability at least $1-n^{-\Omega(1)}$ that Algorithm \ref{alg-pmgpm} outputs $\bm{x}^*$ or $-\bm{x}^*$ in $T_1=\mathcal{O}(\log n/\log\log n)$ PIs and $T_2=\mathcal{O}(\log n/\log\log n)$ GPIs.
\end{theorem}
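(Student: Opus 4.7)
The plan is to exploit the signal-plus-noise decomposition $\bm{W} = \E[\bm{W}] + \bm{Z}$ and analyze the two stages of Algorithm \ref{alg-pmgpm} separately. A direct calculation shows that, up to its diagonal, $\E[\widetilde{\bm{A}}]$ is a linear combination of $\bm{x}^*(\bm{x}^*)^\top$ and $\bm{E}$, and the regularizer $\rho\bm{E}$ in the definition of $\bm{W}$ is chosen precisely to cancel the all-ones component; consequently $\E[\bm{W}]$ is (essentially) a rank-one matrix $\lambda^*\bm{x}^*(\bm{x}^*)^\top/n$ with leading eigenvalue $\lambda^* = \Theta(\log n)$ and leading eigenvector $\bm{x}^*/\sqrt{n}$. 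For the fluctuation, a matrix Bernstein / sparse random matrix bound applied to $\bm{A}^+$ and $\bm{A}^-$ separately gives $\|\bm{Z}\| = \mathcal{O}(\sqrt{\log n})$ with probability at least $1 - n^{-\Omega(1)}$, so that the ratio of the second-largest to the largest eigenvalue of $\bm{W}$ is at most $\mathcal{O}(1/\sqrt{\log n})$, and the Davis--Kahan theorem controls the corresponding eigenvector perturbation.

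For the PI stage, a Haar-random unit vector $\bm{y}^0$ has overlap of order $n^{-1/2}$ with the leading eigenvector with high probability, and each PI step shrinks the orthogonal component by the eigenvalue ratio. After $T_1 = \mathcal{O}(\log n/\log\log n)$ iterations the orthogonal mass is driven down to $(\log n)^{-\Omega(\log n/\log\log n)} = n^{-\Omega(1)}$, so $\bm{x}^0 = \sqrt{n}\,\bm{y}^{T_1}$ disagrees with $\pm\bm{x}^*$ on at most $o(n)$ coordinates. For the GPI stage, I would decompose $\bm{W}\bm{x}^{t-1} = \bm{W}\bm{x}^* + \bm{W}(\bm{x}^{t-1}\mp\bm{x}^*)$ and reason entrywise: the oracle score $(\bm{W}\bm{x}^*)_i$ has mean of order $\log n\cdot x^*_i$, while the perturbation $(\bm{W}(\bm{x}^{t-1}\mp\bm{x}^*))_i$ is controlled by $\|\bm{Z}\|$ and the current Hamming distance $d_t$. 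Generalizing the contraction lemma of \cite{wang2020nearly} from the unsigned to the signed setting then yields $d_{t+1} \le (c/\log n)\,d_t$ for some constant $c$, so $d_t$ hits zero in $T_2 = \mathcal{O}(\log n/\log\log n)$ GPIs.

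The main obstacle is the entrywise base case at the information-theoretic threshold: for every $i \in [n]$, the oracle score $\sum_j W_{ij} x^*_j$ must agree in sign with $x^*_i$ with probability $1 - o(n^{-1})$, so that a union bound over the $n$ nodes still leaves no vertex misclassified. This reduces to a sharp Chernoff estimate for a sum of $n$ independent rescaled signed Bernoulli variables of the form $\mu_n A^+_{ij} - \nu_n A^-_{ij}$, whose optimal large-deviation exponent is the Chernoff--Hellinger divergence appearing in \eqref{IT:bound}; the MLE-derived weights $\mu_n, \nu_n$ of Proposition \ref{thm:MLE} are exactly what make this exponent optimal. Verifying that \eqref{IT:bound} forces the exponent to exceed $1$, and then tracking the constants carefully so that the propagation error $\bm{Z}(\bm{x}^{t-1}\mp\bm{x}^*)$ does not erode the margin across the GPI iterations, is the technical heart of the proof.
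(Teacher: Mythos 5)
Your plan follows essentially the same route as the paper: the rank-one-plus-noise decomposition of $\bm{W}$ with $\|\bm{W}-\E[\bm{W}]\|=\mathcal{O}(\sqrt{\log n})$ and Davis--Kahan (Lemma \ref{lem:eig-W}), linear PI convergence at rate $\mathcal{O}(1/\sqrt{\log n})$ (Proposition \ref{prop-prop1}), the entrywise Chernoff bound on $x_i^*(\bm{W}\bm{x}^*)_i$ whose optimal exponent is precisely the Hellinger-type quantity in \eqref{IT:bound} (Lemma \ref{lem:Wx}), and the GPI contraction obtained by splitting $\bm{W}\bm{x}=\bm{W}\bm{x}^*+\bm{W}(\bm{x}-\bm{x}^*)$ (Propositions \ref{prop:contraction} and \ref{prop-one-step-conv}). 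The only cosmetic difference is that you phrase the contraction in Hamming distance while the paper contracts the $\ell_2$ distance, which is equivalent for sign vectors and also covers the non-binary initial iterate $\bm{x}^0=\sqrt{n}\bm{y}^{T_1}$.
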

Theorem \ref{thm-iter-comp} indicates that under the information-theoretic limit \eqref{IT:bound}, Algorithm \ref{alg-pmgpm} is able to exactly recover the ground-truth $\bx^*$ up to a sign with high probability. Equipped with the iteration complexity given in Theorem \ref{thm:MLE}, we can further obtain the time complexity of Algorithm \ref{alg-pmgpm}.
\begin{corollary}\label{coro-coro2}
	Suppose that Assumption \ref{assump:base} holds and $\alpha^+,\beta^+,\alpha^-,\beta^-$ satisfy \eqref{IT:bound}. Then, for sufficiently large $n$, it holds with probability at least $1-n^{-\Omega(1)}$ that Algorithm \ref{alg-pmgpm} outputs $\bm{x}^*$ or $-\bm{x}^*$ in $\mathcal{O}(n\log^2 n/\log\log n)$ time.
\end{corollary}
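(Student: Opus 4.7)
The plan is to combine the iteration-count bound from Theorem \ref{thm-iter-comp} with a per-iteration cost estimate. Since Theorem \ref{thm-iter-comp} already guarantees success with high probability in $T_1 + T_2 = \mathcal{O}(\log n/\log\log n)$ total iterations, it suffices to show that each PI or GPI iteration of Algorithm \ref{alg-pmgpm} can be executed in $\mathcal{O}(n\log n)$ time with probability $1-n^{-\Omega(1)}$, after which the product gives the claimed $\mathcal{O}(n\log^2 n/\log\log n)$ bound.

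The dominant cost per iteration is the matrix-vector product $\bm{W}\bm{v}$ for some $\bm{v}\in\mathbb{R}^n$. I would exploit the decomposition $\bm{W} = \bm{A}^+ - \xi\bm{A}^- - \rho\bm{E}$ and treat the three terms separately. The rank-one term is trivial: $\bm{E}\bm{v} = (\bm{1}^\top\bm{v})\bm{1}$ costs $\mathcal{O}(n)$. The main work is the two sparse products $\bm{A}^+\bm{v}$ and $\bm{A}^-\bm{v}$. Under Assumption \ref{assump:base}, $\bm{A}^+$ and $\bm{A}^-$ are the adjacency matrices of Erdős--Rényi-type random graphs with edge probabilities on the order of $\log n/n$, so the expected number of nonzeros of each is $\mathcal{O}(n\log n)$. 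A Chernoff/Bernstein bound applied to the binomial count of nonzeros shows that, with probability at least $1 - n^{-\Omega(1)}$, both matrices have $\mathcal{O}(n\log n)$ nonzeros, and hence each sparse matvec runs in $\mathcal{O}(n\log n)$ time. The scalar normalizations $\bm{W}\bm{v}/\|\bm{W}\bm{v}\|$ and $\bm{W}\bm{v}/|\bm{W}\bm{v}|$ in lines 7 and 11 cost only $\mathcal{O}(n)$ each.

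For the one-time setup (lines 2--4), computing $\xi$ from the known or estimated parameters is $\mathcal{O}(1)$, forming $\widetilde{\bm{A}}$ implicitly costs nothing (we keep $\bm{A}^+$ and $\bm{A}^-$ in sparse form and apply $\xi$ on the fly), and computing $\rho = \bm{1}^\top\widetilde{\bm{A}}\bm{1}/n^2$ reduces to summing the nonzeros of $\bm{A}^+$ and $\bm{A}^-$, which is again $\mathcal{O}(n\log n)$ on the event above. Summing the setup and per-iteration costs, the total runtime is
\begin{equation*}
  \mathcal{O}(n\log n) + (T_1+T_2)\cdot \mathcal{O}(n\log n) = \mathcal{O}\!\left(\frac{n\log^2 n}{\log\log n}\right),
\end{equation*}
which, combined with the correctness guarantee of Theorem \ref{thm-iter-comp}, yields the corollary via a union bound over the two high-probability events.

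The only nontrivial step is the high-probability sparsity bound for $\bm{A}^+$ and $\bm{A}^-$; but this is a routine concentration argument for a sum of $\binom{n}{2}$ independent Bernoulli variables of mean $\Theta(\log n/n)$, so I would not anticipate any serious obstacle. Everything else is bookkeeping of the per-iteration costs in lines 5--12 of Algorithm \ref{alg-pmgpm}.
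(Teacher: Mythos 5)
Your proposal is correct and follows essentially the same route as the paper: the paper likewise multiplies the $\mathcal{O}(\log n/\log\log n)$ iteration count from Theorem \ref{thm-iter-comp} by an $\mathcal{O}(n\log n)$ per-iteration cost, justified by a high-probability bound of $\mathcal{O}(n\log n)$ on the number of nonzeros of $\bm{A}$ (Lemma \ref{lemma-per-iter-comp}, imported from prior work rather than re-derived via Chernoff as you do). Your additional bookkeeping for the rank-one term $\rho\bm{E}$ and the setup cost is sensible but does not change the argument.
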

Corollary \ref{coro-coro2} indicates that Algorithm \ref{alg-pmgpm} achieves exact recovery in nearly linear time. 

It should be pointed out that Algorithm \ref{alg-pmgpm} implicitly requires knowledge of the connectivity parameters $\alpha^+,\beta^+,\alpha^-,\beta^-$. Nevertheless, if these parameters are unknown, we can replace $\xi$ with its consistent estimator $\hat{\xi}$ given by \eqref{eq:xi-hat}. Proposition \ref{prop:esti-p-q} guarantees that the true $\xi$ can be accurately estimated for sufficiently large $n$, thus the results in Theorem \ref{thm-iter-comp} and Corollary \ref{coro-coro2} still remain valid. 

\section{PROOFS OF MAIN RESULTS}\label{sec:proofs}
In this section, we provide the major ingredients in the proofs of the results in Section \ref{sec:main}. This involves estimating the connectivity parameters in the SSBM in Section \ref{subsec:esti-para} and analyzing the behavior of the PIs and GPIs in Sections \ref{subsec:PM} and \ref{subsec:GPM}, respectively. The detailed proofs can be found in the supplementary material. 

\subsection{Estimation of the Connectivity Parameters}\label{subsec:esti-para}
We first present the following lemma about the concentration properties of the numbers of edges and triangles in $\mG^+$ and $\mG^-$, respectively.
\begin{lemma}\label{lem:edges-triangles}
Consider the same setting as Proposition \ref{prop:esti-p-q}. Suppose that $n \ge \max\left\{\alpha^+,\beta^- \right\}$. Then, it holds with probability at least $1-n^{-\Omega(1)}$ that
\begin{align*}
	& \left| N^+  - a\left(\alpha^+ + \beta^+\right) \right| \le \left(\sqrt{n}+\alpha^+/2\right)\log n,\\ 
	& \left| N^-  - a\left(\alpha^- + \beta^-\right) \right| \le \left(\sqrt{n}+\alpha^-/2\right)\log n,\\ 
	& \left| T^+  - b\left( (\alpha^{+})^3 + 3\alpha^+(\beta^+)^2 \right)  \right|  \le 8(\alpha^+)^{\frac{9}{4}}\left(\log n\right)^{\frac{11}{4}},\\
	&  \left| T^-  - b\left( (\alpha^{-})^3 + 3\alpha^-(\beta^-)^2 \right)  \right| \le  8(\beta^-)^{\frac{9}{4}}\left(\log n\right)^{\frac{11}{4}},
\end{align*} 
where
\begin{align}\label{eq:abcd}
	a \coloneqq \frac{n\log n}{4},\ b \coloneqq \frac{\log^3n}{24}.
\end{align}
\end{lemma}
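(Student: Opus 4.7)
The plan is to handle each of the four bounds by decomposing the quantity of interest as its expectation plus a fluctuation, and then bounding each piece separately. I would argue $N^+$ and $T^+$ in detail, with $N^-$ and $T^-$ being fully symmetric under the swap of the roles of $(p^+,q^+)$ and $(p^-,q^-)$.

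For the edge count $N^+=\sum_{i<j}A^+_{ij}$, the first step is to compute the mean exactly. The $2\binom{n/2}{2}=\tfrac{n(n-2)}{4}$ within-community pairs contribute with probability $p^+$ and the $(n/2)^2$ across-community pairs contribute with probability $q^+$, so
\begin{equation*}
\E[N^+]=\tfrac{n(n-2)}{4}p^+ + \tfrac{n^2}{4}q^+ = a(\alpha^++\beta^+)-\tfrac{\alpha^+}{2}\log n,
\end{equation*}
which already identifies the additive $(\alpha^+/2)\log n$ term in the target. Since $N^+$ is a sum of $\binom{n}{2}$ independent $\{0,1\}$-valued variables with total variance $O(n\log n)$, I would then invoke Bernstein's inequality to obtain $|N^+-\E[N^+]|\le\sqrt{n}\log n$ with probability $1-n^{-\Omega(1)}$ and combine the two pieces by the triangle inequality.

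For the triangle count $T^+=\tfrac{1}{6}\tr((\bA^+)^3)=\sum_{i<j<k}A^+_{ij}A^+_{jk}A^+_{ik}$, I would again start from the expectation by classifying unordered triples $\{i,j,k\}$ by community pattern: the $2\binom{n/2}{3}$ triples with all three vertices in one community each contribute $(p^+)^3$, and the $2\binom{n/2}{2}(n/2)$ triples with two vertices in one community and one in the other each contribute $p^+(q^+)^2$. Expanding the binomial coefficients and substituting $p^+=\alpha^+\log n/n$, $q^+=\beta^+\log n/n$ yields
\begin{equation*}
\E[T^+]=b\bigl((\alpha^+)^3+3\alpha^+(\beta^+)^2\bigr)+O\!\left(\tfrac{(\log n)^3}{n}\right),
\end{equation*}
so the bias relative to the stated target is negligible compared with the claimed $(\log n)^{11/4}$ error.

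The hard part will be the concentration of $T^+$ about $\E[T^+]$, because triangle indicators sharing an edge are \emph{dependent} and a direct Bernstein argument does not apply. My plan is to view $T^+$ as a degree-$3$ polynomial in the independent Bernoullis $\{A^+_{ij}\}_{i<j}$ and apply a Kim--Vu-type polynomial concentration inequality, or equivalently to bound a sufficiently high even moment of $T^+-\E[T^+]$ by enumerating pairs of triples by their edge-intersection pattern (no shared edge, one shared edge, or identical triple). The dominant contribution arises from pairs sharing a single edge, of which there are $O(n^4)$ with covariance of order $(\log n/n)^5$. Feeding these moment estimates into the polynomial concentration bound with deviation parameter tuned so that the failure probability is $n^{-\Omega(1)}$ should yield a tail of the stated form $O((\alpha^+)^{9/4}(\log n)^{11/4})$. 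The principal obstacle I anticipate is carefully tracking the $\alpha^+$ and $\beta^+$ dependence through the concentration inequality, since the non-homogeneity of the SSBM edge probabilities forces each relevant partial-derivative expectation and each shared-edge contribution to split into a within-community and an across-community piece that must be estimated separately before being recombined into the final bound.
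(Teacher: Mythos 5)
Your proposal is correct and takes essentially the same route as the paper: for $N^\pm$ it computes the exact mean (identifying the $(\alpha^+/2)\log n$ bias from the missing diagonal/within-community count) and applies Bernstein's inequality, and for $T^\pm$ it computes the mean by classifying triples and then invokes a Kim--Vu-type polynomial concentration inequality, which is precisely what the paper does via \citet[Theorem 4.2]{vu2002concentration} to obtain the $(\E[T^+])^{3/4}\sqrt{\log n}\lesssim(\alpha^+)^{9/4}(\log n)^{11/4}$ tail.
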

Based on Lemma \ref{lem:edges-triangles}, we can obtain the non-asymptotic guarantee for the estimation of connectivity parameters.
\begin{lemma}\label{lem:unique-sol}
Consider the same setting as Proposition \ref{prop:esti-p-q}. Suppose that $n \ge \max\left\{\alpha^+,\beta^- \right\}$. Then, it holds with probability at least $1-n^{-\Omega(1)}$ that
\begin{align}
	& \max\left\{ | \alpha^+ - \hat{\alpha}^+|, | \beta^+ - \hat{\beta}^+|\right\}  \le \kappa_1\left(\log n \right)^{-\frac{1}{12}},\label{rst1:lem:unique-sol}\\
	& \max\left\{ | \alpha^- - \hat{\alpha}^-|, | \beta^- - \hat{\beta}^-|\right\}  \le \kappa_2\left(\log n \right)^{-\frac{1}{12}}, \label{rst2:lem:unique-sol}
\end{align}
where $\kappa_1,\kappa_2 > 0$ are constants. 
\end{lemma}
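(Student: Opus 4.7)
The plan is to plug the concentration bounds from Lemma \ref{lem:edges-triangles} into the defining formulas for $\hat{\alpha}^+, \hat{\beta}^+, \hat{\alpha}^-, \hat{\beta}^-$ and then shepherd the resulting error through the cube root. I focus first on the positive case and set $S \coloneqq 6T^+ - 8(N^+)^3/n^3$, so that $\hat{\alpha}^+ = (2N^+/n + S^{1/3})/\log n$ and $\hat{\beta}^+ = (2N^+/n - S^{1/3})/\log n$. Using the edge-count bound $|N^+ - a(\alpha^+ + \beta^+)| \le (\sqrt{n} + \alpha^+/2)\log n$ with $a = n\log n/4$ from Lemma \ref{lem:edges-triangles}, I obtain, after dividing by $n\log n/2$,
\begin{equation*}
\frac{2N^+}{n\log n} = \frac{\alpha^+ + \beta^+}{2} + O(1/\sqrt{n}).
\end{equation*}

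The second step is to identify the deterministic centering for $S$ and control its fluctuation. A direct expansion yields the algebraic identity
\begin{equation*}
6b\bigl((\alpha^+)^3 + 3\alpha^+(\beta^+)^2\bigr) - \bigl(2a(\alpha^+ + \beta^+)/n\bigr)^3 = \tfrac{\log^3 n}{8}(\alpha^+ - \beta^+)^3,
\end{equation*}
which uses $6b = \log^3 n/4$, $(2a/n)^3 = \log^3 n/8$, and the binomial expansion of $(\alpha^+ + \beta^+)^3$. Combining the triangle-count bound $|T^+ - b((\alpha^+)^3 + 3\alpha^+(\beta^+)^2)| \le 8(\alpha^+)^{9/4}(\log n)^{11/4}$ with the elementary inequality $|x^3 - y^3| \le 3\max(|x|,|y|)^2 |x-y|$ applied to $x = 2N^+/n$ and $y = 2a(\alpha^+ + \beta^+)/n$, I find that the $N^+$-induced perturbation in $(2N^+/n)^3$ is only of order $\log^3 n/\sqrt{n}$, dominated by the triangle error, so
\begin{equation*}
\left| S - \tfrac{\log^3 n}{8}(\alpha^+ - \beta^+)^3 \right| = O\bigl((\log n)^{11/4}\bigr).
\end{equation*}

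The main obstacle, and the source of the exponent $-1/12$, is propagating this bound through the cube root. I will invoke the $1/3$-H\"older continuity $|u^{1/3} - v^{1/3}| \le |u - v|^{1/3}$, which holds for all real $u,v$ and so does not require $S>0$; this immediately gives
\begin{equation*}
\left| S^{1/3} - \tfrac{\log n}{2}(\alpha^+ - \beta^+) \right| = O\bigl((\log n)^{11/12}\bigr).
\end{equation*}
The cube-root step is delicate precisely because one cannot linearize near $\alpha^+ = \beta^+$, where the derivative blows up; the H\"older bound sidesteps this issue at the cost of the weaker exponent, and any attempt to obtain a sharper rate would require a case split on $|\alpha^+ - \beta^+|$. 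Dividing the previous display by $\log n$ and combining with the $O(1/\sqrt{n})$ control on $2N^+/(n\log n)$ established in the first step yields $|\hat{\alpha}^+ - \alpha^+|, |\hat{\beta}^+ - \beta^+| \le \kappa_1(\log n)^{-1/12}$ for some constant $\kappa_1>0$, proving \eqref{rst1:lem:unique-sol}. The bound \eqref{rst2:lem:unique-sol} follows by repeating the same argument verbatim with $N^-, T^-$ in place of $N^+, T^+$; under Assumption \ref{assump:base} we have $\beta^- > \alpha^-$, so the centering of the corresponding $S$ is negative and $S^{1/3}$ is the (real) negative cube root, but the estimate is otherwise identical, with $(\alpha^+)^{9/4}$ replaced by $(\beta^-)^{9/4}$ from the triangle-count bound.
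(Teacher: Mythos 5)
Your proposal follows essentially the same route as the paper's proof: substitute the concentration bounds of Lemma \ref{lem:edges-triangles} into the closed-form estimators, use the algebraic identity $6\bar{T}^+ - 8\bar{N}^{+^3}/n^3 = \tfrac{\log^3 n}{8}(\alpha^+-\beta^+)^3$ to identify the centering, and push the error through the cube root with a $1/3$-H\"older bound, which is where the exponent $-1/12$ comes from. The one thing to correct is your claim that $|u^{1/3}-v^{1/3}|\le|u-v|^{1/3}$ "holds for all real $u,v$": it fails when $u$ and $v$ have opposite signs (take $u=1$, $v=-1$: the left side is $2$, the right side is $2^{1/3}$), so your stated reason for not needing $S>0$ is invalid. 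The argument survives anyway, for either of two reasons: (a) your own bound $|S-\tfrac{\log^3 n}{8}(\alpha^+-\beta^+)^3|=O((\log n)^{11/4})$ together with $\alpha^+>\beta^+$ (and, in the negative case, $\beta^->\alpha^-$) forces $S$ and its centering to share a sign for large $n$, which is precisely the nonnegativity check the paper performs before invoking the same inequality; and (b) for arbitrary real $u,v$ one still has $|u^{1/3}-v^{1/3}|\le 2^{2/3}|u-v|^{1/3}$, which is all you need up to constants. Either fix should be stated explicitly; otherwise the proof is sound and matches the paper's.
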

Equipped with Lemma \ref{lem:unique-sol}, we can show Proposition \ref{prop:esti-p-q}, whose proof is deferred to Section \ref{proof-prop2} in the supplementary material.

\subsection{Analysis of the Power Iterations}\label{subsec:PM} 
In our technical developments in Sections \ref{subsec:PM} and \ref{subsec:GPM}, we use the true parameters $\alpha^+,\beta^+,\alpha^-,\beta^-$ for conciseness of exposition. As discussed in the last paragraph of Section \ref{subsec:opt}, this does not influence our main results. In this subsection, we analyze the performance of the PIs. We first present the spectral property of the matrix $\bm{W}$.
\begin{lemma}\label{lem:eig-W}
Suppose that Assumption \ref{assump:base} holds.
Let $\lambda_1 \ge \lambda_2 \ge \dots\geq\lambda_n$ be the eigenvalues of $\bm{W}$ and $\bu_1$ be the eigenvector associated with eigenvalue $\lambda_1$. Then, for all sufficiently large $n$, it holds with probability at least $1-n^{-\Omega(1)}$ that
\begin{align*}
	 |\lambda_1| &\geq  \frac{c_0}{2}\log n - 3c_1\sqrt{\log n}, \\ 
	|\lambda_i| &\leq 3c_1\sqrt{\log n},\ \text{for}\ i=2,\dots,n,
\end{align*}
and
\begin{align}\label{eq:u1-x*}
	\min_{ \theta \in \{\pm 1\}} \left\| \bu_1 - \theta\frac{\bx^*}{\sqrt{n}} \right\| \le \frac{c_2}{\sqrt{\log n}},
\end{align}
where $c_0=\left|(\alpha^+-\beta^+)-\xi(\alpha^--\beta^-)\right|$ and $c_1,c_2 > 0$ are constants.
\end{lemma}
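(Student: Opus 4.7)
The plan is to exhibit a deterministic rank-one ``signal'' matrix whose leading eigenpair encodes $\bx^*$, bound the spectral distance between $\bm{W}$ and this signal matrix, and then translate that distance into (i) eigenvalue estimates via Weyl's inequality and (ii) eigenvector estimates via the Davis--Kahan $\sin\Theta$ theorem.

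First, I would compute $\E[\widetilde{\bm{A}}]$ entrywise. Because $\E[A_{ij}^+]$ and $\E[A_{ij}^-]$ only take two values according to whether $x_i^*=x_j^*$, the off-diagonal part of $\E[\widetilde{\bm{A}}]$ admits the rank-two decomposition $\mu(\bm{E}-\bm{I})+\eta(\bx^*{\bx^*}^\top-\bm{I})$, where $\mu\coloneqq\frac12[(p^+-\xi p^-)+(q^+-\xi q^-)]$ and $\eta\coloneqq\frac12[(p^+-\xi p^-)-(q^+-\xi q^-)]$. Under Assumption~\ref{assump:base} one checks $\xi>0$, and hence $\eta=\frac{c_0\log n}{2n}>0$ by the very definition of $c_0$. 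I set $\bar{\bm{W}}\coloneqq\eta\,\bx^*{\bx^*}^\top$; its eigenvalues are $\eta n=c_0\log n/2$ (with eigenvector $\bx^*/\sqrt{n}$) and $0$ with multiplicity $n-1$.

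Next I would control $\|\bm{W}-\bar{\bm{W}}\|$ by splitting
\begin{equation*}
\bm{W}-\bar{\bm{W}} = \bigl(\widetilde{\bm{A}}-\E[\widetilde{\bm{A}}]\bigr) + (\mu-\rho)\bm{E} - (\mu+\eta)\bm{I}.
\end{equation*}
The dominant term is the random fluctuation $\widetilde{\bm{A}}-\E[\widetilde{\bm{A}}]=(\bm{A}^+-\E\bm{A}^+)-\xi(\bm{A}^--\E\bm{A}^-)$, which I bound by $O(\sqrt{\log n})$ with probability $1-n^{-\Omega(1)}$ by invoking the standard spectral-norm concentration for sparse symmetric random matrices with $O(\log n/n)$ Bernoulli entries (e.g.\ Feige--Ofek / matrix Bernstein), applied separately to $\bm{A}^+$ and $\bm{A}^-$ and combined by the triangle inequality. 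The remaining two terms are lower order: $|\mu+\eta|\cdot\|\bm{I}\|=O(\log n/n)$, and a scalar Bernstein bound on $\bo^\top\widetilde{\bm{A}}\bo$ yields $|\mu-\rho|\cdot\|\bm{E}\|=|\mu-\rho|\cdot n=O(\sqrt{\log n/n})$. Packaging these into a single constant gives $\|\bm{W}-\bar{\bm{W}}\|\le c_1\sqrt{\log n}$ for some $c_1>0$.

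With this perturbation bound in hand, Weyl's inequality applied to the symmetric matrices $\bm{W}$ and $\bar{\bm{W}}$ gives $\lambda_1(\bm{W})\ge\eta n-\|\bm{W}-\bar{\bm{W}}\|\ge\frac{c_0}{2}\log n-c_1\sqrt{\log n}$ and $|\lambda_i(\bm{W})|=|\lambda_i(\bm{W})-\lambda_i(\bar{\bm{W}})|\le c_1\sqrt{\log n}$ for $i\ge 2$, yielding the stated spectral bounds (with the slack factor $3$ absorbing additional constants). For the eigenvector bound I apply Davis--Kahan: the spectral gap between $\lambda_1(\bar{\bm{W}})=\eta n$ and $\lambda_2(\bar{\bm{W}})=0$ is $\Theta(\log n)$, so
\begin{equation*}
\min_{\theta\in\{\pm 1\}}\bigl\|\bu_1-\theta\tfrac{\bx^*}{\sqrt{n}}\bigr\|\le \sqrt{2}\,\sin\Theta\bigl(\bu_1,\bx^*/\sqrt{n}\bigr)\le \frac{\sqrt{2}\,\|\bm{W}-\bar{\bm{W}}\|}{\eta n-c_1\sqrt{\log n}}\le\frac{c_2}{\sqrt{\log n}},
\end{equation*}
for a suitable $c_2>0$ once $n$ is large enough to absorb the denominator.

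The main obstacle is the sharp $O(\sqrt{\log n})$ spectral-norm bound on $\widetilde{\bm{A}}-\E[\widetilde{\bm{A}}]$ in the logarithmic degree regime. Naive matrix Bernstein gives the correct scaling only after carefully accounting for the variance proxy $\sigma^2=\Theta(\log n)$ of each row-sum and using the symmetry/sparsity of the entries; one typically cites (or re-derives via a truncation and Bernstein argument) the Feige--Ofek style bound for sparse random symmetric $\pm1$ matrices. The other pieces of the proof are routine concentration and textbook matrix perturbation.
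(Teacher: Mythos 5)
Your proposal is correct and follows essentially the same route as the paper's proof: you compare $\bm{W}$ to the rank-one signal matrix $\eta\,\bx^*{\bx^*}^\top$ (identical to the paper's $\tfrac12\bigl((p^+-q^+)-\xi(p^--q^-)\bigr)\bx^*{\bx^*}^\top$), control the perturbation by an $O(\sqrt{\log n})$ sparse-random-matrix spectral bound (the paper cites Lei--Rinaldo, the Feige--Ofek-type result you name) plus lower-order terms from $\rho$ and the diagonal, and then invoke Weyl and Davis--Kahan exactly as the paper does. The only cosmetic difference is that you fold the centering error into $(\mu-\rho)\bm{E}$ rather than splitting it as $(\rho-\E[\rho])\bm{E}$ plus a deterministic $\tfrac1n(p^+-\xi p^-)\bm{E}$ term, which is mathematically equivalent.
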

Lemma \ref{lem:eig-W} guarantees that the magnitude of the leading eigenvalue of $\bm{W}$ is sufficiently larger than other eigenvalues, which can be used to show that the leading eigenvector $\bm{u}_1$ is sufficiently close to the scaled ground-truth $\bm{x}^*/\sqrt{n}$ up to a sign. Further, we show in the next proposition that the iterates $\bm{y}^{t}$ in the first stage of Algorithm \ref{alg-pmgpm} converge to $\bm{u}_1$ at a linear rate. 
\begin{proposition}\label{prop-prop1}
Suppose that Assumption \ref{assump:base} holds. 
Let $\bu_1$ be the leading eigenvector of $\bW$ and $\{\by^t\}_{t\ge 0}$ be the sequence generated in the first stage of Algorithm \ref{alg-pmgpm}. 
Then, for sufficiently large $n$, it holds with probability at least $1-n^{-\Omega(1)}$ that
\begin{align*}
	&\min_{\theta\in\{\pm 1\}}\ \left\|\bm{y}^t - \theta\bm{u}_1\right\| \le n\left( \frac{6c_1}{c_0\sqrt{\log n} - 6c_1} \right)^t,
\end{align*}
for all $t\ge 0$, where $c_0$ and $c_1$ are the constants in Lemma \ref{lem:eig-W}. 
\end{proposition}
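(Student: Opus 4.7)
The plan is to carry out the standard power-iteration convergence analysis, tailored to the spectral gap supplied by Lemma~\ref{lem:eig-W}.

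\emph{First}, I would unroll the recursion. Since each normalization in lines 6--8 of Algorithm~\ref{alg-pmgpm} scales by a positive constant, a simple induction on $t$ gives $\by^t = \bW^t\by^0/\|\bW^t\by^0\|$. As $\bW$ is symmetric, it admits an orthonormal eigendecomposition $\{(\lambda_i,\bu_i)\}_{i=1}^n$ with $\lambda_1\ge\cdots\ge\lambda_n$. Expanding the initial vector in this basis as $\by^0=\sum_{i=1}^n\alpha_i\bu_i$, so that $\sum_i\alpha_i^2=1$, yields
$$\bW^t\by^0 = \alpha_1\lambda_1^t\bu_1 + \bm{r}_t, \qquad \bm{r}_t := \sum_{i\ge 2}\alpha_i\lambda_i^t\bu_i,$$
with $\bm{r}_t\perp\bu_1$ and $\|\bm{r}_t\|^2=\sum_{i\ge 2}\alpha_i^2\lambda_i^{2t}\le\bigl(\max_{i\ge 2}|\lambda_i|\bigr)^{2t}$.

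\emph{Second}, I would invoke Lemma~\ref{lem:eig-W}: on an event of probability at least $1-n^{-\Omega(1)}$, for all sufficiently large $n$ the magnitude of $\lambda_1$ dominates, and
$$r := \frac{\max_{i\ge 2}|\lambda_i|}{|\lambda_1|} \le \frac{3c_1\sqrt{\log n}}{(c_0/2)\log n - 3c_1\sqrt{\log n}} = \frac{6c_1}{c_0\sqrt{\log n}-6c_1}.$$

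\emph{Third}, I would establish the anti-concentration bound $|\alpha_1|\ge 1/n$ on an event of probability $1-n^{-\Omega(1)}$. Since $\by^0$ is drawn uniformly from the unit sphere in $\R^n$ independently of $\bA$ (and hence of $\bu_1$), $\alpha_1=\langle\bu_1,\by^0\rangle$ has the distribution of the first coordinate of a uniform unit vector in $\R^n$, whose density is uniformly bounded above by $\sqrt{n/(2\pi)}$. Integrating over $[-1/n,1/n]$ then gives $\P(|\alpha_1|\le 1/n)=O(n^{-1/2})$.

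\emph{Fourth}, intersecting the two events above and setting $a:=\alpha_1\lambda_1^t$ and $\theta^*:=\sign(a)\in\{\pm 1\}$, I would carry out a direct geometric computation. Using $\bm{r}_t\perp\bu_1$ and $\|\bu_1\|=1$,
$$\|\by^t-\theta^*\bu_1\|^2 = 2-\frac{2|a|}{\sqrt{a^2+\|\bm{r}_t\|^2}} = \frac{2\|\bm{r}_t\|^2}{(a^2+\|\bm{r}_t\|^2)+|a|\sqrt{a^2+\|\bm{r}_t\|^2}} \le \frac{\|\bm{r}_t\|^2}{a^2},$$
where the middle equality follows from $\sqrt{x^2+y^2}-|x|=y^2/(\sqrt{x^2+y^2}+|x|)$ and the final inequality uses that the denominator is at least $2a^2$. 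Substituting $\|\bm{r}_t\|\le r^t|\lambda_1|^t$ and $|\alpha_1|\ge 1/n$ yields
$$\min_{\theta\in\{\pm 1\}}\|\by^t-\theta\bu_1\| \le \frac{\|\bm{r}_t\|}{|\alpha_1|\,|\lambda_1|^t} \le n r^t,$$
which is exactly the claimed bound.

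The main obstacle lies in the fourth step: naively, one must combine the parallel-component error $(1-|a|/\|\bW^t\by^0\|)\bu_1$ and the perpendicular residual $\bm{r}_t/\|\bW^t\by^0\|$ without inflating the constant. The geometric identity above precisely absorbs both contributions into $\|\bm{r}_t\|/|a|$, yielding the clean factor $nr^t$ with no loss. The other delicate ingredient is the anti-concentration of $\alpha_1$, which hinges critically on $\by^0$ being generated independently of $\bA$ so that the uniform-sphere marginal applies; without this independence, $\bu_1$ could be aligned adversarially with the initial iterate.
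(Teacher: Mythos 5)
Your proposal is correct and follows essentially the same route as the paper: unroll the power iteration, expand $\by^0$ in the eigenbasis of $\bW$, invoke the eigenvalue gap from Lemma~\ref{lem:eig-W}, and control the overlap of $\by^0$ with $\bu_1$ (the paper cites \citet[Lemma 4]{wang2020nearly} for $\sum_{i\ge 2}(b_i/b_1)^2\le n^2/2$, which is the same anti-concentration fact you derive directly from the marginal density on the sphere). Your geometric identity for $\|\by^t-\theta^*\bu_1\|^2$ is just a repackaging of the paper's bound $2(1-\langle\bu_1,\by^t\rangle)\le 2(1-\langle\bu_1,\by^t\rangle^2)$, so no substantive difference remains.
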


\begin{figure*}
\begin{subfigure}[b]{0.180\textwidth}
	\centering
	\includegraphics[width=\textwidth]{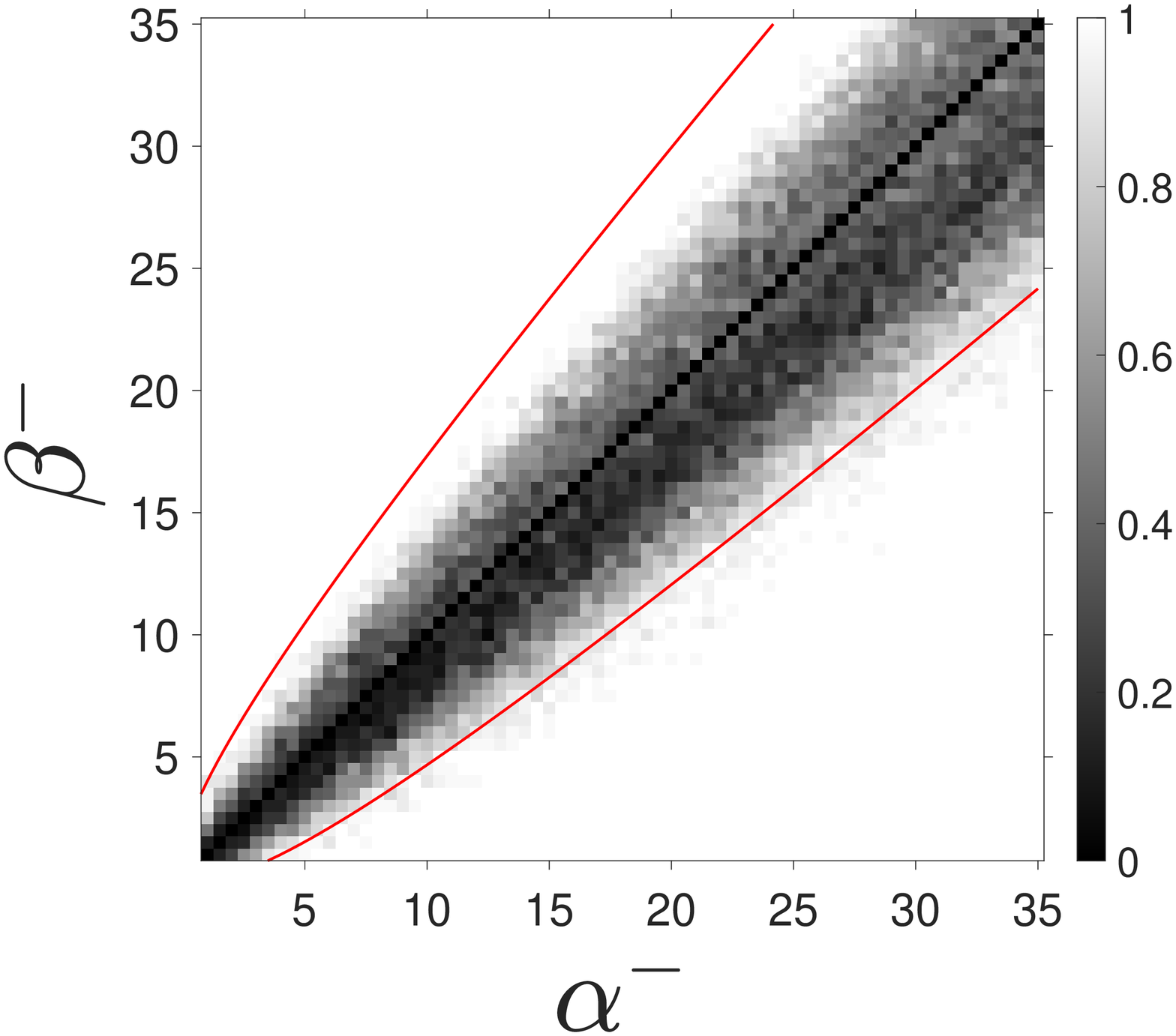}
	\caption{SGPI}
\end{subfigure}
\hfill
\begin{subfigure}[b]{0.180\textwidth}
	\centering
	\includegraphics[width=\textwidth]{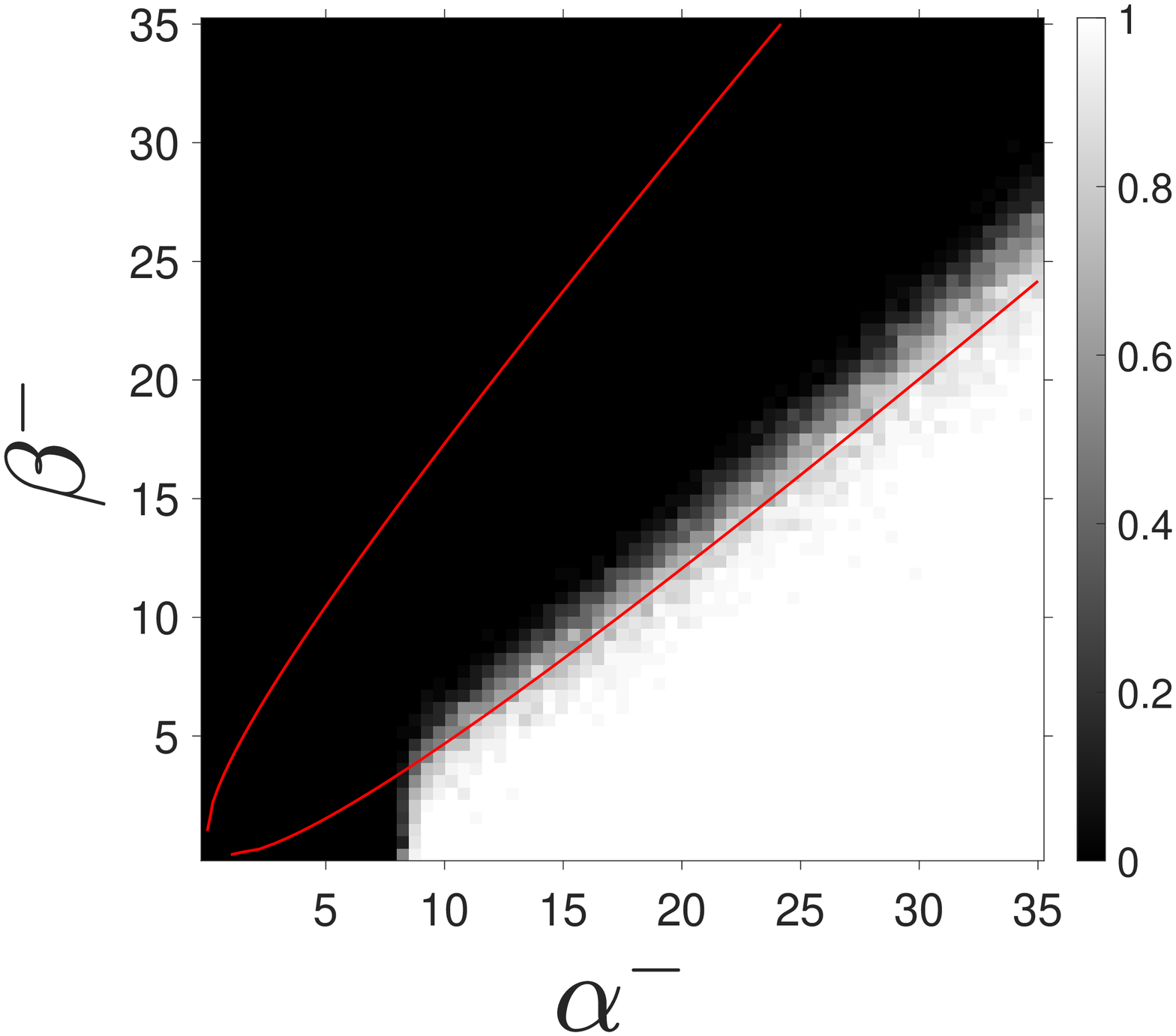}
	\caption{SRC}
\end{subfigure}
\hfill
\begin{subfigure}[b]{0.180\textwidth}
	\centering
	\includegraphics[width=\textwidth]{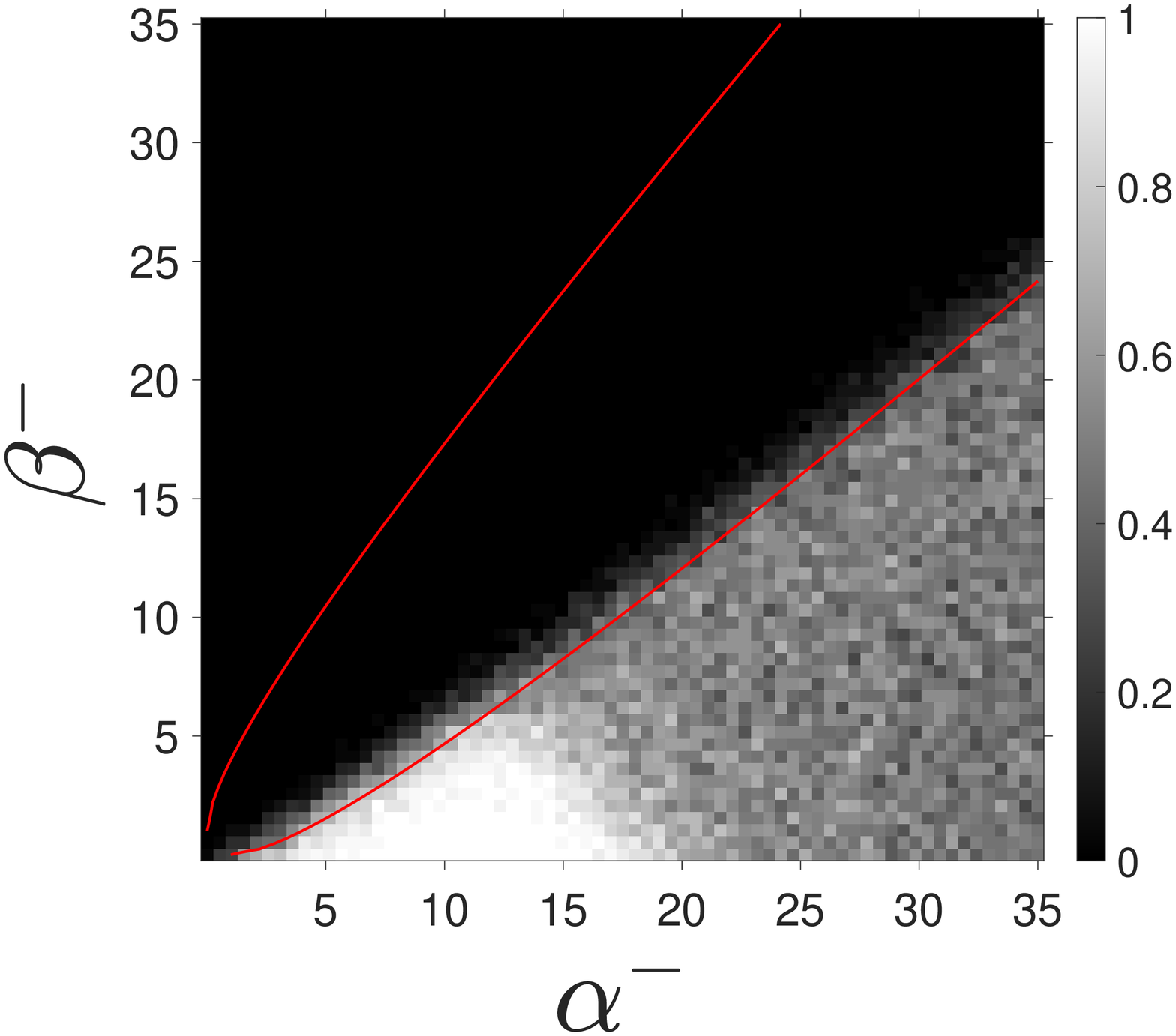}
	\caption{SPONGE}
\end{subfigure}
\hfill
\begin{subfigure}[b]{0.180\textwidth}
	\centering
	\includegraphics[width=\textwidth]{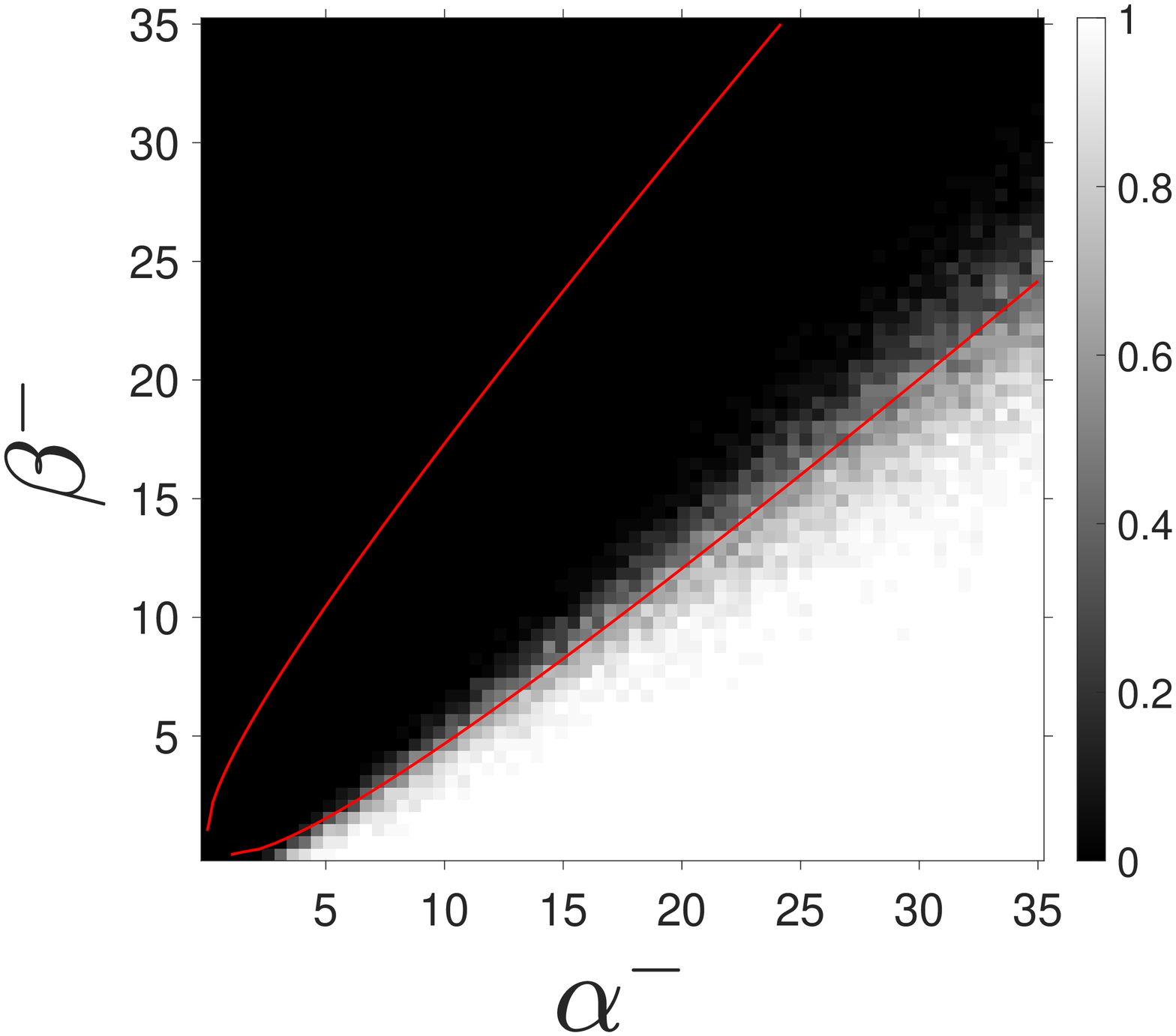}
	\caption{$\text{SPM}_{p=0}$}
\end{subfigure}
\hfill
\begin{subfigure}[b]{0.180\textwidth}
	\centering
	\includegraphics[width=\textwidth]{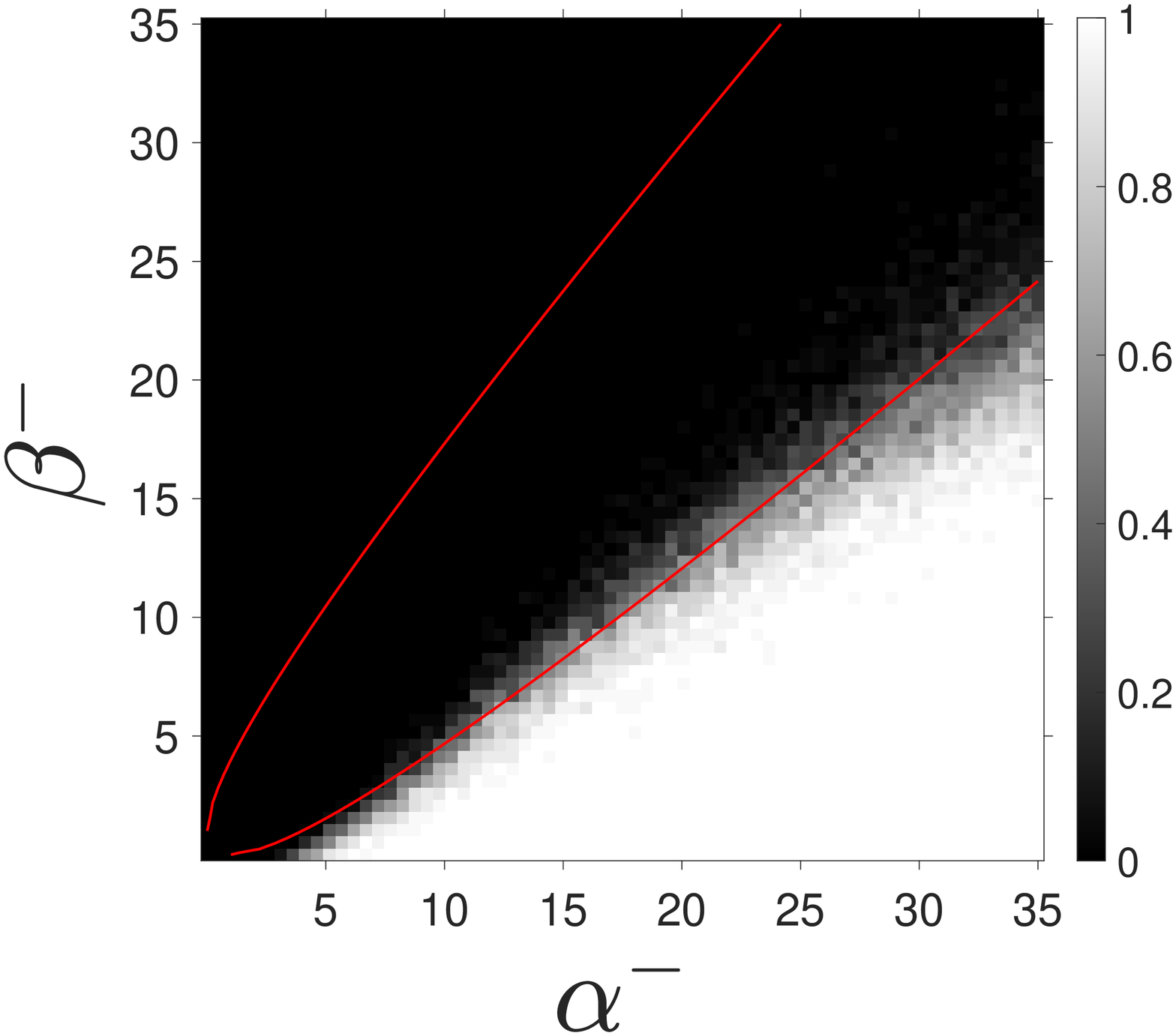}
	\caption{$\text{SPM}_{p=-10}$}
\end{subfigure}
\hfill
\caption{Exact recovery ratios for different values of $\alpha^-$ and $\beta^-$ with an increment of 0.5 (fixing $n=300$, $\alpha^+ = 16$ and $\beta^+=9$). The information-theoretic threshold $(\sqrt{\alpha^-}-\sqrt{\beta^-})^2=1$ is plotted in red.}
\label{fig-alpha-_vs_beta-}
\end{figure*}

\begin{figure*}
\begin{subfigure}[b]{0.180\textwidth}
	\centering
	\includegraphics[width=\textwidth]{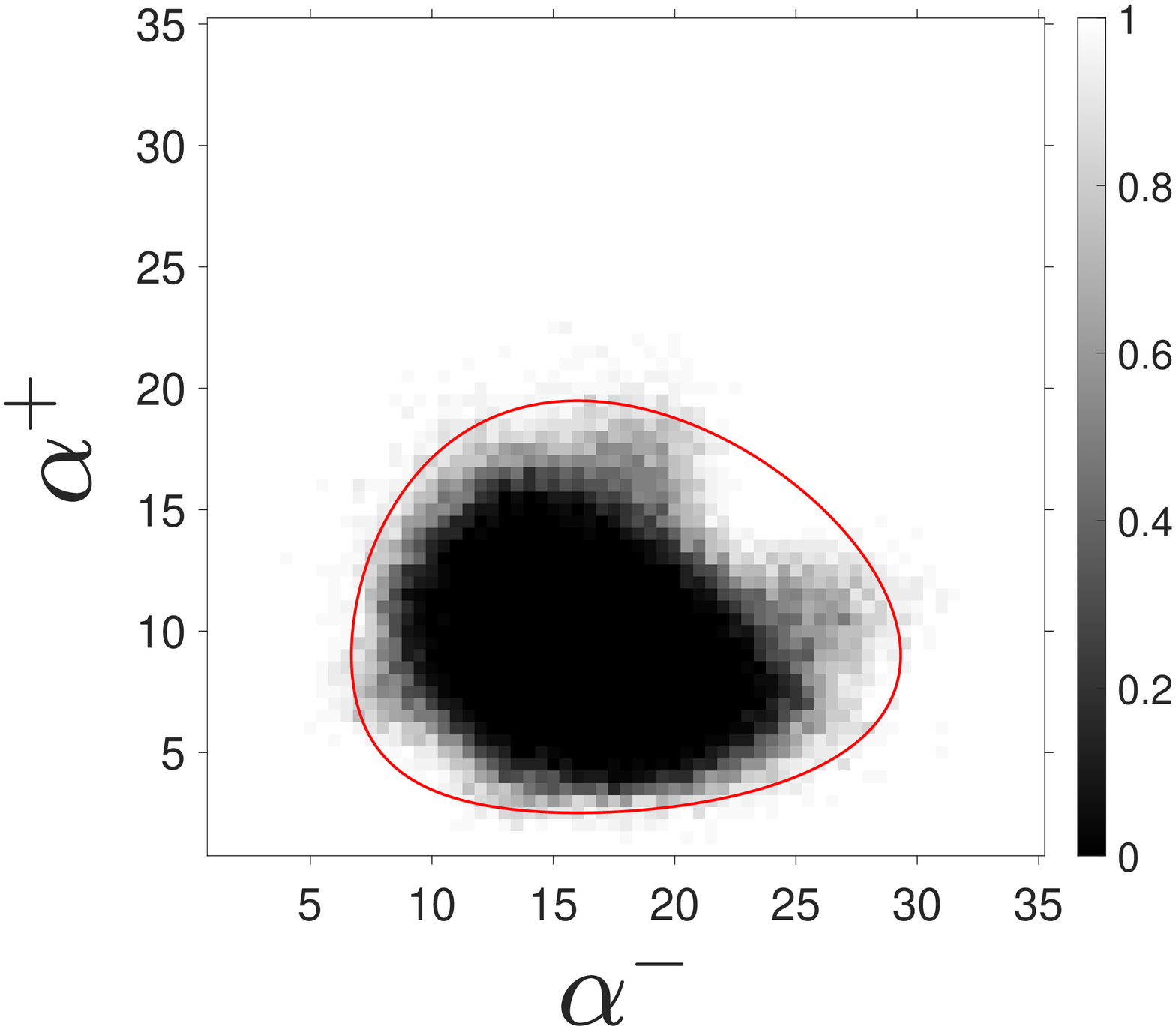}
	\caption{SGPI}
\end{subfigure}
\hfill
\begin{subfigure}[b]{0.180\textwidth}
	\centering
	\includegraphics[width=\textwidth]{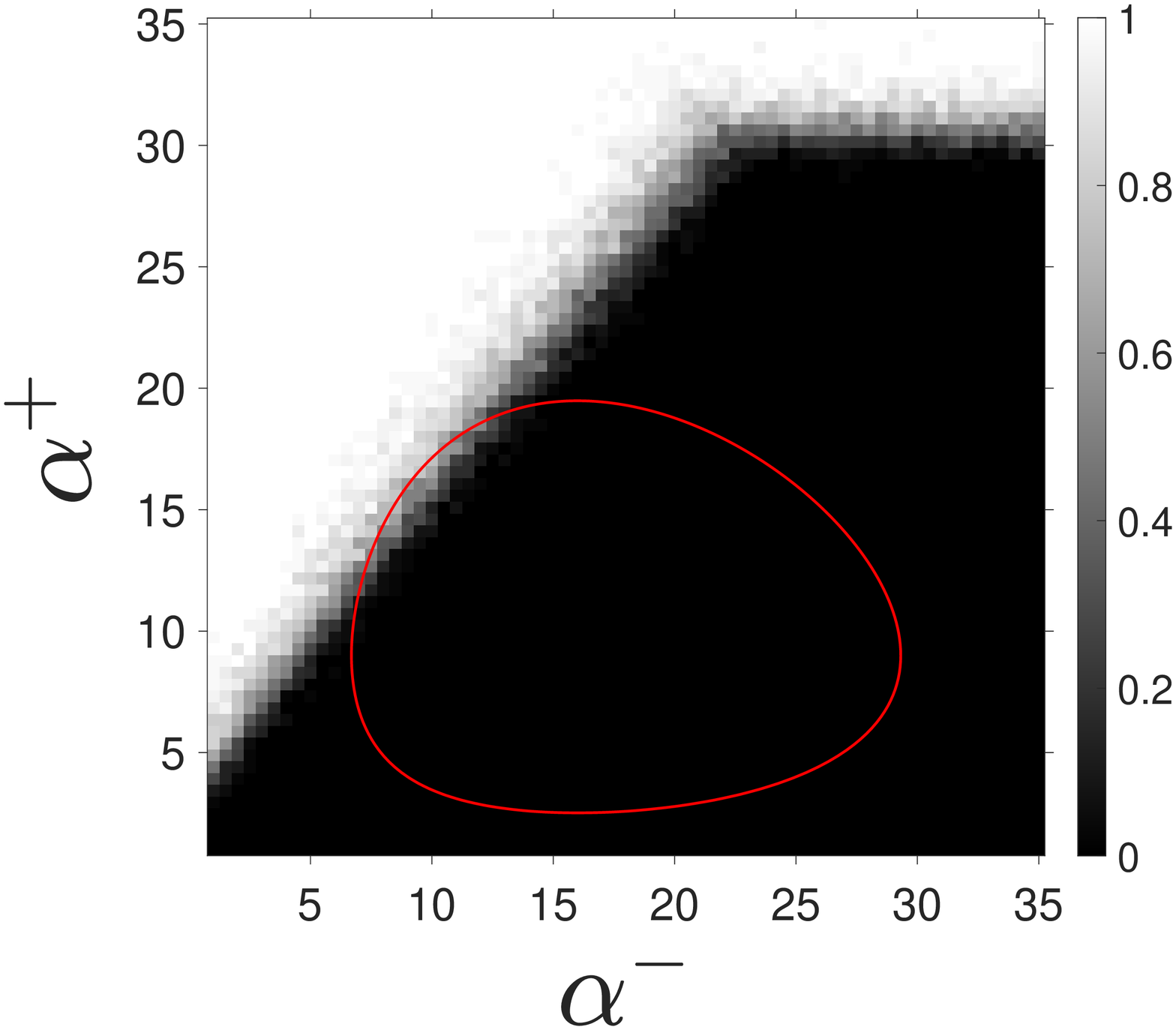}
	\caption{SRC}
\end{subfigure}
\hfill
\begin{subfigure}[b]{0.180\textwidth}
	\centering
	\includegraphics[width=\textwidth]{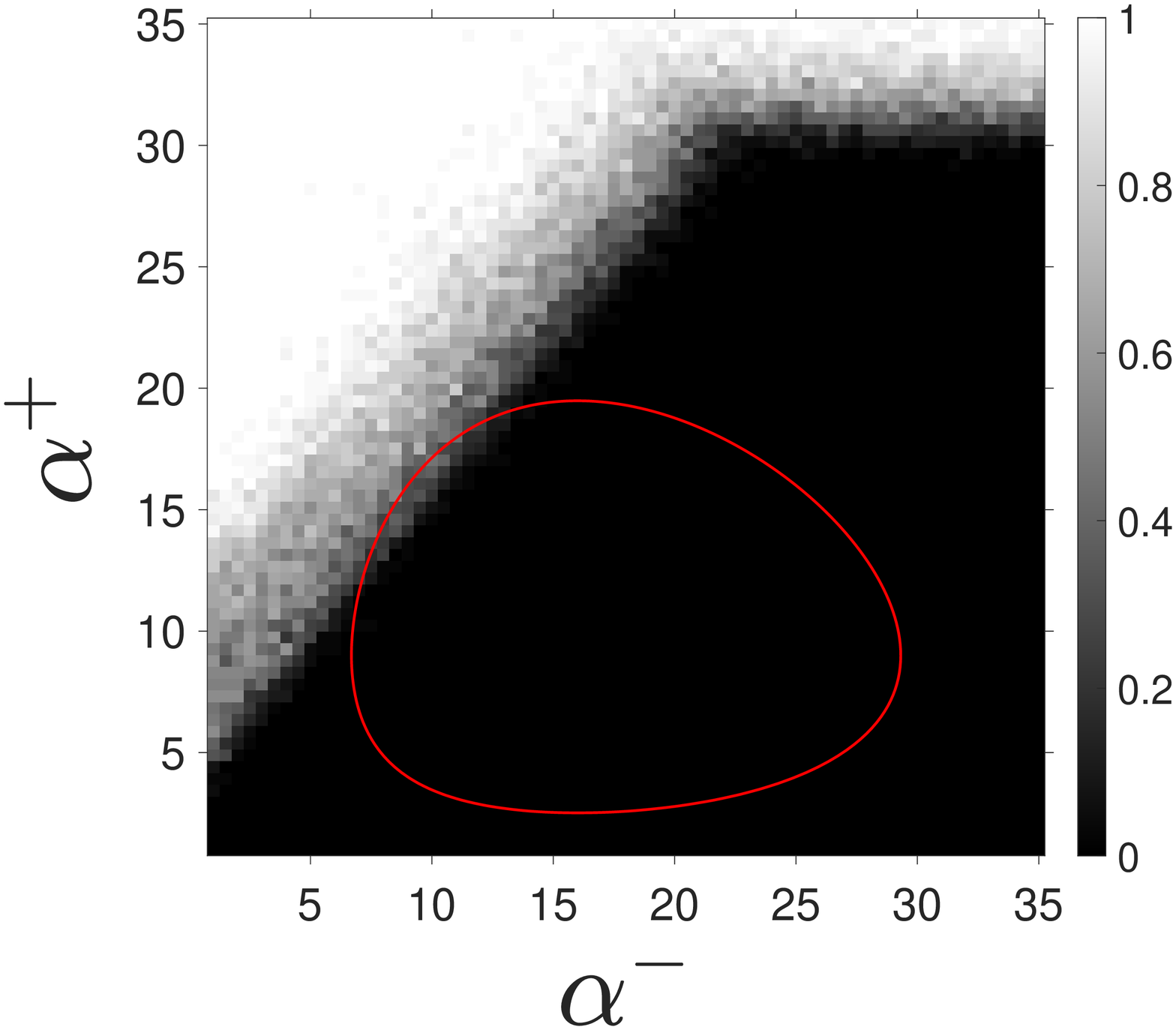}
	\caption{SPONGE}
\end{subfigure}
\hfill
\begin{subfigure}[b]{0.180\textwidth}
	\centering
	\includegraphics[width=\textwidth]{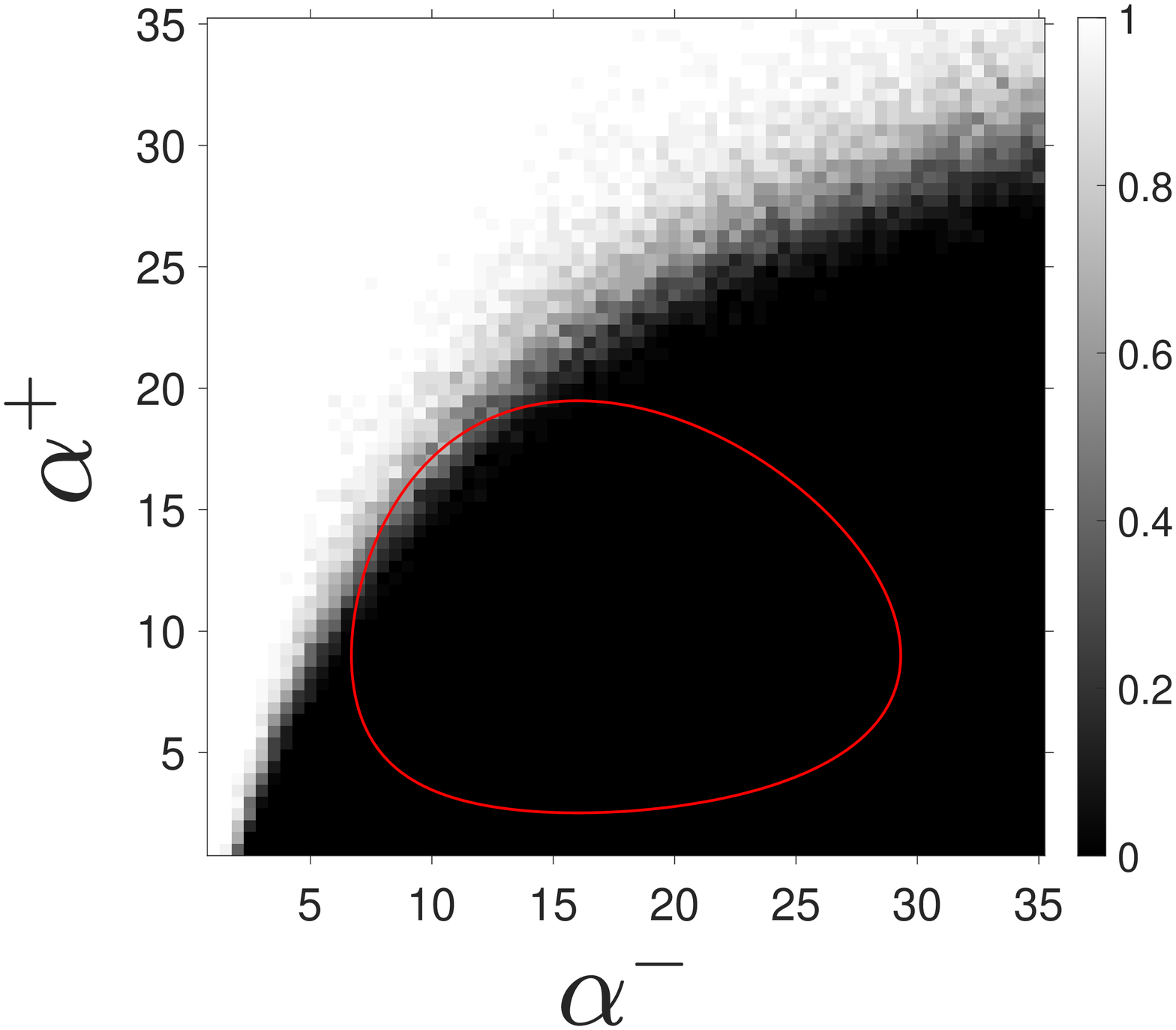}
	\caption{$\text{SPM}_{p=0}$}
\end{subfigure}
\hfill
\begin{subfigure}[b]{0.180\textwidth}
	\centering
	\includegraphics[width=\textwidth]{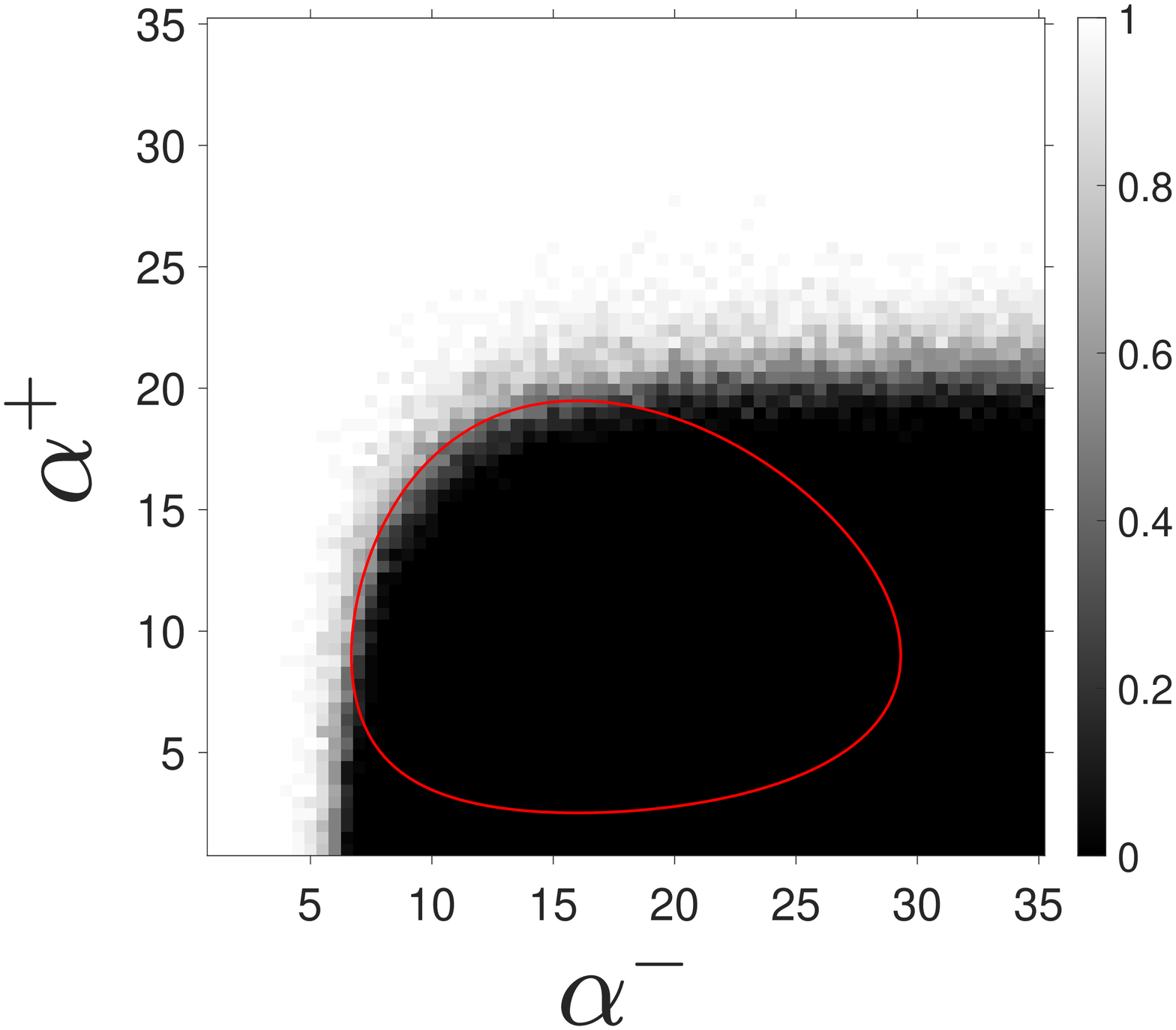}
	\caption{$\text{SPM}_{p=-10}$}
\end{subfigure}
\hfill
\caption{Exact recovery ratios for different values of $\alpha^+$ and $\alpha^-$ with an increment of 0.5 (fixing $n=300$, $\beta^+ = 9$ and $\beta^-=16$). The information-theoretic threshold $(\sqrt{\alpha^+}-3)^2+(\sqrt{\alpha^-}-4)^2=2$ is plotted in red.}
\label{fig-alpha+_vs_alpha-}
\end{figure*}

\subsection{Analysis of the Generalized Power Iterations}\label{subsec:GPM}

In this subsection, we study the convergence behavior of the GPIs. In particular, we establish the contraction property and finite termination property of the GPIs. We begin with a key lemma that says that the magnitude of each entry of $\bW\bx^*$ is at least in the order of $\log n$.

\begin{lemma}\label{lem:Wx}
Suppose that Assumption \ref{assump:base} holds and $\alpha^+,\beta^+,\alpha^-,\beta^-$ satisfy \eqref{IT:bound}. 
Then, for sufficiently large $n$, there exists a constant $\gamma>0$ such that the following inequality holds with probability at least $1-n^{-\Omega(1)}$:
\begin{align}
	&\min\{x_i^*(\bm{W}\bm{x}^*)_i:i=1,\dots,n\} \geq \gamma\log n. \label{rst:lem:Wx}
\end{align}
\end{lemma}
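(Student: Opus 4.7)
The plan is to reduce the problem to a one-sided large deviation for each coordinate of $\widetilde{\bA}\bx^{*}$, apply a Chernoff-type bound whose rate function is tailored to the choice of $\xi$, and close with a union bound over $i\in[n]$.

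First I would simplify $(\bW\bx^{*})_{i}$. Since the two communities have equal size, $\bo^{\top}\bx^{*}=0$, so $\rho\bE\bx^{*}=\bm{0}$ and $(\bW\bx^{*})_{i}=(\widetilde{\bA}\bx^{*})_{i}$. Writing $S_{i}:=x_{i}^{*}(\widetilde{\bA}\bx^{*})_{i}=\sum_{j\neq i}x_{i}^{*}x_{j}^{*}(A_{ij}^{+}-\xi A_{ij}^{-})$ and splitting the sum by whether $j$ lies in $i$'s community, $S_{i}$ becomes a sum of $n-1$ independent random variables (conditional on $\bx^{*}$): for same-community $j$, the term is drawn from $\{1,-\xi,0\}$ with probabilities $(p^{+},p^{-},1-p^{+}-p^{-})$, while for different-community $j$ the term is drawn from $\{-1,\xi,0\}$ with probabilities $(q^{+},q^{-},1-q^{+}-q^{-})$. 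A routine Taylor expansion using \eqref{p+-}--\eqref{q+-} gives $\E[S_{i}]=\tfrac{\log n}{2}\bigl[(\alpha^{+}-\beta^{+})+\xi(\beta^{-}-\alpha^{-})\bigr]+O(\log^{2}n/n)=\Theta(\log n)$, so the claim is a concentration statement.

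Next I would apply a Chernoff bound to $-S_{i}$. For $t>0$,
\begin{equation*}
\P(S_{i}\le \gamma\log n)\le e^{t\gamma\log n}\,\E[e^{-tS_{i}}],
\end{equation*}
and by independence and the same Taylor expansion,
\begin{equation*}
\E[e^{-tS_{i}}]\le \exp\!\Bigl(\tfrac{\log n}{2}\bigl[\alpha^{+}(e^{-t}-1)+\beta^{+}(e^{t}-1)+\alpha^{-}(e^{t\xi}-1)+\beta^{-}(e^{-t\xi}-1)\bigr]+o(\log n)\Bigr).
\end{equation*}
The key algebraic step is to notice that the definition $\xi=\log(\beta^{-}/\alpha^{-})/\log(\alpha^{+}/\beta^{+})$ is precisely what makes the tilt $t^{*}=\tfrac{1}{2}\log(\alpha^{+}/\beta^{+})$ simultaneously satisfy the two stationarity conditions $e^{-t^{*}}=\sqrt{\beta^{+}/\alpha^{+}}$ and $e^{t^{*}\xi}=\sqrt{\beta^{-}/\alpha^{-}}$. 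Plugging this $t^{*}$ into the bracket collapses it to $-(\sqrt{\alpha^{+}}-\sqrt{\beta^{+}})^{2}-(\sqrt{\alpha^{-}}-\sqrt{\beta^{-}})^{2}$, yielding
\begin{equation*}
\P(S_{i}\le \gamma\log n)\le n^{-\tfrac{1}{2}\left[(\sqrt{\alpha^{+}}-\sqrt{\beta^{+}})^{2}+(\sqrt{\alpha^{-}}-\sqrt{\beta^{-}})^{2}\right]+t^{*}\gamma+o(1)}.
\end{equation*}
Under the information-theoretic condition \eqref{IT:bound}, the leading exponent is at least $1$, so by choosing $\gamma>0$ sufficiently small (and, if needed, perturbing $t$ slightly off $t^{*}$ to absorb the $t^{*}\gamma$ term) the exponent becomes $-1-\delta$ for some $\delta>0$. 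A union bound over $i\in[n]$ then gives $\P(\min_{i}S_{i}<\gamma\log n)\le n\cdot n^{-1-\delta}=n^{-\Omega(1)}$, which is the conclusion.

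The main obstacle is the Chernoff-rate computation in the previous paragraph: one must justify that the $O(\log^{2}n/n)$ and other lower-order terms in the cumulant expansion do not spoil the optimized rate, and that the quadratic-like behavior of the rate function near $t^{*}$ lets us pick a strictly positive $\gamma$ without losing the $1+\delta$ exponent. The second subtlety is that the information-theoretic bound is stated with a weak inequality, so a small amount of slack must be extracted either by assuming non-degeneracy of $(\alpha^{+},\beta^{+},\alpha^{-},\beta^{-})$ or by a sharper analysis that keeps lower-order corrections; the fact that $\alpha^{+}>\beta^{+}$ and $\beta^{-}>\alpha^{-}$ from Assumption~\ref{assump:base} ensures $t^{*}>0$ and $\xi>0$, so the tilted MGF is well behaved and such slack is available.
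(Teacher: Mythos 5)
Your proposal is correct and follows essentially the same route as the paper's proof: the same reduction $(\bW\bx^*)_i=(\widetilde{\bA}\bx^*)_i$ via $\bE\bx^*=\bm{0}$, the same decomposition into independent ternary variables, the same Chernoff bound with the tilt $t^*=\tfrac{1}{2}\log(\alpha^+/\beta^+)$ (the paper's $\lambda=-t^*$), the same collapse of the rate to $-\tfrac{1}{2}(\sqrt{\alpha^+}-\sqrt{\beta^+})^2-\tfrac{1}{2}(\sqrt{\alpha^-}-\sqrt{\beta^-})^2$, and a union bound. The lower-order terms you flag are handled in the paper exactly as you suggest, via $\log(1+x)\le x$, and the slack issue at equality in \eqref{IT:bound} is present in the paper's argument as well (its admissible interval for $\gamma'$ is nonempty only when the inequality is strict).
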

We remark that this lemma cannot be directly implied by \citet[Lemma 8]{abbe2020entrywise}, since it concerns categorical random variables rather than Bernoulli random variables. Moreover, this lemma holds at the information-theoretic limit, which is critical to the optimal recovery performance of Algorithm \ref{alg-pmgpm} and could be of independent interest. Next, we present a lemma that shows the contraction property of the GPIs.
\begin{proposition}\label{prop:contraction}
Suppose that Assumption \ref{assump:base} holds and $\alpha^+,\beta^+,\alpha^-,\beta^-$ satisfy \eqref{IT:bound}.
Let $\bx \in \R^n$ be such that $\|\bm{x}\|=\sqrt{n}$. Then, for sufficiently large $n$, either of the following two statements holds with probability at least $1-n^{-\Omega(1)}$: \\
(i) If $\|\bm{x}-\bm{x}^*\| \leq 2c_2\sqrt{{n}/{\log n}}$,
then we have
\begin{align*}
	\left\| \frac{\bW\bx}{|\bW\bx|}-\bm{x^*}\right\| \le \frac{2c_3}{\gamma\sqrt{\log n}}\|\bm{x}-\bm{x}^*\|.
\end{align*}
(ii) If $\|\bm{x}+\bm{x}^*\| \leq 2c_2\sqrt{{n}/{\log n}}$,
then we have
\begin{align*}
	\left\| \frac{\bW\bx}{|\bW\bx|}+\bm{x^*}\right\| \le \frac{2c_3}{\gamma\sqrt{\log n}}\|\bm{x}+\bm{x}^*\|,
\end{align*}
where $c_3 = (6\sqrt{2}+1)c_1+2|\alpha^+-\xi \alpha^-|+ 1+\xi$, $c_1, c_2$ are the constants in Lemma \ref{lem:eig-W}, and $\gamma>0$ is the constant in Lemma \ref{lem:Wx}.
\end{proposition}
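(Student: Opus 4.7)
The plan is to convert the claim into a count of sign mismatches between the $\{1,-1\}^n$-vectors $\bW\bx/|\bW\bx|$ and $\bx^*$:
\begin{equation*}
\left\|\frac{\bW\bx}{|\bW\bx|} - \bx^*\right\|^2 = 4\,\bigl|\{i : \sign((\bW\bx)_i) \neq x_i^*\}\bigr|.
\end{equation*}
Splitting $\bW\bx = \bW\bx^* + \bW(\bx - \bx^*)$ and invoking Lemma \ref{lem:Wx}, which gives the entrywise margin $x_i^*(\bW\bx^*)_i \geq \gamma \log n$, a sign mismatch at $i$ forces $|(\bW(\bx - \bx^*))_i| \geq \gamma \log n$. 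A Markov-type bound then yields
\begin{equation*}
\left\|\frac{\bW\bx}{|\bW\bx|} - \bx^*\right\| \leq \frac{2\,\|\bW(\bx - \bx^*)\|}{\gamma\log n},
\end{equation*}
which reduces the contraction estimate to proving $\|\bW(\bx - \bx^*)\| \leq c_3 \sqrt{\log n}\,\|\bx - \bx^*\|$.

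To bound $\|\bW(\bx - \bx^*)\|$, I would write $\bW = \E[\bW] + (\bW - \E[\bW])$. A direct computation from Definition \ref{def-ssbm} gives, for $i \neq j$, $\E[\tbA]_{ij} = a + b\,x_i^* x_j^*$ with $a = \tfrac12[(p^+ - \xi p^-) + (q^+ - \xi q^-)]$ and $|b| = c_0\log n/(2n)$, so $\E[\bW] = b\, \bx^* \bx^{*\top}$ up to a diagonal correction and an $O(\log n/n)$ multiple of $\bE$ coming from the centering $\rho\bE$. Applied to $\bx - \bx^*$, the rank-one signal produces $-\tfrac{b}{2}\|\bx - \bx^*\|^2\, \bx^*$ via the identity $\bx^{*\top}(\bx - \bx^*) = -\tfrac12 \|\bx - \bx^*\|^2$ that follows from $\|\bx\| = \|\bx^*\| = \sqrt{n}$. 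The hypothesis $\|\bx - \bx^*\| \leq 2 c_2 \sqrt{n/\log n}$ converts this apparently quadratic term into one of order $\sqrt{\log n}\,\|\bx - \bx^*\|$, accounting for the $2|\alpha^+ - \xi \alpha^-|$ contribution to $c_3$. The stochastic part $\bW - \E[\bW]$ is controlled by the sparse SBM spectral-norm concentration underlying Lemma \ref{lem:eig-W}, giving the $(6\sqrt{2}+1)c_1\sqrt{\log n}$ factor, while the entrywise bound $|\tbA_{ij}| \leq 1 + \xi$ together with concentration of $\rho$ around $a$ yields the residual $1 + \xi$ summand.

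The main obstacle is keeping the constants sharp. The rank-one signal is genuinely quadratic in $\|\bx - \bx^*\|$, and only the precise hypothesis $\|\bx - \bx^*\| \leq 2 c_2 \sqrt{n/\log n}$ turns it into a $\sqrt{\log n}$-factor linear bound, which is what produces a genuine $1/\sqrt{\log n}$ contraction rather than trivial inflation. The stochastic-norm estimate must also be tight enough that, when combined with the $\gamma\log n$ entrywise margin from Lemma \ref{lem:Wx} (itself available only at the information-theoretic limit), a true contraction survives. Finally, statement (ii) follows from (i) by replacing $\bx$ by $-\bx$ and noting that, outside the null event where any coordinate of $\bW\bx$ vanishes, $\bW(-\bx)/|\bW(-\bx)| = -\bW\bx/|\bW\bx|$, so the bound on $\|\bW\bx/|\bW\bx| + \bx^*\|$ reduces to the bound on $\|\bW(-\bx)/|\bW(-\bx)| - \bx^*\|$ already covered by (i).
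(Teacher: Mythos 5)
Your proposal is correct and follows essentially the same route as the paper: the sign-mismatch/margin reduction you derive, giving $\bigl\|\bW\bx/|\bW\bx| - \bx^*\bigr\| \le 2\|\bW(\bx-\bx^*)\|/(\gamma\log n)$, is exactly the content of \citet[Lemma~6]{wang2020nearly} which the paper invokes, and your decomposition of $\bW$ into the rank-one mean, the $\bE$/$\bI$ corrections, the fluctuation $\tbA-\E[\tbA]$, and the $\rho$-centering error, combined with the identity $\bx^{*\top}(\bx-\bx^*)=-\tfrac12\|\bx-\bx^*\|^2$ and the radius hypothesis to linearize the quadratic term, is precisely the paper's argument. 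The only slip is bookkeeping: the linearized rank-one term contributes the $12\sqrt{2}\,c_1$ portion of $2c_3$ (the $c_0$ cancels against $c_2=12\sqrt{2}c_1/c_0$), whereas the $4|\alpha^+-\xi\alpha^-|$ summand comes from the $\tfrac{1}{n}(p^+-\xi p^-)\bE$ and $(p^+-\xi p^-)\bI$ corrections — you have these two attributions swapped, but this does not affect the validity of the argument.
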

Proposition \ref{prop:contraction} indicates that the iterates $\bm{x}^t$ in the second stage of Algorithm \ref{alg-pmgpm} converge linearly to either $\bm{x}^*$ or $-\bm{x}^*$ within the contraction region with radius $\mathcal{O}(\sqrt{n/\log n})$. Then, the following proposition characterizes a smaller one-step convergence region with radius 2, which further implies the finite termination of the GPIs.
\begin{proposition}\label{prop-one-step-conv}
Suppose that Assumption \ref{assump:base} holds and $\alpha^+,\beta^+,\alpha^-,\beta^-$ satisfy \eqref{IT:bound}. Let $\bx\in \{1,-1\}^n$. Then, for sufficiently large $n$, either of the following two statements holds with probability at least $1-n^{-\Omega(1)}$:\\
(i) If $\|\bx-\bx^*\| \le 2$, then ${\bW\bx}/{|\bW\bx|} = \bx^*$; \\
(ii) If $\|\bx+\bx^*\| \le 2$, then  ${\bW\bx}/{|\bW\bx|} = -\bx^*$.
\end{proposition}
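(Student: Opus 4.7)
The plan is to exploit the fact that $\bx \in \{1,-1\}^n$ and $\|\bx - \bx^*\| \le 2$ together force $\bx$ and $\bx^*$ to differ in at most one coordinate. Indeed, if $\bx$ and $\bx^*$ disagree at $k$ indices, then $\|\bx-\bx^*\|^2 = 4k$, so $\|\bx-\bx^*\| \le 2$ implies $k \in \{0,1\}$. In the case $k=0$ the claim follows immediately from Lemma \ref{lem:Wx}, so I may assume there exists a unique index $j$ with $x_j = -x^*_j$, giving $\bx - \bx^* = -2x^*_j \bm{e}_j$, where $\bm{e}_j$ is the $j$-th standard basis vector.

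Writing $\bW\bx = \bW\bx^* - 2 x^*_j \bW \bm{e}_j$, the goal reduces to showing $x^*_i (\bW\bx)_i > 0$ for every $i \in [n]$, which then gives $\bW\bx/|\bW\bx| = \bx^*$ by the definition of $\bv/|\bv|$ in the notation section. Expanding,
\begin{equation*}
x^*_i (\bW\bx)_i \;=\; x^*_i (\bW\bx^*)_i \;-\; 2 x^*_i x^*_j W_{ij}.
\end{equation*}
By Lemma \ref{lem:Wx}, with probability at least $1 - n^{-\Omega(1)}$ the first term is at least $\gamma \log n$ for all $i$ simultaneously. It therefore suffices to show that $|W_{ij}|$ is negligible compared with $\log n$, uniformly in $i,j$.

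For the perturbation bound, recall $\bW = \tbA - \rho \bE$ with $\tbA = \bA^+ - \xi \bA^-$ and $\rho = \bo^\top \tbA \bo / n^2$. Each entry $\tilde{A}_{ij}$ lies in $\{-\xi, 0, 1\}$, so $|\tilde{A}_{ij}| \le \max\{1,\xi\}$. For $\rho$, Lemma \ref{lem:edges-triangles} controls $N^+$ and $N^-$, from which $|\bo^\top \tbA \bo| = |2N^+ - 2\xi N^-| = O(n \log n)$, hence $|\rho| = O(\log n / n) = o(1)$. Consequently $\max_{i,j} |W_{ij}| \le \max\{1,\xi\} + o(1)$, a constant in $n$. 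Putting the pieces together, for $n$ sufficiently large,
\begin{equation*}
x^*_i (\bW\bx)_i \;\ge\; \gamma \log n - 2\bigl(\max\{1,\xi\} + o(1)\bigr) \;>\; 0,
\end{equation*}
proving statement (i). Statement (ii) follows by applying (i) to $-\bx$, noting that $\bW(-\bx)/|\bW(-\bx)| = -\bW\bx/|\bW\bx|$ by oddness of the sign map on vectors with nonzero entries.

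I do not expect any real obstacle here: Lemma \ref{lem:Wx} already carries the full weight of the information-theoretic threshold (it is the place where $(\sqrt{\alpha^+}-\sqrt{\beta^+})^2+(\sqrt{\alpha^-}-\sqrt{\beta^-})^2 \ge 2$ is genuinely used), and the argument above only needs a crude $O(1)$ bound on a single entry $W_{ij}$ together with a union bound already absorbed into the conclusion of Lemma \ref{lem:Wx}. The mildly delicate point to watch is simply that the bound $x^*_i(\bW\bx^*)_i \ge \gamma \log n$ must hold \emph{simultaneously} for all $i$, which is already built into Lemma \ref{lem:Wx}; no additional union bound over the choice of flipped coordinate $j$ is required because the perturbation term is bounded deterministically.
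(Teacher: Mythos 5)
Your proposal is correct and follows essentially the same route as the paper: decompose $\bx=\bx^*\pm 2\bm{e}_j$, invoke Lemma \ref{lem:Wx} for the $\gamma\log n$ lower bound on $x_i^*(\bW\bx^*)_i$, and absorb the single-coordinate perturbation via an $O(1)$ bound on $|W_{ij}|$ (the paper bounds $\rho$ through Lemma \ref{lem:rho} rather than through the edge counts, but this is immaterial since a deterministic constant bound on $W_{ij}$ suffices either way).
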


The results in Section \ref{subsec:PM} and \ref{subsec:GPM} characterize the behavior of the iterates generated by Algorithm \ref{alg-pmgpm}. In summary, the sequence $\{\bm{y}^{t}\}_{t\geq 0}$ in the first stage converges to $\bm{u}_1$ up to a sign (Proposition \ref{prop-prop1}). Since $\bm{u}_1$ is close to $\bm{x}^*$ up to a sign (Lemma \ref{lem:eig-W}), the scaled iterate $\sqrt{n}\bm{y}^{t}$ will fall into the region of contraction around $\bm{x}^*$ or $-\bm{x}^*$ after sufficiently many PIs. Then, the PIs are switched to GPIs that will further approach $\bm{x}^*$  or $-\bm{x}^*$ (Proposition \ref{prop:contraction}). The GPI terminates at $\bx^*$ or $-\bm{x}^*$ once the iterate $\bm{x}^t$ goes into the one-step convergence region (Proposition \ref{prop-one-step-conv}). Consequently, we can analyze the overall iteration and time complexities of Algorithm \ref{alg-pmgpm}. The proofs of Theorem \ref{thm-iter-comp} and Corollary \ref{coro-coro2} are deferred to Section \ref{proof-thm-coro} in the supplementary material. 

\section{NUMERICAL EXPERIMENTS}
In this section, we present the numerical results of our method. We validate the efficacy and efficiency of our Algorithm \ref{alg-pmgpm}, which is referred to as SGPI in this section, by comparing it with three other state-of-the-art algorithms for signed graph clustering, which are Signed Ratio Cut (SRC) \citep{kunegis2010spectral}, Signed Positive Over Negative Generalized Eigenproblem (SPONGE) \citep{cucuringu2019sponge}, and Signed Power Means (SPM) \citep{mercado2019spectral}. Instead of using the ture connectivity parameters, we implement SGPI based on the estimated parameters given by \eqref{eq:alpha^+_hat}--\eqref{eq:beta^-_hat} and \eqref{eq:xi-hat}. SPM contains the matrix power $p$ as a hyperparameter. We take two typical values $p=0$ and $p=-10$ for SPM, which are referred to as $\text{SPM}_{p=0}$ and $\text{SPM}_{p=-10}$, respectively. SPONGE has two hyperparameters $\tau^+$ and $\tau^-$, which we set as $\tau^+=10$ and $\tau^-=1$. Our codes are implemented in MATLAB. For SPM, we directly use the MATLAB code provided by \cite{mercado2019spectral}. All the experiments are conducted on a MacBook with 2.3GHz Intel Core i5 CPU and 8GB memory.

\begin{figure*}
\begin{subfigure}[b]{0.190\textwidth}
	\centering
	\includegraphics[width=\textwidth]{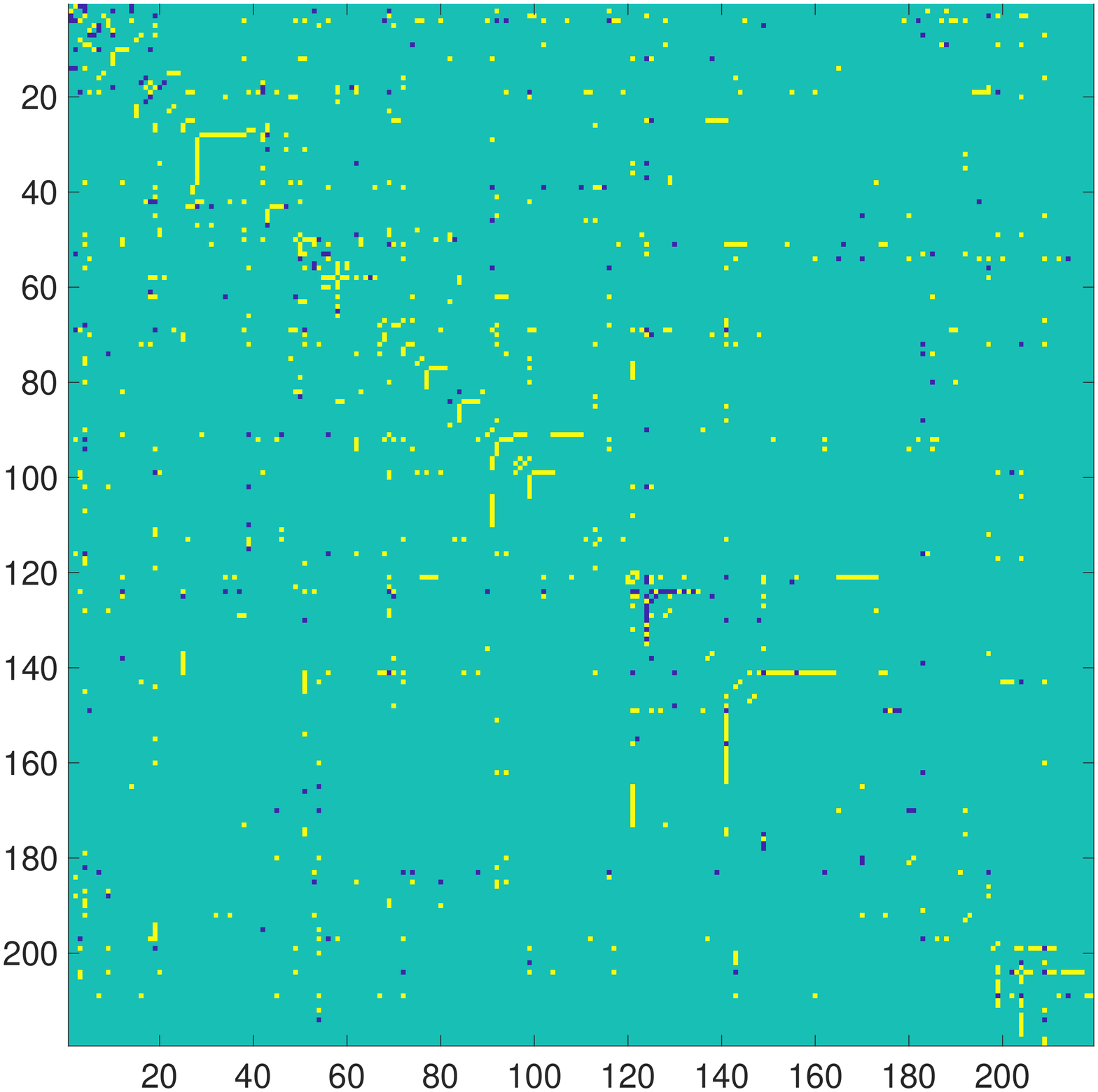}
	\caption{Congress data}
	\label{fig-real-graphs-original}
\end{subfigure}
\hfill
\begin{subfigure}[b]{0.190\textwidth}
	\centering
	\includegraphics[width=\textwidth]{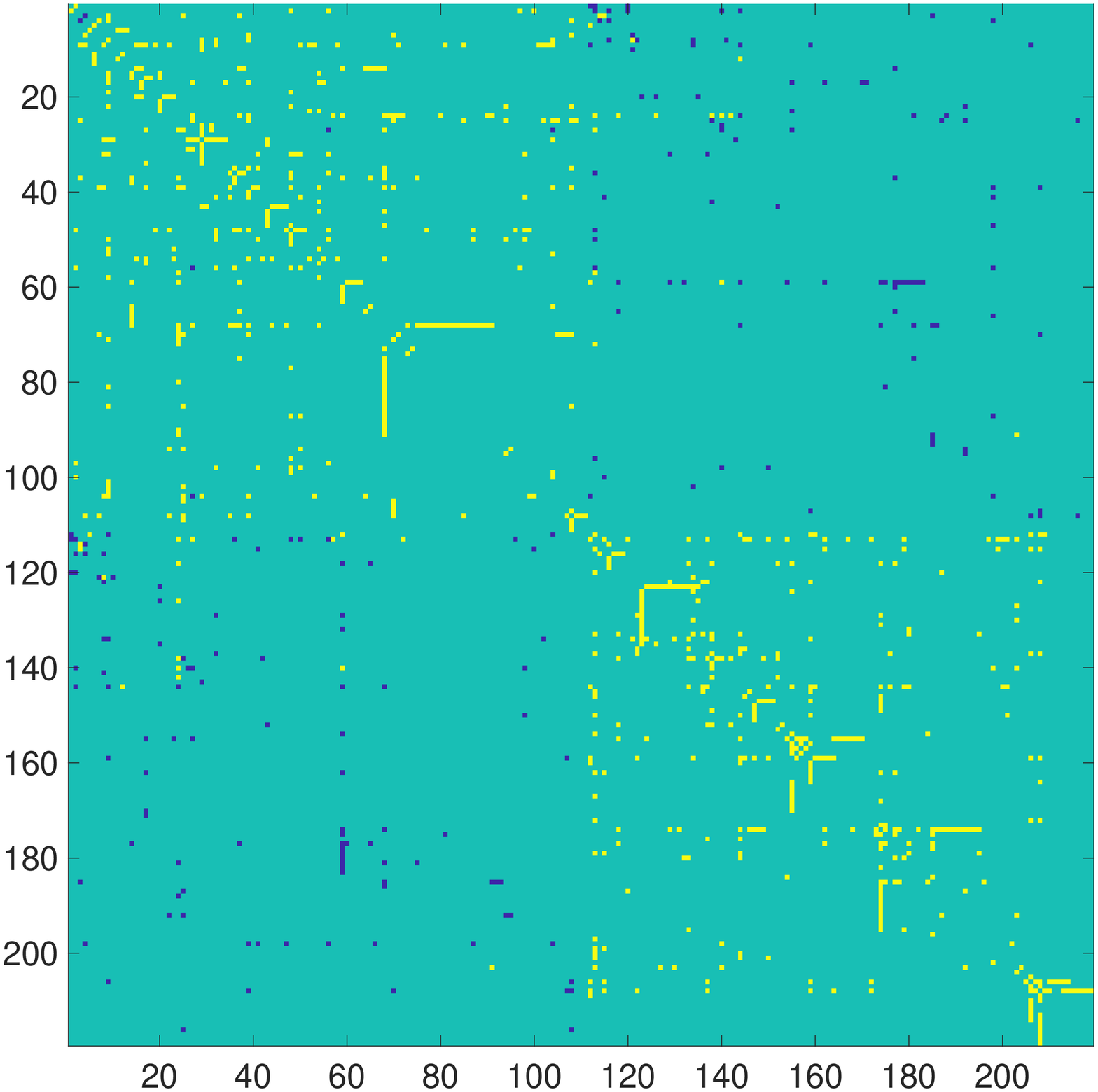}
	\caption{SGPI}
	\label{fig-real-graphs-SGPM}
\end{subfigure}
\hfill
\begin{subfigure}[b]{0.190\textwidth}
	\centering
	\includegraphics[width=\textwidth]{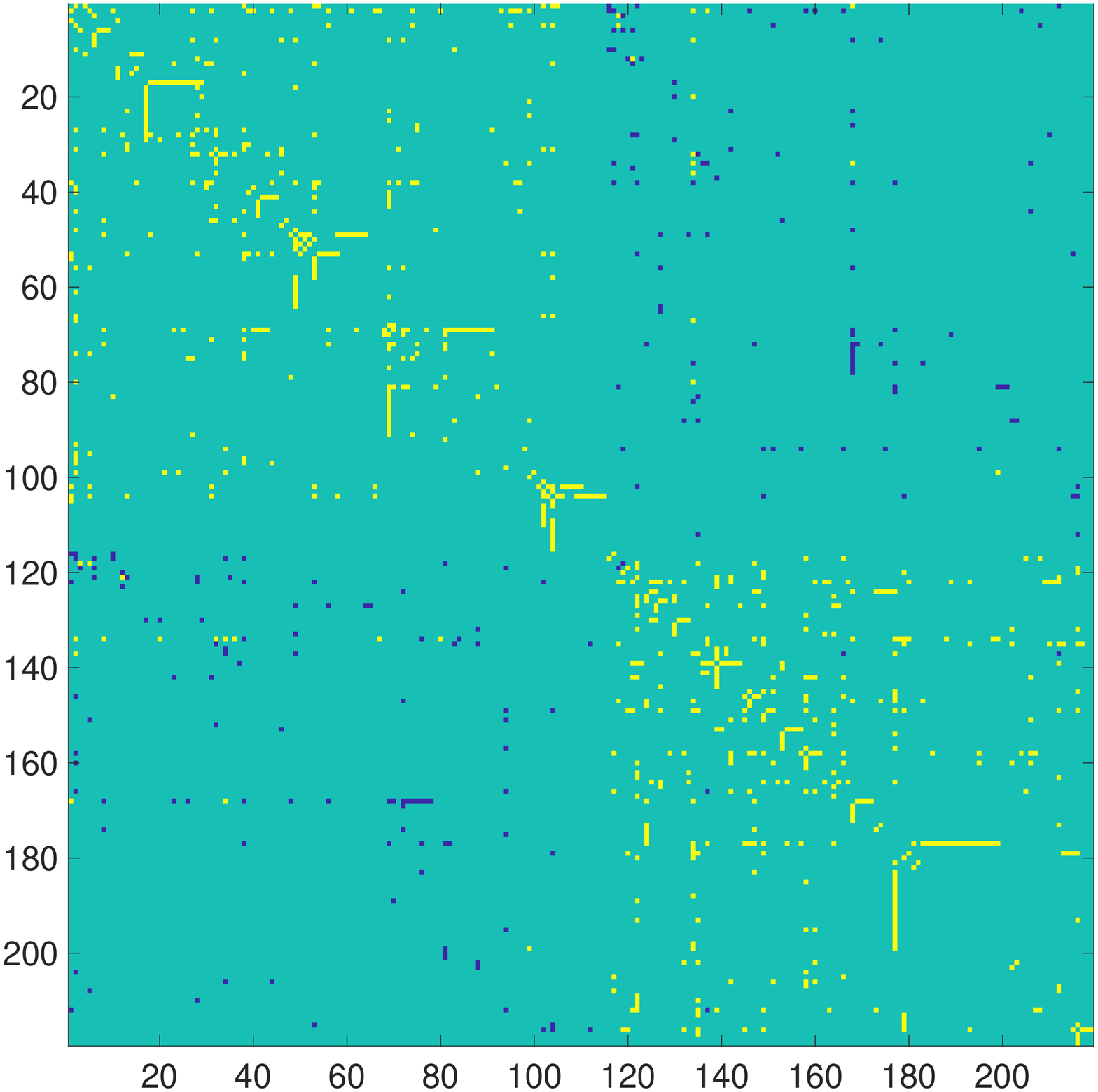}
	\caption{SRC}
	\label{fig-real-graphs-SRC}
\end{subfigure}
\hfill
\begin{subfigure}[b]{0.190\textwidth}
	\centering
	\includegraphics[width=\textwidth]{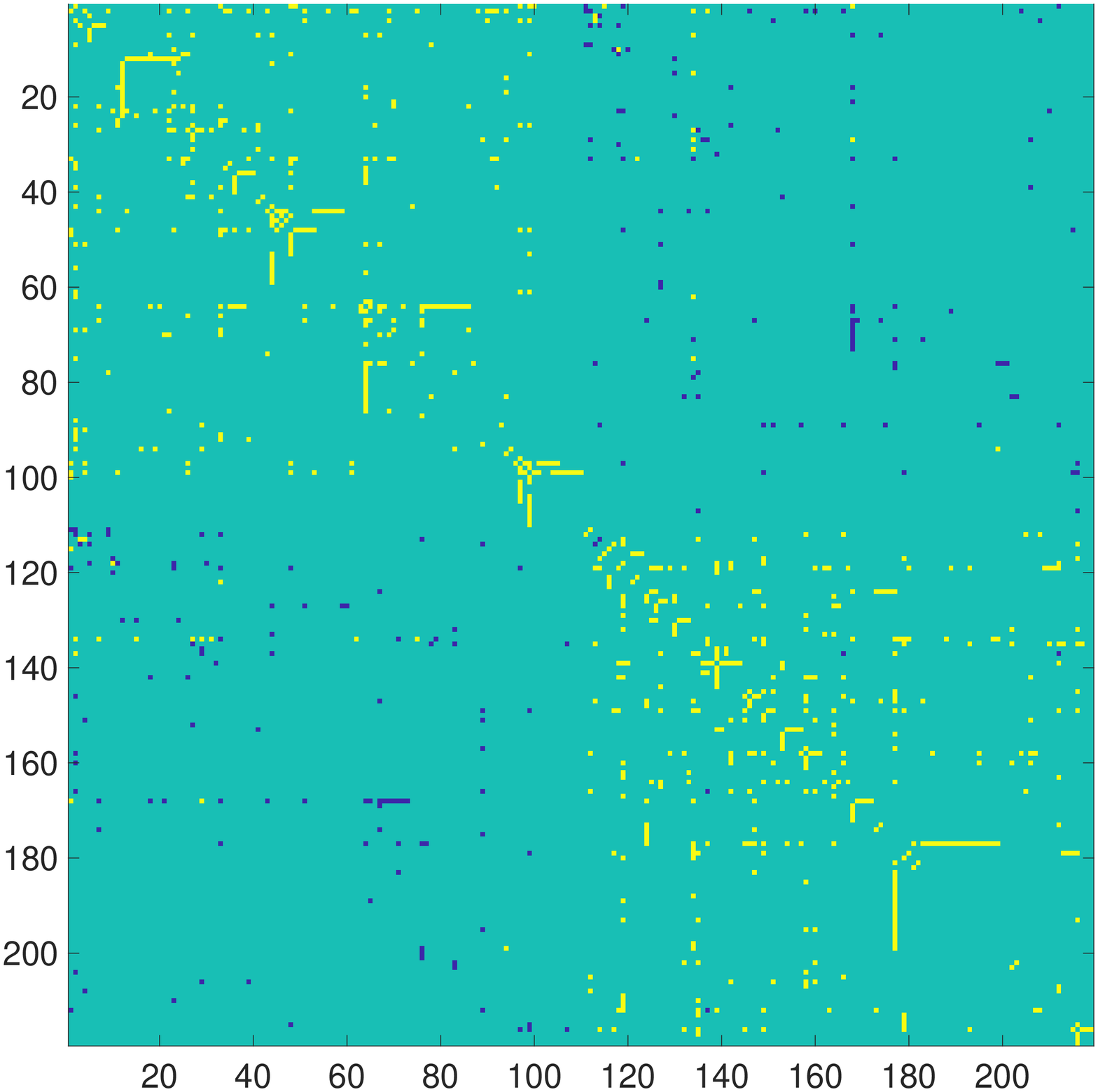}
	\caption{SPONGE}
	\label{fig-real-graphs-SPONGE}
\end{subfigure}
\hfill
\begin{subfigure}[b]{0.190\textwidth}
	\centering
	\includegraphics[width=\textwidth]{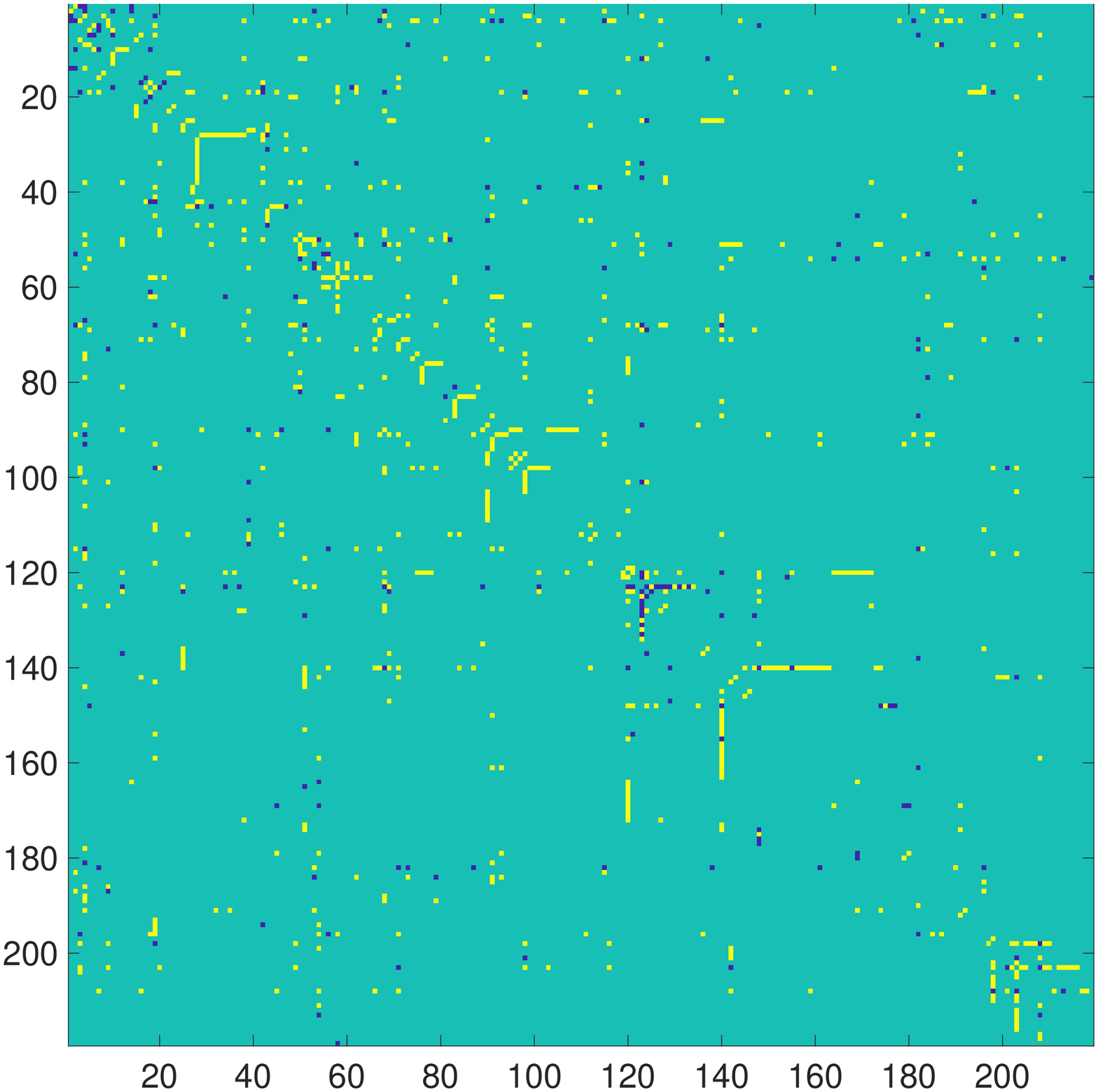}
	\caption{$\text{SPM}_{p=-10}$}
	\label{fig-real-graphs-SPM}
\end{subfigure}
\hfill
\caption{Original adjacency and the sorted adjacency matrices according to the communities identified by different algorithms (the yellow dots represent positive edges, the blue dots represent negative edges, and the green dots mean absence of edges)}
\label{fig-real-graphs}
\end{figure*}

\subsection{Phase Transition}
We first report the community recovery performance under $\textnormal{\textsf{SSBM}}\left(n,\bm{x}^*,\frac{ \alpha^+\log n}{n},\frac{ \alpha^+\log n}{n},\frac{ \alpha^+\log n}{n},\frac{ \alpha^+\log n}{n}\right)$ with $n = 300$. Given $\alpha^+,\alpha^-,\beta^+,\beta^-$, we independently generate 40 signed graphs for each algorithm and compute the ratio of exact recovery. 

Figures \ref{fig-alpha-_vs_beta-} and \ref{fig-alpha+_vs_alpha-} illustrate the heatmaps of exact recovery ratio of all tested algorithms. In Figure \ref{fig-alpha-_vs_beta-}, all signed graphs are generated with fixed $\alpha^+ = 16$ and $\beta^+=9$. The information-theoretic limit \eqref{IT:bound} reduces to $(\sqrt{\alpha^-}-\sqrt{\beta^-})^2\geq1$. It can be observed that most algorithms exhibit obvious phase transition. Particularly, only our SGPI achieves exact recovery down to the information-theoretic limit, which supports the claim in Theorem \ref{thm-iter-comp}. SRC and SPM performs well when $\sqrt{\beta^-}\leq\sqrt{\alpha ^-}-1$, while they can hardly recover the ground-truth if $\sqrt{\beta^-}\geq\sqrt{\alpha ^-}+1$. This is possibly because they equally treats the positive and negative edges of signed graphs, which cannot achieve exact recovery when there are even more negative edges within communities and less positive edges across communities. The superior recovery performance of our algorithm benefits from the weight parameter $\xi$ in our formulation \eqref{RMLE}, which can be accurately estimated from the graphs. 

\begin{table}
	\fontsize{9}{13}\selectfont
	\centering
	\setlength\tabcolsep{5pt}
	\begin{tabular}{l c c c c c c}
		\hline
		$n$ & 100 & 500 & 1000 & 2000 & 5000\\
		\hline
		SGPI & \textbf{0.028} & \textbf{0.39 }& \textbf{2.43} & \textbf{16.32} & \textbf{267.44} \\
		SRC & 0.034 & 0.54 & 3.07 & 23.40 & 312.95 \\
		SPONGE & 0.151 & 1.70 & 8.87 & 66.68 & 832.80 \\
		$\text{SPM}_{p=0}$  & 0.562 & 3.69 & 17.05 & 113.34 & 1720.00 \\
		$\text{SPM}_{p=-10}$ & 0.386 & 2.07 & 9.91 &  65.11 & 964.41 \\
		\hline
	\end{tabular}
	\caption{Total computation time (in seconds) of different community detection algorithms on synthetic data}
	\label{tab-computation time}
\end{table}

In Figure \ref{fig-alpha+_vs_alpha-}, all signed graphs are generated with fixed $\beta^+ = 9$ and $\beta^-=16$. The information-theoretic limit \eqref{IT:bound} reduces to $(\sqrt{\alpha^+}-3)^2+(\sqrt{\alpha^-}-4)^2\geq2$.  We can observe analogous phase transition phenomenon in Figure \ref{fig-alpha+_vs_alpha-}. For the existing algorithms, SRC, SPONGE, and $\text{SPM}_{p=0}$ perform similarly, and $\text{SPM}_{p=-10}$ yields the best recovery performance. In contrast, our SGPI is the only algorithm that achieves exact recovery all the way down to the ring-shaped information-theoretic threshold. 

\subsection{Computational Efficiency}
We compare the computation time of the aforementioned algorithms for community recovery over signed graphs with $\alpha^+=16$, $\alpha^-=9$, $\beta^+=9$, $\beta^-=16$. Given $n$, we independently generate 40 signed graphs for all algorithms and record their overall computation time for completing the community recovery. Specifically, the computation time of SGPI consists of the time consumed by parameter estimation, PIs, and GPIs. In Table \ref{tab-computation time}, we report the total computation time consumed by each algorithm for different $n$. We can observe that SGPI is always the fastest algorithm to achieve community recovery over signed graphs of different sizes. In particular, SGPI exhibits comparable computation time performance with SRC and is substantially faster than other algorithms. Moreover, SGPI achieves the best scalability as $n$ grows. 

\begin{table}
\fontsize{9}{13}\selectfont
\centering
\setlength\tabcolsep{5pt}
\begin{tabular}{l c c c c}
	\hline
	& SGPI & SRC & SPONGE & $\text{SPM}_{p=-10}$ \\
	\hline
	{Congress} & 969 & 968 & 968 & 256 \\
	{Highlandtribes} & 50.5 & 46.9 & 43.9 & 48.0 \\
	\hline
\end{tabular}
\caption{Objective function values of algorithm outputs on real data}
\label{tab-fval}
\end{table}

\subsection{Experiments on Real Data}
We conduct experiments to test the aforementioned algorithms on the Congress dataset \citep{thomas2006get}, which is a publicly available real-world signed network in the KONECT project \citep{kunegis2013konect}.\footnote{konect.cc/networks/convote/} There are 219 nodes in the Congress dataset, whose adjacency matrix is presented in Figure \ref{fig-real-graphs-original}. Each node represents a politician and each edge represents a favorable (labeled by $+1$) or unfavorable (labeled by $-1$) mention of politicians speaking in the United States Congress. We apply the aforementioned algorithms to this signed network to identify the underlying community structure. The sorted adjacency matrices according to the communities identified by  different algorithms are presented in Figures \ref{fig-real-graphs-SGPM} to \ref{fig-real-graphs-SPM}. It can be observed that SGPI detects two obvious communities of similar sizes, where most positive edges appear within communities and most negative edges appear across communities. The sorted adjacency matrices of SRC and SPONGE exhibit similar community structures, while the sorted adjacency matrix of $\text{SPM}_{p=-10}$ exhibits less recognizable community structure.

We also evaluate the objective function of Problem \eqref{RMLE} to assess the performance of the tested algorithms on real datasets. In addition to the Congress dataset, we also make evaluations on the Highlandtribes dataset \citep{read1954cultures} in the KONECT project. The Highlandtribes network contains 16 tribes connected by friendship (corresponding to $+1$ edge) and enmity (corresponding to $-1$ edge). The results are reported in Table \ref{tab-fval}, where a higher function value usually indicates better performance. For the Congress dataset, the quantitative results coincide with the observations from the sorted adjacency matrices illustrated in Figure \ref{fig-real-graphs}.

\section{Conclusion}
In this work, we proposed a new approach for community recovery over signed graphs based on the MLE formulation of the SSBM. We tackled the resulting non-convex and discrete optimization problem via a simple yet efficient two-stage iterative algorithm. Theoretical analysis shows that our method achieves exact recovery within nearly linear time at the information-theoretic limit. Numerical experiments validate the efficacy and efficiency of our approach and show its potential to identify communities in real networks. We leave it as a future work to consider community recovery in the signed SBMs with more than two communities. 

\subsubsection*{Acknowledgements}
This work is supported by the Hong Kong Research Grants Council (RGC) General Research Fund (GRF) Project CUHK 14205421.

\renewcommand\refname{References}
\bibliographystyle{abbrvnat}
\bibliography{aistats_ref}

\clearpage
\appendix

\thispagestyle{empty}

\onecolumn \makesupplementtitle
In the supplementary material, we the provide proofs of the technical results presented in Sections \ref{sec:main} and \ref{sec:proofs}. We first introduce the notation that will be used. 

Given an observed graph $\mG$ generated according to $\textsf{SSBM}(n,\bm{x}^*,p^+,p^-,q^+,q^-)$, we use $N^+$ (resp. $N^-$) to denote the number of positive (resp. negative) edges  and $T^+$ (resp. $T^-$) to denote the number of positive (negative) triangles in $\mG$. In addition, we use $N_{\text{in}}^+$ (resp. $N_{\text{in}}^-$) to denote the number of positive edges (resp. negative edges) within communities and $N_{\text{out}}^+$ (resp. $N_{\text{out}}^-$) to denote positive edges (resp. negative edges) across communities in $\mG$.

\section{Proof of Proposition \ref{thm:MLE}}\label{proof-prop1}
\begin{proof}
	Firstly, we note that
	\begin{align}\label{eq:N+-}
		N^+ = N_{\text{in}}^+ + N_{\text{out}}^+,\quad N^- = N_{\text{in}}^- + N_{\text{out}}^-.
	\end{align}
	Let $\bm{x}\in\{1,-1\}^n$ be a vector whose $i$-th element indicates the community membership of node $i$. For equal-sized community detection, an admissible membership vector should also satisfy $\bm{1}^\top\bm{x}=0$. 
	Then, the likelihood function of $\bm{x}$, i.e., the probability of generating the graph $\mathcal{G}$ based on the community assignment $\bm{x}$, is
	\begin{align*}
		\mathbb{P}(\mG \vert\bx) &\propto (p^+)^{N_{\text{in}}^+} (p^-)^{N_{\text{in}}^-} (1-p^+-p^-)^{\frac{n^2}{4}-\frac{n}{2}-N_{\text{in}}^+-N_{\text{in}}^-}(q^+)^{N_{\text{out}}^+} (q^-)^{N_{\text{out}}^-}
		(1-q^+-q^-)^{\frac{n^2}{4}-N_{\text{out}}^+-N_{\text{out}}^-}. \\ 
		& \propto \left(\frac{p^+}{q^+}\frac{1-q^+-q^-}{1-p^+-p^-}\right)^{N_{\text{in}}^+} \left(\frac{q^-}{p^-}\frac{1-p^+-p^-}{1-q^+-q^-}\right)^{N_{\text{out}}^-}.
	\end{align*}
	Let
	\begin{align*}
		\mu_n  = \log\left(\frac{p^+}{q^+}\frac{1-q^+-q^-}{1-p^+-p^-}\right),\ \nu_n  = \log\left(\frac{q^-}{p^-}\frac{1-p^+-p^-}{1-q^+-q^-}\right),
	\end{align*}
	then, it follows from \eqref{p+-} and \eqref{q+-} that
	\begin{align*}
		\mu_n = \log\left(\frac{\alpha^+}{\beta^+}\right) + \log\left(\frac{n- (\beta^++\beta^-)\log n}{n- (\alpha^++\alpha^-)\log n}\right),\ \nu_n = \log\left(\frac{\beta^-}{\alpha^-}\right) + \log\left(\frac{n- (\alpha^++\alpha^-)\log n}{n- (\beta^++\beta^-)\log n}\right).
	\end{align*}
	Hence, we have
	\begin{align}\label{eq:log-P}
		\log\mathbb{P}(\mathcal{G}\vert\bm{x}) \propto \left(\mu_n N_{\text{in}}^+ + \nu_n N_{\text{out}}^-\right).
	\end{align}
	According to \eqref{eq:N+-}, we note that
	\begin{align*} 
		\bm{x}^\top\bm{A}^+\bm{x} = \sum_{i,j\in[n]}A^+_{ij}x_ix_j = \left(\sum_{i,j\in[n]:x_ix_j=1}A^+_{ij} \right)-
		\left(
		\sum_{i,j\in[n]: x_ix_j=-1}A^+_{ij} \right) = 2N_{\text{in}}^+ - 2N_{\text{out}}^+ = 4N_{\text{in}}^+ - 2N^+ 
	\end{align*}
	and 
	\begin{align*} 
		\bm{x}^\top\bm{A}^-\bm{x} = \sum_{i,j\in[n]}A^-_{ij}x_ix_j = 
		\left(\sum_{i,j\in[n]:x_ix_j=1}A^-_{ij}\right) -
		\left(
		\sum_{i,j\in[n]:x_ix_j=-1}A^-_{ij} \right) 
		= 2N_{\text{in}}^- - 2N_{\text{out}}^- = 2N^- - 4N_{\text{out}}^-.
	\end{align*}
	This, together with \eqref{eq:log-P} and the fact that $N^+,N^-$ are observed, yields the MLE formulation
	\begin{equation*} 
		\max\left\{ \bm{x}^\top(\bm{A}^+-\xi_n\bm{A}^-)\bm{x}:\ \bx \in \{1,-1\}^n,\ \bo^\top\bx = 0 \right\},
	\end{equation*}
	where $\xi_n = \nu_n/\mu_n$.
\end{proof}

\section{Proofs in Section \ref{subsec:esti-para}}

\subsection{Proof of Lemma \ref{lem:edges-triangles}} \label{proof-lemma1}

\begin{proof}
The numbers of edges in $\mG^+$ and $\mG^-$ satisfy
\begin{align}\label{N+-}
	N^+ = \sum_{1 \le i < j \le n}A_{ij}^+,\quad  N^- = \sum_{1 \le i < j \le n}A_{ij}^-,
\end{align}
respectively, where all the $A_{ij}^+$ in $S_1^+$ are i.i.d.~$\mathbf{Bern}(p^+)$, all the $A_{ij}^+$ in $S_2^+$ are i.i.d.~$\mathbf{Bern}(q^+)$ and independent of those in $S_1^+$, all the $A_{ij}^-$ in $S_1^-$ are i.i.d.~$\mathbf{Bern}(p^-)$, and all the $A_{ij}^-$ in $S_2^-$ are i.i.d.~$\mathbf{Bern}(q^-)$ and independent of those in $S_1^-$. Then, according to \eqref{p+-}, \eqref{q+-}, and \eqref{eq:abcd}, we have
\begin{align}
	& \E[N^+] = \left(\frac{n^2}{4}-\frac{n}{2}\right) p^+ + \frac{n^2}{4}q^+ = a(\alpha^+ + \beta^+)-\frac{\alpha^+}{2}\log n, \label{EN+}\\
	& \mathrm{Var}(N^+) = \left(\frac{n^2}{4}-\frac{n}{2}\right) p^+(1-p^+) + \frac{n^2}{4}q^+(1-q^+) \le \frac{n^2}{2}p^+ = \frac{\alpha^+}{2}n\log n. 
\end{align}
Then, by applying Bernstein's inequality for bounded distributions (see, e.g., \citet[Theorem 2.8.4]{vershynin2018high}) to \eqref{N+-}, we obtain
\begin{align*}
	\P\left( \left| N^+  - \E[N^+] \right| \geq \sqrt{n}\log n  \right) & \le 2\exp\left( -\frac{n\log^2n/2}{\alpha^+n\log n/2+\sqrt{n}\log n/3} \right) \\
	& \le 2\exp\left(- \frac{\log n}{  \alpha^+  + 1/\sqrt{n} } \right)\\
	& = 2n^{-\frac{1}{\alpha^+ + 1/\sqrt{n}}}.
\end{align*}
This implies that it holds with probability at least $1- 2n^{-\frac{1}{\alpha^+  + 1/\sqrt{n}}}$ that 
\begin{align*} 
	\left| N^+  - a\left(\alpha^+ + \beta^+\right)+\frac{\alpha^+}{2}\log n \right| \le \sqrt{n}\log n.
\end{align*}
This, together with the triangle inequality, implies
\begin{align*}
	\left| N^+  - a\left(\alpha^+ + \beta^+\right)\right| \le \sqrt{n}\log n+\frac{\alpha^+}{2}\log n.
\end{align*}
Following similar arguments, we can show that it holds with probability at least $1- 2n^{-\frac{1}{\beta^-  + 1/\sqrt{n}}}$ that 
\begin{align*} 
	\left| N^-  - \left(a\alpha^- + b\beta^-\right) \right| \le \sqrt{n}\log n + \frac{\alpha^-}{2}\log n.
\end{align*}

Moreover, the numbers of triangles in $\mG^+$ and $\mG^-$, satisfy
\begin{align}\label{T+}
	T^+ = \frac{1}{6}\tr\left((\bA^+)^3\right),\ T^-_N = \frac{1}{6}\tr\left((\bA^-)^3\right).
\end{align}
respectively. Then, according to \eqref{p+-}, \eqref{q+-}, and \eqref{eq:abcd}, one can verify that
\begin{align}
	\E[T^+] & = \frac{n}{6}\left(\frac{n}{2}-1\right)\left(\frac{n}{2}-2\right)\left(p^{+}\right)^3 + \frac{n^2}{4}\left(\frac{n}{2}-1\right)p^{+}\left(q^+\right)^2 \notag\\
	&= \frac{\log^3n}{24}\left(1-\frac{6}{n}+\frac{8}{n^2}\right)(\alpha^{+})^3 + \frac{\log^3n}{8}\left( 1-\frac{2}{n}\right)\alpha^+(\beta^+)^2 \notag\\
	&= b\left((\alpha^+)^3 + 3\alpha^+(\beta^+)^2\right) - \frac{\log^3 n}{2n}\left((\alpha^+)^3 + \alpha^+(\beta^+)^2\right) + \frac{\log^3n}{3n^2}(\alpha^+)^3 \label{ET+0}\\
	&\le \frac{(\alpha^+)^3}{6}\log^3n.\label{ET+}
\end{align}
where the inequality follows from $\alpha^+>\beta^+$. This, together with \citet[Theorem 4.2]{vu2002concentration} with $k=3$, $\tilde{k}=2$, $\mathscr{E}_0=\E[T^+]$, $\mathscr{E}_1=\sqrt{\E[T^+]}$, $\mathscr{E}_2=1$, $\lambda=\log n$, $c_3 = 8\sqrt{6}+2\sqrt{3} < 24$, and $d_3=14$, yields 
\begin{align*}
	\P\left( \left| T^+  - \E[T^+]   \right| \ge 24\sqrt{\log n} \left(\E[T^+]\right)^{3/4}  \right) \le 14\exp\left( -\frac{\log n}{4} \right) = 14n^{-\frac{1}{4}}.
\end{align*}
That is, it holds with probability at least $1-14n^{-\frac{1}{4}}$ that
\begin{align*}
	\left| T^+  -\E[T^+]  \right| \le 24\sqrt{\log n} \left(\E[T^+]\right)^{3/4} \le 7(\alpha^+)^{\frac{9}{4}}\left(\log n\right)^{\frac{11}{4}},
\end{align*}
where the last inequality is due to \eqref{ET+}. This, together with \eqref{ET+0} and the triangle inequality, gives
\begin{align*}
	\left| T^+  - b\left((\alpha^{+})^3  + 3\alpha^+(\beta^+)^2  \right)  \right| &\leq \left| T^+  -\E[T^+]  \right| + \frac{\log^3 n}{2n}\left((\alpha^+)^3 + \alpha^+(\beta^+)^2\right) - \frac{\log^3n}{3n^2}(\alpha^+)^3 \\ 
	&\le  7(\alpha^+)^{\frac{9}{4}}\left(\log n\right)^{\frac{11}{4}}  + \frac{ \alpha^{+^3}\log^3n}{2n} \le 8(\alpha^+)^{\frac{9}{4}}\left(\log n\right)^{\frac{11}{4}},
\end{align*}
where the last inequality holds for $n \ge \alpha^{+}$. Following similar arguments, when $n \ge \beta^-$, it holds with probability at least $1-14n^{-\frac{1}{4}}$ that 
\begin{align*}
	\left| T^-  - b\left( (\alpha^{-})^3 + 3\alpha^-(\beta^-)^2 \right)  \right| \le  8(\beta^-)^{\frac{9}{4}}\left(\log n\right)^{\frac{11}{4}}.
\end{align*}
Finally, by applying the union bound, the desired results holds with probability at least $1- 2n^{-\frac{1}{\alpha^+  + 1/\sqrt{n}}}-2n^{-\frac{1}{\beta^-  + 1/\sqrt{n}}}-28n^{-\frac{1}{4}}$. 
\end{proof}

\subsection{Proof of Lemma \ref{lem:unique-sol}} \label{proof-lemma2}
\begin{proof}
Indeed, motivated by \eqref{EN+} and \eqref{ET+0}, we let the estimators of the connectivity parameters $\alpha^+$ and $\beta^+$ be the solutions to the following system of equations:
\begin{align}\label{eq:positive}
	\begin{cases}
		& a\hat{\alpha}^+ + a\hat{\beta}^+  = N^+,   \\
		& b(\hat{\alpha}^{+})^3 + 3b\hat{\alpha}^+(\hat{\beta}^+)^2 = T^+.
	\end{cases}
\end{align} 
Similarly, we let the estimators of the connectivity parameters $\alpha^-$ and $\beta^-$ be the solutions to the following system of equations:
\begin{align}\label{eq:negative}
	\begin{cases}
		& a \hat{\alpha}^- + a\hat{\beta}^-  = N^-,   \\
		& b  (\hat{\alpha}^{-})^3 + 3b\hat{\alpha}^-(\hat{\beta}^-)^2 = T^-.
	\end{cases}
\end{align}
We can consider the following unified systems of equations for given $C,D \in \R$:
\begin{align}\label{sys:cubic}
	\begin{cases}
		& a x+ ay = C,\\
		& b x^3 + 3b xy^2 = D,
	\end{cases}
\end{align} 
Eliminating $y$ in the above system yields
\begin{align*}
	4bx^3 - \frac{6bC}{a}x^2 + \frac{3bC^2}{a^2}x - D = 0.
\end{align*}
According to \eqref{eq:abcd}, we further have
\begin{align*}
	\left(\log^3n\right)x^3 - \frac{6C\log^2n}{n}x^2 + \frac{12C^2\log n}{n^2}x-6D = 0.
\end{align*}
This is equivalent to
\begin{align*}
	\left( x\log n - \frac{2C}{n}\right)^3 = 6D - \frac{8C^3}{n^3}.
\end{align*}
This implies that the solution of the system \eqref{sys:cubic} has unique real root:
\begin{align}\label{eq:sol-cubic}
	x = \frac{1}{\log n}\left(\frac{2C}{n} + \sqrt[3]{6D - \frac{8C^3}{n^3}}\right),\ y = \frac{C}{a}-x.
\end{align}
To proceed, let
\begin{align*}
	\bar{N}^+=a(\alpha^++\beta^+),\ \bar{T}^+=b\alpha^{+^3}+3b\alpha^+\beta^{+^2},\ \delta_1 = N^+ - \bar{N}^+,\ \delta_2 = T^+ - \bar{T}^+.
\end{align*}
By letting $C=\bar{N}^+$, $D=\bar{T}^+$ in \eqref{sys:cubic}, it is obvious that the real root is $x=\alpha^+,y=\beta^+$ and
\begin{align}\label{eq:alpha+}
	\alpha^+ = \frac{1}{\log n}\left(\frac{2\bar{N}^+}{n} + \sqrt[3]{6\bar{T}^{+} - \frac{8\bar{N}^{+^3}}{n^3}}\right).
\end{align}
In particular, it follows from \eqref{eq:abcd} that 
\begin{align}\label{eq1:lem:unique-sol}
	\sqrt[3]{6\bar{T}^+ - \frac{8\bar{N}^{+^3}}{n^3}}  = \sqrt[3]{6\left( b\alpha^{+^3}+3b\alpha^+\beta^{+^2} \right)- \frac{8\left(a(\alpha^++\beta^+)\right)^3}{n^3}} = \frac{\alpha^+-\beta^+}{2}\log n \ge 0.
\end{align}
Moreover, by letting $C=N^+$ and $D=T^+$ in \eqref{sys:cubic}, it follows from \eqref{eq:sol-cubic} that
\begin{align}\label{eq3:lem:unique-sol}
	\hat{\alpha}^+ = \frac{1}{\log n}\left(\frac{2N^+}{n} + \sqrt[3]{6T^+ - \frac{8N^{+^3}}{n^3}}\right). 
\end{align}
By Lemma \ref{lem:edges-triangles}, we have $|\delta_1| \lesssim \sqrt{n}\log n $ and $|\delta_2| \lesssim \left(\log n\right)^{\frac{11}{4}}$. This, together with $\bar{N}^+ \asymp n\log n$ and $\bar{T}^+\asymp \log^3n$, implies that for sufficiently large $n$,
\begin{align}\label{eq2:lem:unique-sol}
	\sqrt[3]{6T^+ - \frac{8N^{+^3}}{n^3}} = \sqrt[3]{6(\bar{T}^++\delta_2) - \frac{8(\bar{N}^++\delta_1)^3}{n^3}} \ge 0
\end{align}
Then, we have
\begin{align*}
	\left| \sqrt[3]{6\bar{T}^+ - \frac{8\bar{N}^{+^3}}{n^3}} -  \sqrt[3]{6T^+ - \frac{8N^{+^3}}{n^3}} \right|  &\le \left| \sqrt[3]{6(\bar{T}^+-T^+) + \left(\frac{8N^{+^3}}{n^3} - \frac{8\bar{N}^{+^3}}{n^3} \right) } \right| \\ 
	& \le \sqrt[3]{6\delta_2 +  \frac{8\delta_1\left((\bar{N}^++\delta_1)^2+(\bar{N}^++\delta_1)\bar{N}^++\bar{N}^{+^2}\right)}{n^3}} \\
	& \le \sqrt[3]{6 |\delta_2|} + \frac{2}{n}\sqrt[3]{|\delta_1|(2\bar{N}^++\delta_1)^2} \\
	&\leq 4(\alpha^+)^{3/4}(\log n)^{11/12} + 4\sqrt[3]{\alpha^++\beta^+}\frac{\log n}{\sqrt[6]{n}},
\end{align*}
where the first inequality is due to \eqref{eq1:lem:unique-sol}, \eqref{eq2:lem:unique-sol}, and the fact that $|\sqrt[3]{u} - \sqrt[3]{v}| \le |\sqrt[3]{u-v}|$ for any $u,v\ge 0$ and the last inequality follows from Lemma \ref{lem:edges-triangles}. 
This, together with \eqref{eq:alpha+} and \eqref{eq3:lem:unique-sol}, gives
\begin{align*}
	\left| \alpha^+ - \hat{\alpha}^+\right|  &= \frac{1}{\log n}\left| \frac{2(\bar{N}^+-N^+)}{n} + \sqrt[3]{6\bar{T}^+ - \frac{8\bar{N}^{+^3}}{n^3}} -  \sqrt[3]{6T^+ - \frac{8N^{+^3}}{n^3}} \right| \\
	&\le 3n^{-1/2} + 4(\alpha^+)^{3/4}(\log n)^{-1/12} + 4\sqrt[3]{\alpha^++\beta^+}n^{-1/6}.
\end{align*}
for sufficiently large $n$. Moreover, we have
\begin{align*}
	\left| \beta^+ - \hat{\beta}^+\right| = \left| \frac{\bar{N}^+-N^+}{a} - (\alpha^+-\hat{\alpha}^+) \right| 
	\le 6n^{-1/2} + 4(\alpha^+)^{3/4}(\log n)^{-1/12} + 4\sqrt[3]{\alpha^++\beta^+}n^{-1/6}.
\end{align*}
Thus, there exists $\kappa_1>0$, such that for sufficiently large $n$,
\begin{align*}
	& \max\left\{ | \alpha^+ - \hat{\alpha}^+|, | \beta^+ - \hat{\beta}^+|\right\}  \le \kappa_1\left(\log n \right)^{-\frac{1}{12}}.
\end{align*}
Following similar arguments, we can show that there exists $\kappa_2>0$, such that for sufficiently large $n$, 
\begin{align*}
	& \max\left\{ | \alpha^- - \hat{\alpha}^-|, | \beta^- - \hat{\beta}^-|\right\}  \le \kappa_2\left(\log n \right)^{-\frac{1}{12}}.
\end{align*}
\end{proof}

\section{Proof of Proposition \ref{prop:esti-p-q}}\label{proof-prop2}
\begin{proof}
By Lemma \ref{lem:unique-sol}, for $n>\max\{\alpha^+,\beta^-\}$, it holds with probability at least $1-n^{-\Omega(1)}$ that
\begin{align*}
	| \alpha^+ - \hat{\alpha}^+ | \leq \kappa_1\left(\log n \right)^{-\frac{1}{12}},
\end{align*}
which implies that 
\begin{align*}
	\log\left(1 - \frac{\kappa_1}{\alpha^+}\left(\log n \right)^{-\frac{1}{12}}\right) \leq \log\left(\frac{\hat{\alpha}^+}{\alpha^+}\right) \leq \log\left(1 + \frac{\kappa_1}{\alpha^+}\left(\log n \right)^{-\frac{1}{12}}\right),
\end{align*}
for sufficiently large $n$ such that $1 - \frac{\kappa_1}{\alpha^+}\left(\log n \right)^{-\frac{1}{12}}>0$. Then, for sufficiently large $n$ such that $1 - \frac{\kappa_1}{\alpha^+}\left(\log n \right)^{-\frac{1}{12}}>\frac{1}{2}$, we have
\begin{align}
	\zeta_1^+ 
	&\coloneqq |\log\hat{\alpha}^+-\log\alpha^+| 
	= \left|\log\left(\frac{\hat{\alpha}^+}{\alpha^+}\right)\right| \nonumber\\
	&\leq \max\left\{ \left|\log\left(1 - \frac{\kappa_1}{\alpha^+}\left(\log n \right)^{-\frac{1}{12}}\right)\right|, \left|\log\left(1 + \frac{\kappa_1}{\alpha^+}\left(\log n \right)^{-\frac{1}{12}}\right)\right|\right\} \nonumber\\ 
	&\leq \frac{2\kappa_1}{\alpha^+}\left(\log n \right)^{-\frac{1}{12}}, \label{eq:log-alphahat-alpha+}
\end{align}
where the second inequality is because 
$|\log(1-x)|=\log\left(1+\frac{x}{1-x}\right)\leq\frac{x}{1-x}\leq2x$ for $0<x\leq\frac{1}{2}$ and $|\log(1+x)|\leq x$ for $x>0$. Following similar arguments for \eqref{eq:log-alphahat-alpha+}, we can obtain
\begin{align}
	\zeta_2^+ 
	&\coloneqq |\log\hat{\beta}^+-\log\beta^+| 
	\leq \frac{2\kappa_2}{\beta^+}\left(\log n \right)^{-\frac{1}{12}}, \label{eq:log-betahat-beta+}\\
	\zeta_1^- 
	&\coloneqq |\log\hat{\alpha}^--\log\alpha^-| 
	\leq \frac{2\kappa_1}{\alpha^-}\left(\log n \right)^{-\frac{1}{12}},\label{eq:log-alphahat-alpha}\\
	\zeta_2^- 
	&\coloneqq |\log\hat{\beta}^--\log\beta^-| 
	\leq \frac{2\kappa_2}{\beta^-}\left(\log n \right)^{-\frac{1}{12}}.\label{eq:log-betahat-beta-}
\end{align}
Therefore,
\begin{align*}
	|\hat{\xi}-\xi| 
	&= \left|\frac{\log(\hat{\beta}^-/\hat{\alpha}^-)}{\log(\hat{\alpha}^+/\hat{\beta}^+)} - \frac{\log({\beta}^-/{\alpha}^-)}{\log({\alpha}^+/{\beta}^+)}\right|\\
	&=\left|\frac{\log\left(\frac{\alpha^+}{\beta^+}\right)\left(\log\left(\frac{\hat{\beta}^-}{\hat{\alpha}^-}\right)-\log\left(\frac{\beta^-}{\alpha^-}\right)\right) - \log\left(\frac{\beta^-}{\alpha^-}\right)\left(\log\left(\frac{\hat{\alpha}^+}{\hat{\beta}^+}\right)-\log\left(\frac{\alpha^+}{\beta^+}\right)\right)}{\log\left(\frac{\alpha^+}{\beta^+}\right)\log\left(\frac{\hat{\alpha}^+}{\hat{\beta}^+}\right)}\right| \\
	&=\left|\frac{\log\left(\frac{\alpha^+}{\beta^+}\right)\left(\log\hat{\beta}^--\log\beta^-+\log\alpha^--\log\hat{\alpha}^-\right) - \log\left(\frac{\beta^-}{\alpha^-}\right)\left(\log\hat{\alpha}^+-\log\alpha^++\log\beta^+-\log\hat{\beta}^+\right)}{\log\left(\frac{\alpha^+}{\beta^+}\right)\left(\log\hat{\alpha}^+-\log\hat{\beta}^+\right)}\right| \\
	&\leq \left|\frac{\log\left(\frac{\alpha^+}{\beta^+}\right)}{\log\left(\frac{\alpha^+}{\beta^+}\right)\left(\log\hat{\alpha}^+-\log\hat{\beta}^+\right)}\right| \left( \zeta_2^- + \zeta_1^- \right) 
	+ \left|\frac{\log\left(\frac{\beta^-}{\alpha^-}\right)}{\log\left(\frac{\alpha^+}{\beta^+}\right)\left(\log\hat{\alpha}^+-\log\hat{\beta}^+\right)}\right| \left( \zeta_1^+ + \zeta_2^+\right) \\ 
	&\leq \frac{1}{\log(\alpha^+/\beta^+)} \left(\frac{2\kappa_2}{\beta^-}\left(\log n \right)^{-\frac{1}{12}} + \frac{2\kappa_1}{\alpha^-}\left(\log n \right)^{-\frac{1}{12}}\right) + \frac{\xi}{\log(\alpha^+/\beta^+)} \left(\frac{2\kappa_1}{\alpha^+}\left(\log n \right)^{-\frac{1}{12}} + \frac{2\kappa_2}{\beta^+}\left(\log n \right)^{-\frac{1}{12}} \right) \\
	&= \kappa\left(\log n \right)^{-\frac{1}{12}} 
\end{align*}
where the first inequality follows from triangular inequality, the second inequality holds due to \eqref{eq:log-alphahat-alpha+}--\eqref{eq:log-betahat-beta-} and the fact that for sufficiently large $n$ such that $\log\left(\frac{\alpha^+}{\beta^+}\right) - \left(\frac{2\kappa_1}{\alpha^+}+\frac{2\kappa_2}{\beta^+}\right) (\log n)^{-\frac{1}{12}}>0$, we have
\begin{align*}
	\left|\frac{\log\left(\frac{\alpha^+}{\beta^+}\right)}{\log\left(\frac{\alpha^+}{\beta^+}\right)\left(\log\hat{\alpha}^+-\log\hat{\beta}^+\right)}\right| &= \left|\frac{1}{\log\left(\frac{\alpha^+}{\beta^+}\right) + (\log\hat{\alpha}^+ - \log{\alpha}^+) - (\log\hat{\beta}^+ - \log{\beta}^+)} \right| \\
	&\leq \frac{1}{\log\left(\frac{\alpha^+}{\beta^+}\right) - \left(\frac{2\kappa_1}{\alpha^+}+\frac{2\kappa_2}{\beta^+}\right) (\log n)^{-\frac{1}{12}}} < \frac{1}{\log(\alpha^+/\beta^+)},
\end{align*}
and 
\begin{align*}
	\left|\frac{\log\left(\frac{\beta^-}{\alpha^-}\right)}{\log\left(\frac{\alpha^+}{\beta^+}\right)\left(\log\hat{\alpha}^+-\log\hat{\beta}^+\right)}\right| &= \left|\frac{\xi}{\log\left(\frac{\alpha^+}{\beta^+}\right) + (\log\hat{\alpha}^+ - \log{\alpha}^+) - (\log\hat{\beta}^+ - \log{\beta}^+)}\right| \\
	&\leq \frac{\xi}{\log\left(\frac{\alpha^+}{\beta^+}\right) - \left(\frac{2\kappa_1}{\alpha^+}+\frac{2\kappa_2}{\beta^+}\right) (\log n)^{-\frac{1}{12}}}
	< \frac{\xi}{\log(\alpha^+/\beta^+)},
\end{align*}
and the last equality directly follows by setting
\begin{align*}
	\kappa \coloneqq \frac{1}{\log(\alpha^+/\beta^+)}\left( \frac{2\kappa_2}{\beta^-} + \frac{2\kappa_1}{\alpha^-} \right) + \frac{\xi}{\log(\alpha^+/\beta^+)} \left( \frac{2\kappa_1}{\alpha^+} + \frac{2\kappa_2}{\beta^+} \right).
\end{align*}
The desired result is proved.
\end{proof}

\section{Proofs in Section \ref{subsec:PM}}
\subsection{Required Technical Lemmas}
We first present some technical lemmas that will be used in the sequel. The following lemma guarantees that the constant $c_0$ that appears throughout this paper is nonzero, which is necessary for many related results to remain valid.
\begin{lemma}
If $\alpha^+\neq\beta^+$, then $c_0=(\alpha^+-\beta^+)-\xi(\alpha^--\beta^-) \neq 0$
\end{lemma}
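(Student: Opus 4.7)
The plan is to show that under Assumption 1 the two summands defining $c_0$ have strictly opposite nonzero signs, so their signed combination cannot cancel. Recall Assumption 1 gives $\alpha^+>\beta^+>0$ and $\beta^->\alpha^->0$. From these inequalities, $\alpha^+-\beta^+>0$ and $\log(\alpha^+/\beta^+)>0$, while $\alpha^--\beta^-<0$ and $\log(\beta^-/\alpha^-)>0$. In particular $\xi>0$, so $-\xi(\alpha^--\beta^-)>0$, and hence $c_0=(\alpha^+-\beta^+)-\xi(\alpha^--\beta^-)$ is a sum of two strictly positive quantities, yielding $c_0>0$ and therefore $c_0\neq 0$.

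A cleaner reformulation that avoids a case split and generalizes beyond Assumption 1 is to rewrite the claim $c_0\neq 0$ as
\[
(\alpha^+-\beta^+)\log(\alpha^+/\beta^+) \;\neq\; (\alpha^--\beta^-)\log(\beta^-/\alpha^-),
\]
after multiplying through by $\log(\alpha^+/\beta^+)$. The key elementary fact is that for any distinct positive reals $x\neq y$, $(x-y)\log(x/y)>0$, since the two factors have the same sign. Applying this with $(x,y)=(\alpha^+,\beta^+)$ shows the left-hand side is strictly positive, while applying it with $(x,y)=(\beta^-,\alpha^-)$ gives $(\beta^--\alpha^-)\log(\beta^-/\alpha^-)\geq 0$ (with equality iff $\alpha^-=\beta^-$), so the right-hand side is non-positive. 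Strict positive cannot equal non-positive, so the two sides differ, establishing $c_0\neq 0$.

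There is essentially no obstacle here: the entire argument relies on the sign structure imposed by Assumption 1 together with the monotonicity of $\log$. The only subtlety worth mentioning is that the hypothesis $\alpha^+\neq\beta^+$ is what ensures $\log(\alpha^+/\beta^+)\neq 0$ and hence that $\xi$ is well-defined; this is already guaranteed by Assumption 1's strict inequality $\alpha^+>\beta^+$, so the lemma's hypothesis is automatically satisfied in the regime we work in.
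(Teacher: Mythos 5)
Your proof is correct. Your second, general argument is in substance the same as the paper's: both rest on the single elementary fact that $x-y$ and $\log x-\log y$ always share a sign. The paper packages this as a proof by contradiction — assuming $c_0=0$ yields the ratio identity $\frac{\alpha^+-\beta^+}{\alpha^--\beta^-}=\frac{\log\beta^--\log\alpha^-}{\log\alpha^+-\log\beta^+}$, whose two sides are then shown to have opposite signs via a two-case analysis — whereas your product formulation $(\alpha^+-\beta^+)\log(\alpha^+/\beta^+)\neq(\alpha^--\beta^-)\log(\beta^-/\alpha^-)$ is direct, avoids the case split, and handles the degenerate case $\alpha^-=\beta^-$ (which the paper must dispose of separately before dividing) in one stroke; this is a mild streamlining rather than a different route. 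Your first-paragraph argument, by contrast, is genuinely different but strictly narrower: it uses the sign conditions $\alpha^+>\beta^+$ and $\beta^->\alpha^-$ from Assumption \ref{assump:base}, while the lemma as stated assumes only $\alpha^+\neq\beta^+$, and the paper explicitly remarks that the reversed-sign regimes are also meant to be covered — so you are right to treat that version as a special case and to supply the general argument as the real proof.
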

\begin{proof}
Suppose that $c_0=0$, then $\alpha^+\neq\beta^+$ implies $\alpha^-\neq\beta^-$. By \eqref{xi}, we have
\begin{align}\label{eq:fraction}
	\frac{\alpha^+-\beta^+}{\alpha^--\beta^-} = \frac{\log\beta^--\log\alpha^-}{\log\alpha^+-\log\beta^+}.
\end{align}
\begin{itemize}
	\item[$1^\circ$] If $\frac{\alpha^+-\beta^+}{\alpha^--\beta^-} >0$ , then we have
	\begin{align*}
		& \alpha^+-\beta^+ \text{\ and\ } \alpha^--\beta^- \text{\ have the same sign\ }\\
		\Leftrightarrow\ & \log\alpha^+-\log\beta^+ \text{\ and\ } \log\alpha^--\log\beta^- \text{\ have the same sign\ }\\
		\Leftrightarrow\ &\frac{\log\alpha^+-\log\beta^+}{\log\alpha^--\log\beta^-} > 0 
		\Leftrightarrow\  \frac{\log\beta^--\log\alpha^-}{\log\alpha^+-\log\beta^+} < 0,
	\end{align*}
	which contradicts \eqref{eq:fraction}.
	\item[$2^\circ$] If $\frac{\alpha^+-\beta^+}{\alpha^--\beta^-} <0$, similar arguments yield a contradiction.
\end{itemize}
As a consequence, $c_0\neq 0$.
\end{proof}

\begin{lemma}\label{lem:rho}
Suppose that Assumption \ref{assump:base} holds and $\rho= \bo^\top\tbA\bo/n^2$. Then, it holds with probability at least $1-2n^{-\frac{1}{2(\alpha^++\beta^++1)}}-2n^{-\frac{1}{2(\alpha^- + \beta^-+1)}}$ that
\begin{align}\label{rst:lem:rho}
	\left| \rho - \E[\rho] \right|
	\leq \frac{2\log n}{n^{3/2}},
\end{align}
where $\E[\rho] = \frac{\left(p^++q^+\right) - \xi\left(p^-+q^-\right)}{2}-\frac{p^+-\xi p^-}{n}$.
\end{lemma}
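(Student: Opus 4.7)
My strategy is to reduce the concentration of $\rho$ to the concentration of the positive and negative edge counts $N^+$ and $N^-$, and then apply Bernstein's inequality to each separately. Since $\bA^+$ and $\bA^-$ are both symmetric with zero diagonal, $\bo^\top\tbA\bo = \sum_{i\neq j}(A_{ij}^+ - \xi A_{ij}^-) = 2N^+ - 2\xi N^-$, giving the identity $\rho = 2(N^+ - \xi N^-)/n^2$. Using that there are $n(n-2)/4$ within-community pairs and $n^2/4$ cross-community pairs in the SSBM, I compute $\E[N^+] = \frac{n(n-2)}{4}p^+ + \frac{n^2}{4}q^+$ and analogously for $\E[N^-]$; plugging these into $\rho = 2(N^+-\xi N^-)/n^2$ and simplifying produces the stated formula $\E[\rho] = \frac{(p^++q^+)-\xi(p^-+q^-)}{2} - \frac{p^+-\xi p^-}{n}$.

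The triangle inequality then yields
\[
|\rho - \E[\rho]| \le \frac{2}{n^2}|N^+ - \E[N^+]| + \frac{2\xi}{n^2}|N^- - \E[N^-]|,
\]
so it suffices to control the two deviations. For the first, I would use that $N^+ = \sum_{i<j}A_{ij}^+$ is a sum of independent Bernoulli random variables with $\Var(N^+) \le (\alpha^++\beta^+)n\log n / 4$ and entries bounded by $1$, and apply Bernstein's inequality for bounded distributions (as in the proof of Lemma~\ref{lem:edges-triangles}) at the deviation level $t \asymp \sqrt n \log n$. This produces a tail bound of the form $2\exp(-\log n/(2(\alpha^++\beta^++o(1))))$, which for sufficiently large $n$ is dominated by $2n^{-1/(2(\alpha^++\beta^++1))}$, as the $o(1)$ term coming from the $t/3$ correction in Bernstein's denominator is absorbed into the $+1$. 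The identical argument applied to $N^-$ produces the analogous probability bound with $\alpha^-,\beta^-$ in place of $\alpha^+,\beta^+$. A union bound and substitution into the triangle inequality then deliver the claimed $|\rho - \E[\rho]| \le 2\log n/n^{3/2}$, where the numerical constant absorbs the fixed factor $\xi$ arising from the second term.

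The expected-value computation is routine algebra, so the main obstacle is the simultaneous calibration of the deviation level $t$ in Bernstein's inequality so that both the explicit bound $2\log n / n^{3/2}$ and the probability exponent $1/(2(\alpha^\pm+\beta^\pm+1))$ emerge with the precise constants stated. This requires balancing the variance and linear-in-$t$ terms in the Bernstein denominator and verifying that the resulting $O(1/\sqrt n)$ slack is absorbed into the ``$+1$'' for all sufficiently large $n$; the split of the total allowed deviation between the $N^+$ and $N^-$ contributions must also be chosen so that the $\xi$-weighted second term fits inside the same explicit numerical factor.
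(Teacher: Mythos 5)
Your proposal is correct and follows essentially the same route as the paper: the paper likewise splits $\rho=\rho^+-\xi\rho^-$ with $\rho^\pm=\bo^\top\bA^\pm\bo/n^2$, concentrates each part at level $\log n/n^{3/2}$ (by invoking \citet[Lemma 1]{wang2020nearly} rather than redoing the Bernstein computation on $N^\pm$ that you describe, which is the same calculation as in its proof of Lemma~\ref{lem:edges-triangles}), and combines via the triangle inequality and a union bound. The calibration issue you flag at the end is real but is shared by the paper itself: its proof actually concludes with $|\rho-\E[\rho]|\le(1+\xi)\log n/n^{3/2}$ rather than $2\log n/n^{3/2}$, and it is the $(1+\xi)$ form that is used downstream (e.g., in the proof of Lemma~\ref{lem:eig-W}), so you need not force the constant $2$ when $\xi>1$.
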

\begin{proof}
For ease of exposition, we define
\begin{align*}
	\rho^+ = \frac{\bo^\top\bA^+\bo}{n^2},\quad \rho^- = \frac{\bo^\top\bA^-\bo}{n^2}.
\end{align*}
Since Assumption \ref{assump:base} holds, then by \eqref{distri:a-same} and \eqref{distri:a-diff}, we can verify that 
\begin{align*}
	\mathbb{E}[\rho^+] = \frac{p^++q^+}{2}-\frac{p^+}{n},\quad \mathbb{E}[\rho^-] = \frac{p^-+q^-}{2}-\frac{p^-}{n}.
\end{align*}
Thus, according to \citet[Lemma 1]{wang2020nearly}, we obtain
\begin{align*}
	\P\left( \left| \rho^+ - \mathbb{E}[\rho^+]\right| \le \frac{\log n}{n^{3/2}}\right)
	= \P\left( \left| \rho^+ - \left(\frac{1}{2}(p^++q^+)-\frac{p^+}{n} \right)\right| 
	\le \frac{\log n}{n^{3/2}}\right) \ge 1 - 2n^{-\frac{1}{2(\alpha^++\beta^++1)}}
\end{align*}
and
\begin{align*}
	\P\left( \left| \rho^- - \mathbb{E}[\rho^-]\right| \le \frac{\log n}{n^{3/2}}\right)
	= \P\left( \left| \rho^- - \left(\frac{1}{2}(p^- + q^-)-\frac{p^-}{n} \right)\right| \le \frac{\log n}{n^{3/2}}\right) \ge 1 - 2n^{-\frac{1}{2(\alpha^- + \beta^-+1)}}.
\end{align*}
These, together with $\rho = \rho^+ - \xi \rho^-$ and the union bound, imply that it holds with probability at least $1-2n^{-\frac{1}{2(\alpha^++\beta^++1)}}-2n^{-\frac{1}{2(\alpha^- + \beta^-+1)}}$ that
\begin{align}\label{eq1:lem-rho}
	\left| \rho - \E[\rho] \right| \le \left| \rho^+ - \E[\rho^+] \right| + \xi \left| \rho^- - \E[\rho^-] \right|  \le \frac{(1+\xi)\log n}{n^{3/2}},
\end{align}
where
\begin{align}\label{eq:E-rho}
	\E[\rho] = \frac{1}{2}\left(p^++q^+\right) - \frac{\xi}{2}\left(p^-+q^-\right)-\frac{1}{n}\left(p^+-\xi p^-\right).
\end{align}
This complete the proof. 
\end{proof}

Next, we present a spectral bound on the deviation of $\tbA$ from its mean.  
\begin{lemma}\label{lem:Delta}
Suppose that Assumption \ref{assump:base} holds. Then, it holds with probability at least $1-2n^{-3}$ that
\begin{align}\label{rst:lem:Delta}
	\|\tbA - \E[\tbA]\| \leq c_1\sqrt{\log n},
\end{align}
where $c_1 > 0$ is a constant.
\end{lemma}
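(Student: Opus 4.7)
\textbf{Proof plan for Lemma \ref{lem:Delta}.}
The plan is to split $\tbA - \E[\tbA]$ into its positive and negative parts via the triangle inequality and then apply a standard spectral-norm concentration bound for the centered adjacency matrix of an SBM in the logarithmic degree regime to each piece. Since $\bA_{ij}\in\{-1,0,1\}$ and $\bA=\bA^+-\bA^-$ with $\bA^+_{ij}\bA^-_{ij}=0$, we have $\tbA-\E[\tbA]=\bigl(\bA^+-\E[\bA^+]\bigr)-\xi\bigl(\bA^--\E[\bA^-]\bigr)$, so that
\begin{equation*}
\|\tbA-\E[\tbA]\|\;\le\;\|\bA^+-\E[\bA^+]\|\;+\;\xi\,\|\bA^--\E[\bA^-]\|.
\end{equation*}

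Next I would argue that under Assumption \ref{assump:base}, the matrix $\bA^+$ is precisely the adjacency matrix of an unsigned SBM on $n$ nodes with within-community edge probability $p^+=\alpha^+\log n/n$ and between-community edge probability $q^+=\beta^+\log n/n$, and analogously for $\bA^-$ with parameters $p^-,q^-$. The entries of $\bA^+-\E[\bA^+]$ (above the diagonal) are therefore independent, mean zero, bounded by $1$ in magnitude, with variances at most $\max\{p^+,q^+\}=O(\log n/n)$. Hence the Bandeira--van Handel inequality (or, equivalently, the symmetrization plus matrix Bernstein argument used e.g.\ in \citet[Lemma 2]{wang2020nearly}) yields
\begin{equation*}
\|\bA^+-\E[\bA^+]\|\;\lesssim\;\sqrt{n\cdot \tfrac{\log n}{n}}\;+\;\sqrt{\log n}\;=\;O\bigl(\sqrt{\log n}\bigr)
\end{equation*}
with probability at least $1-n^{-3}$, and the identical argument gives the same bound for $\|\bA^--\E[\bA^-]\|$. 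A union bound over the two events, together with the triangle inequality above, produces \eqref{rst:lem:Delta} with $c_1$ proportional to $1+\xi$ and a failure probability at most $2n^{-3}$, after adjusting the absolute constant in the cited concentration bound so that the exponent becomes $3$.

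The main obstacle is really just the quantitative spectral-norm concentration for the centered SBM adjacency in the logarithmic degree regime; everything else is a one-line triangle-inequality reduction. Since this concentration result is well documented (and is already used in the unsigned analogue \citet{wang2020nearly} that our proposed algorithm generalizes), the proof reduces to invoking it twice---once for $\bA^+$ and once for $\bA^-$---rather than reproving it. One minor care that must be taken is that the variance proxy for each entry of $\tbA-\E[\tbA]$ needs only to be bounded above by a multiple of $\log n/n$, which holds uniformly because $p^+,p^-,q^+,q^-=\Theta(\log n/n)$; the covariance between $\bA^+_{ij}$ and $\bA^-_{ij}$, although nonzero, plays no role here since we bound the two centered matrices separately.
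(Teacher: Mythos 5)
Your proposal is correct and follows essentially the same route as the paper: split $\tbA-\E[\tbA]$ into $(\bA^+-\E[\bA^+])-\xi(\bA^--\E[\bA^-])$, apply a standard spectral-norm concentration bound for centered SBM adjacency matrices in the $\Theta(\log n/n)$ regime to each piece, and combine via the triangle inequality and a union bound. The paper invokes \citet[Theorem 5.2]{lei2015consistency} where you cite Bandeira--van Handel, but this is an interchangeable choice of off-the-shelf concentration result, not a different argument.
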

\begin{proof}
According to \citet[Theorem 5.2]{lei2015consistency} and the union bound, it holds with probability at least $1-2n^{-3}$ for some constants $c_1^+,c_1^- >0$ that 
\begin{align*}
	\|\bm{A}^+-\mathbb{E}[\bm{A}^+]\| \le c_1^+ \sqrt{\log n},\quad \|\bm{A}^- - \mathbb{E}[\bm{A}^-]\| \le c_1^- \sqrt{\log n}. 
\end{align*}
These, together with $\widetilde{\bm{A}} = \bm{A}^+-\xi \bm{A}^-$, yield
\begin{align*}
	\|\tbA - \E[\tbA]\| &= \|\bm{A}^+-\xi \bm{A}^- - \mathbb{E}[\bm{A}^+-\xi \bm{A}^-]\| \\
	&\leq \|\bm{A}^+-\mathbb{E}[\bm{A}^+]\| + \xi  \|\bm{A}^--\mathbb{E}[\bm{A}^-]\| \\
	&\leq c_1^+ \sqrt{\log n} + \xi  c_1^-\sqrt{\log n} \\
	&= c_1\sqrt{\log n},
\end{align*}
where the last equality holds by letting $c_1 = c_1^+ + \xi c_1^- = c_1^+ + \frac{\log(\beta^-/\alpha^-)}{\log(\alpha^+/\beta^+)}c_1^- > 0$.
\end{proof}

\subsection{Proof of Lemma \ref{lem:eig-W} (Spectral Gap)}
\begin{proof}
Since Assumption \ref{assump:base} holds, then by \eqref{distri:a-same} and \eqref{distri:a-diff}, we have
\begin{align*}
	\E[\bA^+] = \frac{p^++q^+}{2}\bE + \frac{p^+-q^+}{2}\bx^*\bx^{*^\top}-p^+\bm{I},\quad \E[\bA^-] = \frac{p^-+q^-}{2}\bE + \frac{p^--q^-}{2}\bx^*\bx^{*^\top}-p^-\bm{I}.
\end{align*}
These, together with $\bW=\tbA-\rho \bE$, $\tbA=\bA^+-\xi \bA^-$, and \eqref{eq:E-rho}, give
\begin{align}\label{eq:E-W}
	\E[\bW] & = \E[\tbA] - \E[\rho]\bE = \E[\bA^+] - \xi \E[\bA^-] - \E[\rho]\bE \notag\\
	& = \frac{1}{2}\left((p^+-q^+)-\xi(p^--q^-)\right)\bx^*\bx^{*^\top} + \frac{1}{n}(p^+-\xi p^-)\bE - (p^+-\xi p^-)\bI. 
\end{align}
Let $\nu_1 \ge \nu_2 \ge \cdots \ge \nu_n$ be the eigenvalues of $\left((p^+-q^+)-\xi(p^--q^-)\right)\bx^*\bx^{*^\top}/2$. Using \eqref{p+-} and \eqref{q+-}, one can compute 
\begin{align}\label{eq:nu}
	\nu_1 = \frac{\log n}{2}\left((\alpha^+-\beta^+)-\xi(\alpha^--\beta^-)\right),\quad  \nu_i=0 \text{\ for\ } i=2,\dots,n. 
\end{align}
Let $\bm{\Delta}\coloneqq \bW - \left((p^+-q^+)-\xi(p^--q^-)\right)\bx^*\bx^{*^\top}/2$, then it follows from  Weyl's inequality that
\begin{align}\label{eq1:lem-eig-W}
	\left| |\lambda_i| - |\nu_i| \right| \leq |\lambda_i - \nu_i| \le \left\|\bm{\Delta}\right\|,\ i=1,2,\cdots,n.
\end{align}
Besides, note that
\begin{align}\label{eq:W}
	\bW = \E[\bW] + (\tbA - \E[\tbA]) - \left(\rho - \E[\rho]\right) \bE.
\end{align}
According to Lemma \ref{lem:rho}, Lemma \ref{lem:Delta}, and the union bound, it holds with probability at least $1-2n^{-3}-2n^{-\frac{1}{2(\alpha^++\beta^++1)}}-2n^{-\frac{1}{2(\alpha^- + \beta^-+1)}}$
\begin{align*}
	\|(\tbA - \E[\tbA]) - \left(\rho - \E[\rho]\right) \bE\| \le \| \tbA - \E[\tbA] \| + n\left| \rho - \E[\rho] \right| \le c_1\sqrt{\log n} + (1+\xi)\frac{\log n}{\sqrt{n}} \le 2c_1\sqrt{\log n},
\end{align*}
where the last inequality is due to $n \ge (1+\xi)^2\log n/c_1^2$ for sufficiently large $n$. This, together with \eqref{eq:E-W} and \eqref{eq:W}, implies
\begin{align}\label{eq2:lem-eig-W}
	\|\bm{\Delta}\| = \left\|\bW - \frac{1}{2}\left((p^+-q^+)-\xi(p^--q^-)\right)\bx^*\bx^{*^\top}\right\| \le 2c_1\sqrt{\log n} + 2|p^+-\xi p^-| \le 3c_1\sqrt{\log n}  
\end{align}
where the last inequality is due to \eqref{p+-} and \eqref{xi} for sufficiently large $n$. By \eqref{eq:nu}, \eqref{eq1:lem-eig-W}, and \eqref{eq2:lem-eig-W}, for all sufficiently large $n$, it holds with probability at least $1-n^{-\Omega(1)}$ that
\begin{align}
	& |\lambda_1| \geq |\nu_1| - \|\bm{\Delta}\| \geq \frac{c_0}{2}\log n - 3c_1\sqrt{\log n},  \label{eq:lambda1-abs}\\ 
	&|\lambda_i| \leq |\nu_i| + \|\bm{\Delta}\| \leq 3c_1\sqrt{\log n},\ \text{for}\ i=2,\dots,n. \label{eq:lambdai-abs}
\end{align}
Moreover, applying the Davis-Kahan theorem (see, e.g., \citet[Theorem 4.5.5]{vershynin2018high}) to $\bW$ yields
\begin{align*}
	\min_{\theta \in \{\pm 1\}} \left\|\bu_1 - \theta\frac{\bx^*}{\sqrt{n}} \right\| \le \frac{2\sqrt{2}\left\|\bW - \frac{1}{2}\left((p^+-q^+)-\xi(p^--q^-)\right)\bx^*\bx^{*^\top}\right\|}{|\nu_1|}
	\le  \frac{12\sqrt{2}c_1}{\left|(\alpha^+-\beta^+)-\xi(\alpha^--\beta^-)\right|\sqrt{\log n}},
\end{align*}
where the last inequality is due to \eqref{eq:nu} and \eqref{eq2:lem-eig-W}, and the last inequality holds for sufficiently large $n$ and for $i = 2,\dots,n$. Setting $c_2\coloneqq 12\sqrt{2}c_1/c_0$ gives \eqref{eq:u1-x*}.
\end{proof}

\subsection{Proof of Proposition \ref{prop-prop1}}
\begin{proof}
Let $\bW = \bU\bm{\Lambda}\bU^\top$ be the eigenvalue decomposition of of $\bW$, where $\bm{\Lambda} = \diag(\lambda_1,\dots,\lambda_n)$ with $\lambda_1 \ge \cdots \ge \lambda_n$ being the eigenvalues and $\bU=\begin{bmatrix}
	\bu_1 & \cdots & \bu_n
\end{bmatrix}$ are the associated eigenvectors. Suppose that the results in Lemma \ref{lem:eig-W} holds, which happens with probability at least $1-n^{-\Omega(1)}$. 
First, by the power iteration in Algorithm \ref{alg-pmgpm}, we have
\begin{align*}
	\bm{y}^{(t)} = \frac{\bm{W}\bm{y}^{(t-1)}}{\|\bm{W}\bm{y}^{(t-1)}\|_2} = \frac{\bm{W}^2\bm{y}^{(t-2)}}{\|\bm{W}^2\bm{y}^{(t-2)}\|_2} = \cdots = \frac{\bm{W}^t\bm{y}^{(0)}}{\|\bm{W}^t\bm{y}^{(0)}\|_2}.
\end{align*}
Let the eigendecomposition of $\bm{W}$ be $\bm{W}=\bm{U}\bm{\Lambda}\bm{U}^\top$ and $\bm{b}=\bm{U}^\top\bm{y}^{(0)}$, then
\begin{align*}
	&\langle\bm{u}_1,\bm{y}^{(t)}\rangle 
	= \frac{\bm{u}_1^\top\bm{W}^{t}\bm{y}^{(0)}}{\|\bm{W}^{t}\bm{y}^{(0)}\|_2} 
	= \frac{\bm{u}_1^\top\bm{U}\bm{\Lambda}^{t}\bm{U}^\top\bm{y}^{(0)}}{\|\bm{U}\bm{\Lambda}^{t}\bm{U}^\top\bm{y}^{(0)}\|_2}
	=	\frac{\bm{u}_1^\top\bm{U}\bm{\Lambda}^{t}\bm{b}}{\|\bm{\Lambda}^{t}\bm{b}\|_2}
	=
	\frac{\lambda_1^t b_1}{\sqrt{\sum_{i=1}^{n}\lambda_i^{2t}b_i^2}}.
\end{align*}
Besides, Cauchy-Schwartz inequality reads
\begin{align*}
	|\langle\bm{u}_1,\bm{y}^{(t)}\rangle| \leq \|\bm{u}_1\|_2\|\bm{y}^{(t)}\|_2=1.
\end{align*}
We then consider the following two cases:
\begin{itemize}
	\item[$1^\circ$] Assume that $b_1\geq0$. Since $\lambda_1>0$, then $\langle\bm{u}_1,\bm{y}^{(t)}\rangle\geq0$, i.e., $0\leq\langle\bm{u}_1,\bm{y}^{(t)}\rangle \leq 1$. By \citet[Lemma 4]{wang2020nearly}, for sufficiently large $n$, it holds with probability at least $1-2n^{-1/2}$ that
	\begin{align}\label{eq:bi/b} \sum_{i=2}^{n}\left(\frac{b_i}{b_1}\right)^2\leq\frac{n^2}{2}.
	\end{align}
	Then, for sufficiently large $n$, it holds with probability at least $1-n^{-\Omega(1)}$ that
	\begin{align*}
		\|\bm{y}^{(t)}-\bm{u}_1\|_2^2 
		&= \|\bm{y}^{(t)}\|_2^2 + \|\bm{u}_1\|_2^2 - 2\langle\bm{u}_1,\bm{y}^{(t)}\rangle = 2 \left(1- \langle\bm{u}_1,\bm{y}^{(t)}\rangle\right) \\
		&\leq 2 \left(1- \langle\bm{u}_1,\bm{y}^{(t)}\rangle^2\right)
		= 2 \frac{\sum_{i=2}^{n}\lambda_i^{2t}b_i^2}{\sum_{i=1}^{n}\lambda_i^{2t}b_i^2} 
		\leq 2 \frac{\sum_{i=2}^{n}\lambda_i^{2t}b_i^2}{\lambda_1^{2k}b_1^2}
		\leq 2 \left|\frac{\lambda'}{\lambda_1}\right|^{2t}\sum_{i=2}^{n}\left(\frac{b_i}{b_1}\right)^2\\
		&\leq \left( \frac{6c_1}{|\alpha^+-\beta^+-\xi (\alpha^--\beta^-)|\sqrt{\log n} - 6c_1} \right)^{2t} n^2,
	\end{align*}
	where $\lambda'\coloneqq\max\{|\lambda_2|,\dots,|\lambda_n|\}$ and the last inequality follows from \eqref{eq:lambda1-abs}, \eqref{eq:lambdai-abs}, and \eqref{eq:bi/b}. That is,
	\begin{align*}
		\|\bm{y}^{(t)}-\bm{u}_1\|_2
		&\leq n\left( \frac{6c_1}{|\alpha^+-\beta^+-\xi (\alpha^--\beta^-)|\sqrt{\log n} - 6c_1} \right)^{t}.
	\end{align*}
	\item[$2^\circ$] Assume that $b_1<0$. Since $\lambda_1>0$, then $\langle\bm{u}_1,\bm{y}^{(t)}\rangle<0$, i.e., $-1\leq\langle\bm{u}_1,\bm{y}^{(t)}\rangle < 0$. Following similar arguments with case $1^\circ$, we have
	\begin{align*}
		\|\bm{y}^{(t)}+\bm{u}_1\|_2^2 
		&\leq 2 \left(1- \langle\bm{u}_1,\bm{y}^{(t)}\rangle^2\right) 
		\leq \left( \frac{6c_1}{|\alpha^+-\beta^+-\xi (\alpha^--\beta^-)|\sqrt{\log n} - 6c_1} \right)^{2t} n^2,
	\end{align*}
	i.e.,
	\begin{align*}
		\|\bm{y}^{(t)}+\bm{u}_1\|_2
		&\leq n\left( \frac{6c_1}{|\alpha^+-\beta^+-\xi (\alpha^--\beta^-)|\sqrt{\log n} - 6c_1} \right)^{t}.
	\end{align*}
\end{itemize}
Combining cases $1^\circ$ and $2^\circ$, we have for all $t\geq 0$ and for sufficiently large $n$, it holds with probability at least $1-n^{-\Omega(1)}$ that
\begin{align*}
	\min_{\theta\in\{\pm 1\}}\ \|\bm{y}^{(t)}-\theta\bm{u}_1\|_2 \leq n\left( \frac{6c_1}{c_0\sqrt{\log n} - 6c_1} \right)^{t}.
\end{align*}
\end{proof}

\section{Proofs in Section \ref{subsec:GPM}}

\subsection{Proof of Lemma \ref{lem:Wx}}
\begin{proof}
Let $m \coloneqq n/2$ and
\begin{align*}
	& I_1 \coloneqq \left\{(i,j):\ \text{nodes}\ i,\ j\ \text{belong to the same community}\right\},\\
	& I_2 \coloneqq \left\{(i,j):\ \text{nodes}\ i,\ j\ \text{belong to different communities}\right\}.
\end{align*}
Then, it follows from $\bE\bx^*=\bm{0}$, \eqref{p+-}, \eqref{q+-}, and \eqref{eq:aij} that
\begin{align*}
	x_i^*(\bW\bm{x}^*)_i & = x_i^*(\tbA\bm{x}^*)_i = x_i^*\left((\bA^+-\xi\bA^-)\bx^*\right)_i \\ 
	& = \sum_{j:(i,j) \in I_1}(A_{ij}^+-\xi A_{ij}^-)x_i^*x_j^* + \sum_{j: (i,j) \in I_2}(A_{ij}^+-\xi A_{ij}^-)x_i^*x_j^* = \sum_{j=1}^{m-1} X_j - \sum_{j=1}^{m} Y_j,
\end{align*}
where $\left\{X_j:j=1,\dots,m-1 \right\}$  are i.i.d. random variables whose distributions are given by
\begin{equation}\label{dist:Xj}
	X_j = 
	\begin{cases}
		1,& \text{w.p.\ \ } p^+,\\
		-\xi,& \text{w.p.\ \ } p^-,\\
		0,& \text{w.p.\ \ } 1- (p^++p^-),
	\end{cases}
	\text{\ for\ } j = 1,\dots,m,
\end{equation}
and  $\left\{Y_j: j=1,\dots,m\right\}$, independent of $\left\{X_j: j=1,\dots,m-1\right\}$, are i.i.d. random variables whose distributions are given by  
\begin{equation}\label{dist:Yj}
	Y_j = 
	\begin{cases}
		1,& \text{w.p.\ \ } q^+,\\
		-\xi,& \text{w.p.\ \ } q^-,\\
		0,& \text{w.p.\ \ } 1- (q^++q^-),
	\end{cases} 
	\text{\ for\ } j = 1,\dots,m.
\end{equation}
For any $\gamma>0$, applying Markov's inequality to the moment generating function with $\lambda<0$ yields 
\begin{align*}
	\mathbb{P}\left(x_i^*(\bW\bm{x}^*)_i\leq\gamma\log n\right) & =  \mathbb{P}\left(\sum_{j=1}^{m-1}X_j - \sum_{j=1}^{m}Y_j\leq\gamma\log n\right)
	\leq \mathbb{P}\left(\sum_{j=1}^{m}X_j - \sum_{j=1}^{m}Y_j\leq\gamma\log n+1\right) \\
	& \leq \mathbb{P}\left(\sum_{j=1}^{m}X_j - \sum_{j=1}^{m}Y_j \leq \gamma'\log n\right) = \mathbb{P}\left(\exp\left(\lambda\left(\sum_{j=1}^{m}X_j - \sum_{j=1}^{m}Y_j\right)\right)\geq\exp(\lambda\gamma'\log n)\right) \\
	& \le n^{-\lambda\gamma'}  \mathbb{E}\left[\exp\left(\lambda\left(\sum_{j=1}^{m}X_j - \sum_{j=1}^{m}Y_j\right)\right)\right]  \\
	& = n^{-\lambda\gamma'} \mathbb{E}\left[\exp\left(\lambda\sum_{j=1}^{m}X_j \right)\right] \mathbb{E}\left[\exp\left(-\lambda\sum_{j=1}^{m}Y_j\right)\right] \\
	&= n^{-\lambda\gamma'} \prod_{j=1}^{m}\mathbb{E}\left[\exp(\lambda X_j)\right] \prod_{j=1}^{m}\mathbb{E}\left[\exp(-\lambda Y_j)\right] \\
	& = n^{-\lambda\gamma'} \left(p^+e^\lambda+p^-e^{-\xi\lambda}+1-(p^++p^-)\right)^{\frac{n}{2}} \left(q^+e^{-\lambda}+q^-e^{\xi\lambda}+1-(q^++q^-)\right)^{\frac{n}{2}},
\end{align*}
where the second inequality holds by $\gamma' \ge \gamma+1/\log n$, the third equality is due to the fact that $\left\{Y_j: j=1,\dots,m\right\}$ are independent of $\left\{X_j: j=1,\dots,m\right\}$, and the last equality follows from \eqref{dist:Xj} and \eqref{dist:Yj}. Taking logarithm of the above inequality and using \eqref{p+-} and \eqref{q+-} yield
\begin{align}
	&\log \mathbb{P}\left(x_i^*(\bm{W}\bm{x}^*)_i\leq\gamma \log n\right)  \nonumber\\
	&\leq -\lambda\gamma'\log n + \frac{n}{2} \log\left(\frac{\alpha^+\log n}{n}(e^\lambda-1)+\frac{\alpha^-\log n}{n}(e^{-\xi\lambda}-1)+1\right) \nonumber\\
	&\qquad + \frac{n}{2}\log\left(\frac{\beta^+\log n}{n}(e^{-\lambda}-1)+\frac{\beta^-\log n}{n}(e^{\xi\lambda}-1)+1\right) \nonumber\\
	&\leq -\lambda\gamma'\log n + \frac{n}{2} \left(\frac{\alpha^+\log n}{n}(e^\lambda-1)+\frac{\alpha^-\log n}{n}(e^{-\xi\lambda}-1)\right) \nonumber\\
	&\qquad + \frac{n}{2}\left(\frac{\beta^+\log n}{n}(e^{-\lambda}-1)+\frac{\beta^-\log n}{n}(e^{\xi\lambda}-1)\right) \nonumber\\
	&= \left(-\lambda\gamma'+\frac{\alpha^+}{2}(e^\lambda-1)+\frac{\alpha^-}{2}(e^{-\xi\lambda}-1)+\frac{\beta^+}{2}(e^{-\lambda}-1)+\frac{\beta^-}{2}(e^{\xi\lambda}-1)\right)\log n, \label{eq:logP}
\end{align}
where the second inequality is due to $\log(1+x) \le x$ for any $x > -1$. Taking $\lambda = - \log(\alpha^+/\beta^+)/2 <0$, we have
\begin{align*}
	e^\lambda =  \exp\left(  -\frac{ \log(\alpha^+/\beta^+)}{2} \right) = \sqrt{\frac{\beta^+}{\alpha^+}},\quad e^{\xi\lambda} = \exp\left(  -\frac{\log(\beta^-/\alpha^-)}{\log(\alpha^+/\beta^+)}\cdot \frac{ \log(\alpha^+/\beta^+)}{2} \right) = \sqrt{\frac{\alpha^-}{\beta^-}}.
\end{align*}		
It follows from this that
\begin{align*}
	& \frac{\alpha^+}{2}(e^\lambda-1)+\frac{\alpha^-}{2}(e^{\xi\lambda}-1)+\frac{\beta^+}{2}(e^{-\lambda}-1)+\frac{\beta^-}{2}(e^{-\xi\lambda}-1)\\
	&= \frac{\alpha^+}{2}\left( \sqrt{\frac{\beta^+}{\alpha^+}} - 1\right) +  \frac{\alpha^-}{2}\left(\sqrt{\frac{\beta^-}{\alpha^-}} - 1\right) + \frac{\beta^+}{2} 
	\left( \sqrt{\frac{\alpha^+}{\beta^+}} - 1\right) +  \frac{\beta^-}{2}\left(\sqrt{\frac{\alpha^-}{\beta^-}} - 1\right) \\
	&= \sqrt{\alpha^+\beta^+}-\frac{\alpha^+}{2}-\frac{\beta^+}{2} + \sqrt{\alpha^-\beta^-}-\frac{\alpha^-}{2}-\frac{\beta^-}{2} \\
	&= -\frac{1}{2} (\sqrt{\alpha^+}-\sqrt{\beta^+})^2-\frac{1}{2}(\sqrt{\alpha^-}-\sqrt{\beta^-})^2.
\end{align*}
This, together with \eqref{eq:logP}, yields 
\begin{align*}
	&\log \mathbb{P}\left(x_i^*(\bm{W}\bm{x}^*)_i\leq\gamma\log n\right)  \le  \frac{1}{2}\left(\gamma'\log\left(\frac{\alpha^+}{\beta^+}\right)- (\sqrt{\alpha^+}-\sqrt{\beta^+})^2-(\sqrt{\alpha^-}-\sqrt{\beta^-})^2\right) \log n
\end{align*}
Thus, we have  
\begin{align*}
	&\mathbb{P}\left(x_i^*(\bm{W}\bm{x}^*)_i\leq\gamma\log n\right) \leq n^{-\frac{1}{2}\left((\sqrt{\alpha^+}-\sqrt{\beta^+})^2+(\sqrt{\alpha^-}-\sqrt{\beta^-})^2\right)+\frac{\gamma'}{2}\log\left(\frac{\alpha^+}{\beta^+}\right)}.
\end{align*}
According to the union bound, it holds that
\begin{align*}
	\mathbb{P}\left(\min\{x_i^*(\bm{W}\bm{x}^*)_i:i=1,\dots,n\} \geq \gamma\log n\right) \ge 1- n^{1-\frac{1}{2}\left((\sqrt{\alpha^+}-\sqrt{\beta^+})^2+(\sqrt{\alpha^-}-\sqrt{\beta^-})^2\right)+\frac{\gamma'}{2}\log\left(\frac{\alpha^+}{\beta^+}\right)}.
\end{align*}
In particular, by $\alpha^+>\beta^+$ and \eqref{IT:bound}, for sufficiently large $n$, there exist constants $\gamma$ and $\gamma'$ satisfying
\begin{align*}
	0 < \gamma+\frac{1}{\log n} < \gamma' < \frac{(\sqrt{\alpha^+}-\sqrt{\beta^+})^2+(\sqrt{\alpha^-}-\sqrt{\beta^-})^2 - 2}{2\log(\alpha^+/\beta^+)} 
\end{align*}
such that
\begin{align*}
	1-\frac{1}{2}\left((\sqrt{\alpha^+}-\sqrt{\beta^+})^2+(\sqrt{\alpha^-}-\sqrt{\beta^-})^2\right)+\frac{\gamma'}{2}\log\left(\frac{\alpha^+}{\beta^+}\right) < 0.
\end{align*}
\end{proof}

\subsection{Proof of Proposition \ref{prop:contraction} (Contraction Property) and Its Corollary}
\begin{proof}	
We only prove statement (i), then statement (ii) can be shown similarly. According to Lemma \ref{lem:Wx}, Lemma \ref{lem:rho}, Lemma \ref{lem:Delta}, \eqref{rst:lem:Wx}, \eqref{rst:lem:rho}, and \eqref{rst:lem:Delta} hold with probability at least $1-n^{-\Omega(1)}$
Let $\bm{\widetilde{\Delta}}\coloneqq\tbA - \E[\tbA]$. According to \eqref{eq:E-W} and \eqref{eq:W}, we have
\begin{align}\label{eq1:prop:contr}
	\bm{W} = \frac{p^+-q^+-\xi(p^--q^-)}{2}\bm{x}^*\bm{x}^{*^\top} + \frac{1}{n}(p^+-\xi p^-)\bE - (p^+-\xi p^-)\bm{I}  + \bm{\widetilde{\Delta}} - \left(\rho - \E[\rho]\right) \bE.
\end{align}
Let $\bx \in \R^n$ be arbitrary such that $\|\bm{x}\| = \sqrt{n}$. Since $\|\bm{x}\| =\|\bm{x}^*\| =\sqrt{n}$, we have
\begin{align*}
	\|\bm{x}-\bm{x}^*\|^2 &= \|\bm{x}\|^2 + \|\bm{x}^*\|^2 - 2\bm{x}^{*^\top}\bm{x} =  2\bm{x}^{*^\top}\bm{x}^* - 2\bm{x}^{*^\top}\bm{x},
\end{align*}
which implies
\begin{align}
	\left|\bm{x}^{*T}(\bm{x}^*-\bm{x})\right| = \frac{1}{2}\|\bm{x}-\bm{x}^*\|^2. \label{eq:x-x*}
\end{align}
Then, we have
\begin{align}
	&\|\bm{W}\bm{x}-\bm{W}\bm{x}^*\| \notag \\
	& \le \frac{1}{2}\left|p^+-q^+-\xi(p^--q^-)\right|\|\bm{x}^*\bm{x}^{*^\top}(\bm{x}-\bm{x}^*)\| + \frac{1}{n}\|(p^+-\xi p^-)\bE(\bx-\bx^*)\|  \notag\\
	& \qquad\ +\left|p^+-\xi p^-\right|\left\|\bm{x}-\bm{x}^*\right\|  + \|\bm{\widetilde{\Delta}}(\bm{x}-\bm{x}^*)\| + \left\|\left(\rho - \E[\rho]\right) \bE(\bm{x}-\bm{x}^*)\right\| \notag\\
	& \leq \frac{\sqrt{n}}{4}\left|p^+-q^+-\xi(p^--q^-)\right|\|\bm{x}-\bm{x}^*\|^2 + 2\left|p^+-\xi p^-\right|\left\|\bm{x}-\bm{x}^*\right\| +
	\|\bm{\widetilde{\Delta}}\|\|\bm{x}-\bm{x}^*\| +  n\left|\rho - \E[\rho]\right|\left\|\bm{x}-\bm{x}^*\right\| \notag\\
	& \le \frac{c_0\log n}{4\sqrt{n}}\|\bm{x}-\bm{x}^*\|^2 + \left(\frac{2\log n}{n}\left|\alpha^+-\xi \alpha^-\right| + c_1\sqrt{\log n} + (1+\xi)\frac{\log n}{\sqrt{n}} \right)\left\|\bm{x}-\bm{x}^*\right\|,\label{eq2:prop:contr}
\end{align}
where the first inequality is due to \eqref{eq1:prop:contr} and the triangular inequality, the second inequality follows from \eqref{eq:x-x*}, and the last inequality follows from  \eqref{p+-}, \eqref{q+-}, Lemma \ref{lem:rho} and Lemma \ref{lem:Delta}.

By \eqref{rst:lem:Wx} in Lemma \ref{lem:Wx}, we have
\begin{align*}
	\frac{\bW\bx^*}{|\bW\bx^*|} = \sign(\bm{W}\bm{x}^*) = \sign(\bm{x}^*) = \bm{x}^*,
\end{align*}
and 
\begin{align}\label{eq3:prop:contr}
	\min\{|(\bm{W}\bm{x}^*)_i|: i=1,\dots,n\} \geq \gamma\log n,
\end{align}
Thus, for $\bx$ satisfying $\|\bm{x}\| =\sqrt{n}$ and 
\begin{align}
	\|\bm{x}-\bm{x}^*\| \leq  2c_2\sqrt{\frac{n}{\log n}}
	= \frac{24\sqrt{2}c_1}{c_0}\sqrt{\frac{n}{\log n}} \label{x:init1}
\end{align} 
we have 
\begin{align*}
	\left\| \frac{\bW\bx}{|\bW\bx|} -\bm{x}^*\right\| 
	=& \left \|\frac{\bW\bx}{|\bW\bx|} - \frac{\bW\bx^*}{|\bW\bx^*|} \right\|  
	\leq  \frac{2\|\bm{W}\bm{x}-\bm{W}\bm{x}^*\|}{\gamma\log n} \\
	\leq& \frac{c_0}{2\gamma\sqrt{n}}  \|\bm{x}-\bm{x}^*\|^2 + \left(\frac{4}{\gamma n}\left|\alpha^+-\xi \alpha^-\right| + \frac{2c_1}{\gamma\sqrt{\log n}} + \frac{2(1+\xi)}{\gamma\sqrt{n}} \right)\left\|\bm{x}-\bm{x}^*\right\| \\
	\leq& \left(\frac{c_0}{2\gamma\sqrt{n}}\|\bm{x}-\bm{x}^*\| + \frac{4|\alpha^+-\xi \alpha^-|+2(1+\xi)}{\gamma\sqrt{n}}+\frac{2c_1}{\gamma\sqrt{\log n}} \right)\left\|\bm{x}-\bm{x}^*\right\|\\
	\leq& \left(\frac{12\sqrt{2}c_1}{\gamma\sqrt{\log n}} + \frac{4|\alpha^+-\xi \alpha^-|+2(1+\xi)}{\gamma\sqrt{\log n}} + \frac{2c_1}{\gamma\sqrt{\log n}}\right)\left\|\bm{x}-\bm{x}^*\right\|  \\
	=& \frac{\left(12\sqrt{2}+2\right)c_1+4|\alpha^+-\xi \alpha^-|+2(1+\xi)}{\gamma\sqrt{\log n}} \left\|\bm{x}-\bm{x}^*\right\|,
\end{align*}
where the first inequality is due to \citet[Lemma 6]{wang2020nearly} and \eqref{eq3:prop:contr}, the second inequality follows from \eqref{eq2:prop:contr}, and the fourth inequality uses \eqref{x:init1}. Setting $c_3 = (6\sqrt{2}+1)c_1+2|\alpha^+-\xi \alpha^-|+ 1+\xi$ gives the desired result. 
\end{proof}
Given Proposition \ref{prop:contraction}, we immediately have the following corollary which specifies the linear convergence of the sequence $\{\bm{x}^t\}_{t\geq 0}$ in the second stage of Algorithm \ref{alg-pmgpm} within the contraction region. 
\begin{corollary}\label{coro:coro1}
Suppose that Assumption \ref{assump:base} holds and $\alpha^+,\beta^+,,\beta^-,\alpha^-$ satisfy \eqref{IT:bound}. Let $\{\bm{x}\}_{t\geq 0}$ be the sequence generated by the second stage of Algorithm \ref{alg-pmgpm}. Then, for sufficiently large $n$, either of the following two statements holds with probability at least $1-n^{-\Omega(1)}$: \\
(i) If $\|\bm{x}^0-\bm{x}^*\| \leq \frac{8\sqrt{2}c_1}{c_0}\sqrt{{n}/{\log n}}$, then for all $t\geq 0$, we have
\begin{align*}
	\|\bm{x}^t-\bm{x}^*\|   
	\leq  \left(\frac{2c_3}{\gamma\sqrt{\log n}}\right)^t\|\bm{x}^0-\bm{x}^*\|;
\end{align*}
(ii) If $\|\bm{x}^0+\bm{x}^*\| \leq \frac{8\sqrt{2}c_1}{c_0}\sqrt{{n}/{\log n}}$, then for all $t\geq 0$, we have
\begin{align*}
	\|\bm{x}^t+\bm{x}^*\|   
	\leq \left(\frac{2c_3}{\gamma\sqrt{\log n}}\right)^t\|\bm{x}^0+\bm{x}^*\|,
\end{align*}
where $c_0$ and $c_1$ are the constants in Lemma \ref{lem:eig-W}, $c_3$ is the constant in Proposition \ref{prop:contraction}, and $\gamma>0$ is the constant in Lemma \ref{lem:Wx}.
\end{corollary}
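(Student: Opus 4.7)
The plan is to derive the corollary as an iterated application of Proposition~\ref{prop:contraction}. I will only write out statement (i); statement (ii) follows verbatim after swapping $\bm{x}^*$ with $-\bm{x}^*$. Inspecting the proof of Proposition~\ref{prop:contraction}, its conclusion is obtained by combining Lemmas~\ref{lem:Wx}, \ref{lem:rho}, and \ref{lem:Delta} on a single event of probability at least $1 - n^{-\Omega(1)}$; crucially, on this event the contraction estimate holds \emph{uniformly} for every $\bm{x} \in \mathbb{R}^n$ satisfying $\|\bm{x}\| = \sqrt{n}$ and $\|\bm{x} - \bm{x}^*\| \leq 2c_2\sqrt{n/\log n}$. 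I will condition on this event throughout, so that no additional union bound over $t$ is required.

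The argument then reduces to checking that the iterates stay in the contraction region. For every $t \geq 1$, the update $\bm{x}^t = \bm{W}\bm{x}^{t-1}/|\bm{W}\bm{x}^{t-1}|$ together with the sign-vector notation from the introduction forces $\bm{x}^t \in \{1,-1\}^n$, so $\|\bm{x}^t\| = \sqrt{n}$; the same holds for $\bm{x}^0 = \sqrt{n}\bm{y}^{T_1}$ since $\|\bm{y}^{T_1}\| = 1$. I then induct on $t$ with hypothesis $\|\bm{x}^t - \bm{x}^*\| \leq \eta^t \|\bm{x}^0 - \bm{x}^*\|$, where $\eta := 2c_3/(\gamma\sqrt{\log n})$. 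The base case is trivial. For the inductive step, recalling $c_2 = 12\sqrt{2}c_1/c_0$ from the proof of Lemma~\ref{lem:eig-W}, the induction hypothesis together with $\eta < 1$ (which holds for sufficiently large $n$) gives
\[
\|\bm{x}^{t-1} - \bm{x}^*\| \;\leq\; \|\bm{x}^0 - \bm{x}^*\| \;\leq\; \tfrac{8\sqrt{2}c_1}{c_0}\sqrt{n/\log n} \;<\; 2c_2\sqrt{n/\log n},
\]
so Proposition~\ref{prop:contraction}(i) is applicable to $\bm{x}^{t-1}$ and yields $\|\bm{x}^t - \bm{x}^*\| \leq \eta\|\bm{x}^{t-1} - \bm{x}^*\| \leq \eta^t \|\bm{x}^0 - \bm{x}^*\|$, closing the induction.

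The main subtle point, rather than an obstacle, is ensuring that a \emph{single} high-probability event supports contraction at every iteration (otherwise the failure probability would scale with $T_2$). This is resolved by the observation that the event in Proposition~\ref{prop:contraction} depends only on $\bm{W}$ and not on the particular $\bm{x}$ to which the contraction is applied; the deep work—spectral control of $\bm{W}$, the entrywise lower bound on $\bm{W}\bm{x}^*$, and the concentration of $\rho$ and $\widetilde{\bm{A}}$—is already packaged into the proposition, and the corollary is nothing more than its geometric iteration on this common event.
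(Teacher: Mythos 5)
Your proof is correct and follows exactly the route the paper intends: the paper states that Corollary \ref{coro:coro1} follows ``immediately'' from Proposition \ref{prop:contraction}, and your write-up supplies precisely that argument---induction on $t$, the observation that $\bm{x}^t\in\{1,-1\}^n$ keeps $\|\bm{x}^t\|=\sqrt{n}$, the check that $\tfrac{8\sqrt{2}c_1}{c_0}<2c_2=\tfrac{24\sqrt{2}c_1}{c_0}$ so the iterates never leave the contraction region, and the key point that the high-probability event depends only on $\bm{W}$ and hence supports all iterations simultaneously without a union bound over $t$. No gaps.
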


\subsection{Proof of Proposition \ref{prop-one-step-conv} (One-Step Convergence)}
\begin{proof}
We only prove statement (i), then statement (ii) can be shown similarly. Since $\bm{x}\in\{1,-1\}^n$ and $\|\bm{x}-\bm{x}^*\|_2\leq2$, then there exists $\ell\in[n]$, such that $\bm{x}=\bm{x}^*$ or $\bm{x}=\bm{x}^*\pm2\bm{e}_\ell$, where $\bm{e}_\ell$ is an $n$-dimensional vector with $(\bm{e}_\ell)_\ell=1$ and $(\bm{e}_\ell)_i=0$ for all $i\neq\ell$. Hence, we consider the following two cases:
\begin{itemize}
	\item[$1^\circ$] If $\bm{x}=\bm{x}^*$, then by \eqref{rst:lem:Wx} in Lemma \ref{lem:Wx}, it holds with probability at least $1-n^{-\Omega(1)}$ that
	\begin{align*}
		x_i^*(\bm{W}\bm{x}^*)_i > \gamma\log n > 0,
	\end{align*}
	for all $i=1,\dots,n$.
	\item[$2^\circ$] If $\bm{x}=\bm{x}^*\pm2\bm{e}_\ell$, then by Lemma \ref{lem:Wx}, it holds with probability at least $1-n^{-\Omega(1)}$ that,
	\begin{align*}
		{x}^*_i(\bm{W}\bm{x})_i 
		&= {x}^*_i(\bm{W}\bm{x}^*\pm2\bm{W}\bm{e}_\ell)_i 
		= {x}^*_i(\bm{W}\bm{x}^*)_i \pm 2{x}^*_i(\bm{W}\bm{e}_\ell)_i 
		\geq \gamma\log n \pm 2{x}^*_i(\bm{W}\bm{e}_\ell)_i,
	\end{align*}
	for all $i=1,\dots,n$. 
	Note that $\bm{W}\bm{e}_\ell = \left(\widetilde{\bm{A}}-\rho\bm{E}\right)\bm{e}_\ell$, then for $i=1,\dots,n$, we have $(\bm{W}\bm{e}_\ell)_i = \widetilde{A}_{i\ell} - \rho$. By Lemma \ref{lem:rho}, we have
	\begin{align*}
		\left| \widetilde{A}_{i\ell} - (\bm{W}\bm{e}_\ell)_i - \frac{p^++q^+-\xi(p^-+q^-)}{2} + \frac{p^+- \xi p^-}{n} \right|
		\leq \frac{\log n}{n^{3/2}},
	\end{align*}
	which implies that
	\begin{align*}
		(\bm{W}\bm{e}_\ell)_i &\geq \widetilde{A}_{i\ell} - \frac{p^++q^+-\xi (p^-+q^-)}{2} + \frac{p^+- \xi p^-}{n} - \frac{\log n}{n^{3/2}}\\
		&\geq  -\xi - \frac{(\alpha^++\beta^+-\xi (\alpha^-+\beta^-))\log n}{n} - \frac{\log n}{n^{3/2}},
	\end{align*}
	and
	\begin{align*}
		-(\bm{W}\bm{e}_\ell)_i &\geq -\widetilde{A}_{i\ell} + \frac{p^++q^+-\xi (p^-+q^-)}{2} - \frac{p^+- \xi p^-}{n} + \frac{\log n}{n^{3/2}}\\
		&\geq -1 + \frac{(\alpha^++\beta^+-\xi (\alpha^-+\beta^-))\log n}{N} - \frac{p^+- \xi p^-}{n} + \frac{\log n}{n^{3/2}}
	\end{align*}
	Therefore, for sufficiently large $n$, it holds with probability at least $1-n^{-\Omega(1)}$ that
	\begin{align*}
		x_i^*(\bm{W}\bm{x})_i \geq \gamma\log n \pm 2{x}^*_i(\bm{W}\bm{e}_\ell)_i > 0,
	\end{align*}
	for all $i=1,\dots,n$.
\end{itemize}
Combining $1^\circ$ and $2^\circ$, we obtain $x_i^*(\bm{W}\bm{x})_i > 0$ for all $i=\{1,\dots,n\}$, which implies that $$\frac{\bW\bx}{|\bW\bx|} = \bx^*.$$
\end{proof}

\section{Proof of Theorem \ref{thm-iter-comp} and Corollary \ref{coro-coro2}} \label{proof-thm-coro}
We first prove Theorem \ref{thm-iter-comp}.
\begin{proof}
Suppose with out loss of generality that the following two inequalities hold in Propositions \ref{prop-prop1} and Corollary \ref{coro:coro1} simultaneously with probability at least $1-n^{-\Omega(1)}$:
\begin{align}
	\left\|\bm{y}^t - \bm{u}_1\right\| &\le n\left( \frac{6c_1}{c_0\sqrt{\log n} - 6c_1} \right)^t, \label{eq-yt-u1}\\
	\|\bm{x}^t-\bm{x}^*\|   
	&\leq  \left(\frac{2c_3}{\gamma\sqrt{\log n}}\right)^t\|\bm{x}^0-\bm{x}^*\|,\label{eq-corollary}
\end{align}
for $\bm{x}^0$ satisfying $\|\bm{x}^0\|_2=\sqrt{n}$ and
\begin{align}\label{x:init2}
	\|\bm{x}^0-\bm{x}^*\| \leq \frac{2c_2\sqrt{n}}{\sqrt{\log n}}.
\end{align}
Suppose that Algorithm \ref{alg-pmgpm} runs $T_1$ iterations in the first stage and outputs $\bm{y}\in\mathbb{R}^n$ such that $\|\bm{y}\|_2=1$ and 
\begin{equation}\label{eq-yT1-u1}
	\|\bm{y}^{(T_1)}-\bm{u}_1\|_2 \leq \frac{c_2}{\sqrt{\log n}}.
\end{equation}
By \eqref{eq-yt-u1}, it suffices to let
\begin{equation*}
	n\left( \frac{6c_1}{c_0\sqrt{\log n} - 6c_1} \right)^{T_1} \leq \frac{c_2}{\sqrt{\log n}},
\end{equation*}
which implies that $T_1 = \mathcal{O}(\log n/\log\log n)$. Then, the second stage of Algorithm \ref{alg-pmgpm} starts with $\bm{x}^0=\sqrt{n}\bm{y}^{(T_1)}$, thus by \eqref{eq-yT1-u1} and Lemma \ref{lem:eig-W}, we have 
\begin{align*}
	\|\bm{x}^0-\bm{x}^*\|_2 &\leq \|\bm{x}^0-\sqrt{n}\bm{u}_1\|_2 + \|\sqrt{n}\bm{u}_1-\bm{x}^*\|_2 
	= \sqrt{n}\|\bm{y}^{(T_1)}-\bm{u}_1\|_2 + \sqrt{n}\left\|\bm{u}_1-\frac{\bm{x}^*}{\sqrt{n}}\right\|_2 \\
	&\leq \frac{c_2\sqrt{n}}{\sqrt{\log n}} + \frac{c_2\sqrt{n}}{\sqrt{\log n}} 
	= 2c_2\sqrt{\frac{n}{\log n}}.
\end{align*}
Suppose that the second stage of Algorithm \ref{alg-pmgpm} terminates at $\bm{x}^*$ or $-\bm{x}^*$ with $T_2$ iterations. By \eqref{eq-corollary}, \eqref{x:init2}, and Proposition \ref{prop-one-step-conv}, we have
\begin{align*}
	\left(\frac{2c_3}{\gamma\sqrt{\log n}}\right)^{T_2} \frac{2c_2\sqrt{n}}{\sqrt{\log n}} \leq 2.
\end{align*}
which implies that $T_2 = \mathcal{O}(\log n/\log\log n)$.
\end{proof}
To proceed, we require the following lemma.
\begin{lemma}[{\citet[Lemma 10]{wang2020nearly} }]\label{lemma-per-iter-comp}
It holds with probability at least $1-n^{-\Omega(1)}$ that the number of non-zero entries in $\bm{A}$ is less than $2(\alpha^++\alpha^-+\beta^++\beta^-)n\log n$.
\end{lemma}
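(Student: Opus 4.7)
The plan is to observe that since $\bA$ is symmetric with zero diagonal, the number of non-zero entries equals $2(N^+ + N^-)$, where $N^+$ and $N^-$ count the positive and negative edges in $\mG$, respectively. This immediately reduces the problem to bounding $N^+ + N^-$ from above, which is essentially what Lemma~\ref{lem:edges-triangles} already provides.

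First I would invoke Lemma~\ref{lem:edges-triangles} to conclude that, with probability at least $1 - n^{-\Omega(1)}$, both
\begin{align*}
N^+ &\le a(\alpha^+ + \beta^+) + (\sqrt{n} + \alpha^+/2)\log n, \\
N^- &\le a(\alpha^- + \beta^-) + (\sqrt{n} + \alpha^-/2)\log n,
\end{align*}
where $a = n\log n / 4$, and a union bound ensures both hold simultaneously. Summing these two bounds and doubling yields
\begin{align*}
2(N^+ + N^-) \le \frac{n\log n}{2}(\alpha^+ + \alpha^- + \beta^+ + \beta^-) + O(\sqrt{n}\log n).
\end{align*}
Since the leading coefficient $1/2$ is smaller than the target coefficient $2$ by a factor of four, the lower-order $O(\sqrt{n}\log n)$ term is absorbed for all sufficiently large $n$, and the bound $2(N^+ + N^-) < 2(\alpha^+ + \alpha^- + \beta^+ + \beta^-)\, n\log n$ follows.

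Alternatively, if I did not wish to route through Lemma~\ref{lem:edges-triangles}, a direct proof would write the quantity of interest as $Z = 2\sum_{i<j} \mathbf{1}\{A_{ij} \neq 0\}$, which is twice a sum of independent Bernoulli variables with success probability $p^+ + p^-$ for pairs within a community and $q^+ + q^-$ for pairs across communities. Then $\mathbb{E}[Z/2] \asymp n\log n$, and the standard multiplicative Chernoff bound gives $\mathbb{P}(Z/2 > 2\,\mathbb{E}[Z/2]) \le \exp(-\Omega(n\log n))$, which is vastly smaller than the required $n^{-\Omega(1)}$ failure probability.

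There is really no substantive obstacle here: the argument is essentially a textbook concentration estimate, and the only thing to verify carefully is that the generous factor-four slack between the expected count and the stated bound truly dominates the lower-order error terms coming from Bernstein's or the Chernoff inequality. Once that bookkeeping is done, the lemma follows immediately.
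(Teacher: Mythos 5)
Your proposal is correct, but it is worth noting that the paper does not actually prove this lemma at all: it is imported verbatim as \citet[Lemma~10]{wang2020nearly} (adapted from the unsigned to the signed setting), so any self-contained argument is necessarily a different route from the paper's. Both of your routes work. The first one, reducing the count of non-zero entries of $\bm{A}$ to $2(N^++N^-)$ and invoking Lemma~\ref{lem:edges-triangles}, is the more economical choice here because it reuses machinery the paper has already established; the factor-of-four slack between the expectation $\tfrac{1}{4}(\alpha^++\alpha^-+\beta^++\beta^-)n\log n$ per edge count and the stated bound comfortably absorbs the $O(\sqrt{n}\log n)$ deviation terms, exactly as you say (the only caveat is that Lemma~\ref{lem:edges-triangles} assumes $n\ge\max\{\alpha^+,\beta^-\}$, which is harmless for an asymptotic $1-n^{-\Omega(1)}$ statement). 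The second route, a direct multiplicative Chernoff bound on $\sum_{i<j}\mathbf{1}\{A_{ij}\neq 0\}$, is essentially what the cited external lemma does in the unsigned case and yields the much stronger failure probability $\exp(-\Omega(n\log n))$; it buys independence from Lemma~\ref{lem:edges-triangles} at the cost of redoing a concentration estimate from scratch. Either version would be an acceptable replacement for the bare citation.
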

Finally, we prove Corollary \ref{coro-coro2}.
\begin{proof}
By Theorem \ref{thm-iter-comp}, the overall iteration complexity is $\mathcal{O}(\log n)$. By Lemma \ref{lemma-per-iter-comp}, the time complexity of each iteration is $\mathcal{O}(n\log n)$. Hence, the overall time complexity is $\mathcal{O}(n\log^2 n/\log\log n)$.
\end{proof}

\end{document}